\documentclass{article} 
\usepackage{iclr2026_conference,times}

\usepackage{amsmath,amsfonts,bm}

\def\eqref#1{equation~\ref{#1}}

\def\1{\bm{1}}

\DeclareMathAlphabet{\mathsfit}{\encodingdefault}{\sfdefault}{m}{sl}
\SetMathAlphabet{\mathsfit}{bold}{\encodingdefault}{\sfdefault}{bx}{n}

\newcommand{\E}{\mathbb{E}}

\newcommand{\R}{\mathbb{R}}

\newcommand{\Var}{\mathrm{Var}}

\newcommand{\Cov}{\mathrm{Cov}}

\usepackage[utf8]{inputenc} 
\usepackage[T1]{fontenc}    
\usepackage{hyperref}       
\usepackage{url}            
\usepackage{booktabs}       
\usepackage{amsfonts}       
\usepackage{nicefrac}       
\usepackage{microtype}      
\usepackage{xcolor}         
\usepackage[normalem]{ulem} 
\usepackage{wrapfig}
\usepackage{subcaption}

\usepackage{amsmath, amssymb, amsthm}
\usepackage{graphicx}
\usepackage{enumerate}
\usepackage{bbm, bm}
\usepackage{times, enumitem}
\usepackage{mathtools}
\usepackage{upgreek}
\usepackage{caption}
\usepackage{ulem}
\usepackage{multicol}

\usepackage[linesnumbered,ruled,vlined]{algorithm2e}
\SetKwInOut{Parameter}{parameter}
\SetKwInOut{Input}{input}
\DontPrintSemicolon

\usepackage{multicol}
\usepackage{multirow}
\usepackage{makecell}

\usepackage{subcaption}

\newtheorem{example}{Example}[section]
\newtheorem{theorem}{Theorem}[section] 
\newtheorem{lemma}[theorem]{Lemma} 
\newtheorem{corollary}[theorem]{Corollary}

\newtheorem{definition}{Definition}[section]
\newtheorem{remark}{Remark}

\newcommand{\indep}{\perp\!\!\!\!\perp}

\newcommand{\D}{\mathcal{D}}

\newcommand{\Pn}{\mathbb{P}_n}

\newcommand{\Pb}{\mathbb{P}}
\newcommand{\G}{\mathbb{G}}
\newcommand{\I}{\mathbbm{1}}
\newcommand{\var}{\text{var}}
\newcommand{\cov}{\text{cov}}

\makeatletter
\newcommand{\customlabel}[2]{%
   \protected@write \@auxout {}{\string \newlabel {#1}{{#2}{\thepage}{#2}{#1}{}} }%
   \hypertarget{#1}{#2}
}
\makeatother

\title{Topological Causal Effects}

\author{
\hspace*{-.033in}Kwangho Kim\thanks{Corresponding author. Both authors contributed equally to this work.}, \quad Hajin Lee \\
Department of Statistics, Korea University \\
145 Anam-ro, Seongbuk-gu, Seoul 02841, Korea \\
\texttt{\{kwanghk, hlee2745\}@korea.ac.kr}
}

\iclrfinalcopy
\begin{document}

\maketitle

\begin{abstract}
    Estimating causal effects is particularly challenging when outcomes arise in complex, non-Euclidean spaces, where conventional methods often fail to capture meaningful structural variation. We develop a framework for topological causal inference that defines treatment effects through differences in the topological structure of potential outcomes, summarized by power-weighted silhouette functions of persistence diagrams. We develop an efficient, doubly robust estimator in a fully nonparametric model, establish functional weak convergence, and construct a formal test of the null hypothesis of no topological effect. Empirical studies illustrate that the proposed method reliably quantifies topological treatment effects across diverse complex outcome types.
\end{abstract}

\section{Introduction}

Causal inference has become a central tool across many disciplines, providing a statistical framework for learning intervention effects beyond mere association. In the potential-outcome framework \citep{imbens2015causal}, causal effects are defined by contrasting counterfactual outcomes, i.e., what would have happened under alternative treatment assignments. As modern scientific outcomes become increasingly complex, however, standard causal estimands and methods that rely on Euclidean summaries can fail to detect meaningful intervention-induced changes in the underlying structure of the outcome. Despite this practical relevance, comparatively little methodological work targets causal inference for outcomes that are intrinsically non-Euclidean or high-dimensional, and unstructured.

In this work, we focus on settings where scientifically salient intervention effects are expressed through changes in the outcome’s topological structure, rather than through shifts in simple Euclidean summaries. Such effects arise naturally in the biomedical sciences, where treatments can change macromolecular conformations or folding patterns \citep{kovacev2016using,cang2018integration,axelrod2022geom}, in neuroscience, where stimuli can reshape brain connectivity \citep{sizemore2019importance}, and in signal processing and medical imaging, where the goal is to detect structural changes in dynamical systems or diagnostic images \citep{kim2018time,gholizadeh2018short}.

To formalize and estimate such effects, we introduce a class of causal estimands that quantify intervention-induced changes in the topological structure of complex outcomes using tools from topological data analysis (TDA). TDA extracts robust, multi-scale geometric and topological descriptors from complex data \citep{carlsson2020topological}. We build on persistent homology, which summarizes how topological features (for example, connected components, loops, and voids) appear and disappear as the resolution parameter varies \citep{chazal2021introduction}. While TDA has been used to improve predictive robustness under distribution shift and perturbations \citep[e.g.,][]{carriere2020perslay,carriere2021optimizing,kim2020pllay}, and has been incorporated as a regularizer in conventional average treatment effect pipelines \citep{farzam2025topology}, to our knowledge there is no existing framework that (i) defines a causal estimand directly in terms of topological summaries and (ii) provides corresponding nonparametric estimation and inference. Our work fills this gap by defining a topology-aware causal estimand and developing a corresponding efficient, doubly robust estimator with valid inference.

\textbf{Illustrative example.}
We illustrate the proposed methodology using a macromolecule dataset \citep{axelrod2022geom}, shown in Figure~\ref{fig:running_example} and revisited in Section~\ref{sec:experiments}. Each molecule is represented as a simplified graph, with graphs varying in size across samples. We consider a hypothetical chemical treatment that induces additional loop-like structure in the molecular geometry. Such changes can be difficult to detect using conventional Euclidean summaries, but they appear clearly in first-dimensional persistent homology, as evidenced by shifts in the corresponding persistence diagrams. We then convert these diagrams into functional summaries that are amenable to estimation and inference \citep{chazal2014stochastic,bubenik2015statistical}. This perspective connects our approach to recent work on causal inference with functional outcomes \citep{ecker2024causal,testa2025doubly}, a line of research that remains in the early stages of methodological development, particularly for fully nonparametric estimation and inference.

\begin{figure}[t!]
    \centering
    \begin{minipage}[t]{0.455\linewidth}
        \centering
        \includegraphics[width=\linewidth]{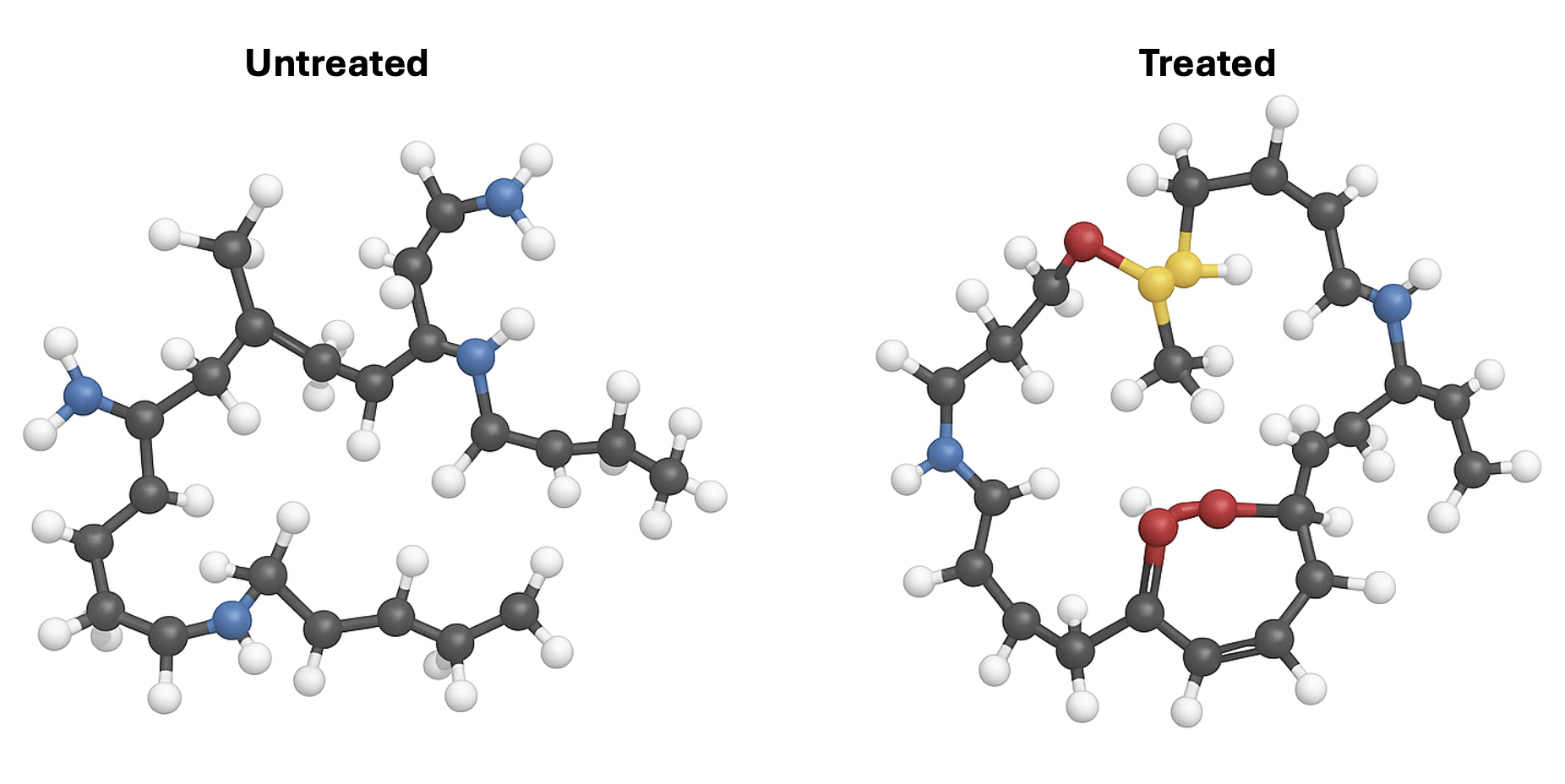}
    \end{minipage}\hfill
    \begin{minipage}[t]{0.525\linewidth}
        \centering
        \includegraphics[width=\linewidth]{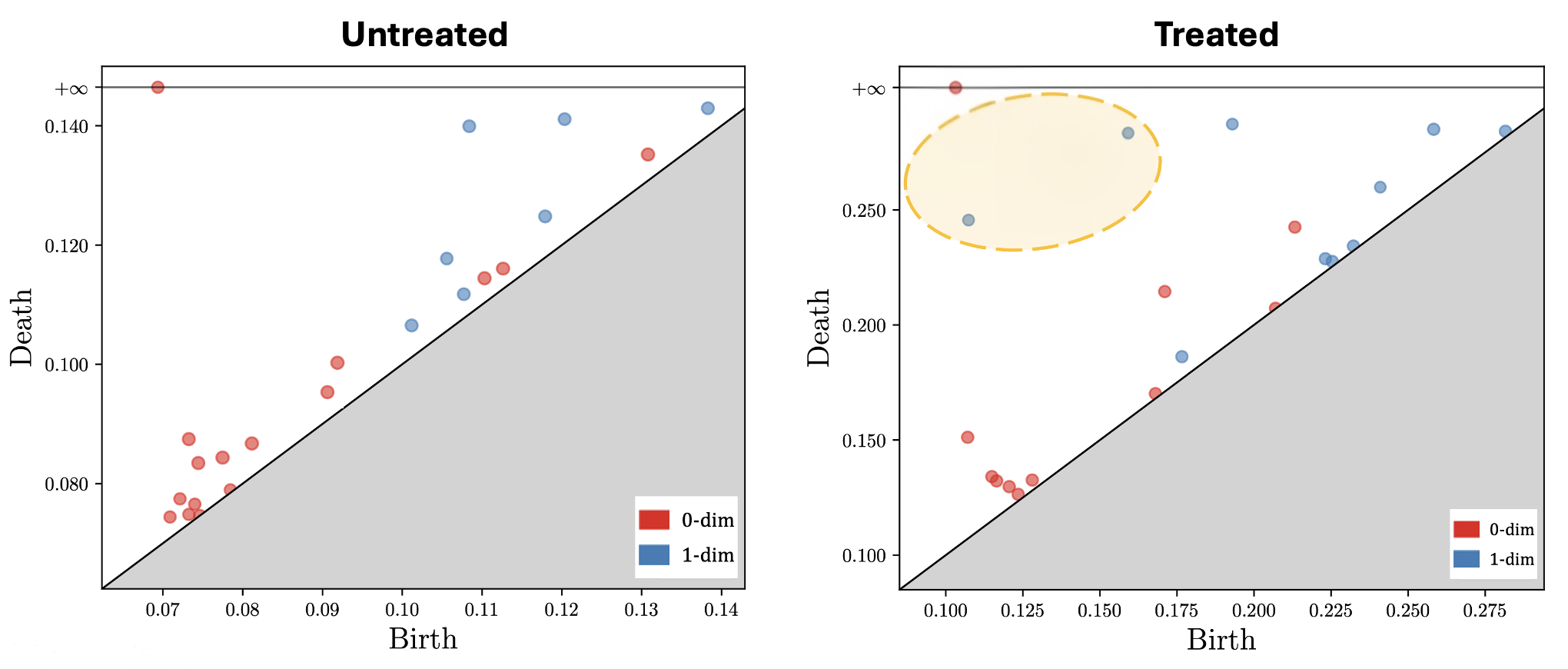}
    \end{minipage}
    \caption{Left: Example of untreated vs. treated macromolecule structures. 
Right: Corresponding persistence diagrams, highlighting treatment-induced changes in the 1st-order homology features.}
    \label{fig:running_example}
\end{figure}

\textbf{Contributions.}
We introduce a new class of \emph{topological causal effects}, defined through intervention-induced changes in persistent homology summaries and, in particular, the expected contrast of power-weighted silhouette functions under the potential outcomes. This target is designed to capture structural effects that are invisible to scalar or Euclidean summaries, while retaining the stability properties of topological descriptors under perturbations. We develop an efficient, doubly robust, fully nonparametric estimator for the resulting functional estimand, obtain fast $\sqrt{n}$ rates under a standard product-rate condition on nuisance errors, and establish weak convergence, enabling valid and tractable inference in fully nonparametric settings. We also derive new stability bounds for weighted silhouettes under Wasserstein perturbations of persistence diagrams and, by combining these bounds with our weak convergence result, construct a formal test of the null hypothesis of no topological effect with asymptotically valid size and consistency. Together, our approach broadens the scope of causal inference by enabling rigorous analysis of structural effects in complex systems and supporting causal investigations in diverse modern data modalities, as illustrated in our empirical studies.

\section{Preliminaries: Topological Tools for Machine Learning}\label{sec:preliminaries}
This section provides a brief overview of foundational concepts in topological data analysis and introduces notation used throughout the paper. For further details, see \citet{hatcher2002algebraic, edelsbrunner2010computational, ChazalSGO2016} as well as Appendix~\ref{app:background}.

\textbf{Simplicial complexes.}
To represent the shape of complex data from a finite sample, we approximate the underlying object by a simplicial complex, a higher-dimensional generalization of a graph.
Given a vertex set \(V\), an (abstract) simplicial complex is a collection \(K\) of finite subsets of \(V\) such that if \(\sigma\in K\) and \(\tau\subseteq \sigma\), then \(\tau\in K\).
Each \(\sigma\in K\) is called a simplex, its dimension is \(\dim(\sigma)=|\sigma|-1\), and the dimension of \(K\) is the maximum simplex dimension among its simplices.
In particular, a \(1\)-dimensional simplicial complex corresponds to a graph (possibly with isolated vertices).
In applications, simplicial complexes are typically constructed from observed point clouds using standard rules such as the Vietoris--Rips complex or the \v{C}ech complex.

\textbf{Persistent homology and diagrams.}
A filtration is a nested family of simplicial complexes \(\mathcal{F}=\{K(t):t\in\mathbb{R}\}\) such that
\(t\le t'\) implies \(K(t)\subseteq K(t')\).
Filtrations are often generated by a monotone filtration function \(f:K\to\mathbb{R}\) satisfying
\(f(\tau)\le f(\sigma)\) whenever \(\tau\subseteq \sigma\), by setting \(K(t)=f^{-1}((-\infty,t])\).
Persistent homology summarizes how topological features emerge and disappear as the filtration parameter \(t\) increases,
separately for each homology degree $d=0,1,2,\ldots$ (for example, \(d=0\) connected components, \(d=1\) loops, \(d=2\) voids).
We write \(\mathcal{H}_d\{K(t)\}\) for the \(d\)-th homology group at level \(t\), and represent each feature by its birth time \(a\)
and death time \(b>a\); the resulting birth--death pairs \((a,b)\) form the persistence diagram.
Formally, the \(d\)-th persistence diagram induced by \(\mathcal{F}\) is denoted \(\mathcal{D}_d(\mathcal{F})\) and is a multiset of points in
\[
\mathbb{R}^{2+}\coloneqq \{(a,b)\in (\mathbb{R}\cup\{\infty\})^2: a<b\}.
\]
When the filtration is clear from context, we write \(\mathcal{D}_d\) for \(\mathcal{D}_d(\mathcal{F})\), and we use \(\mathcal{D}\) to denote a generic persistence diagram.

\textbf{Weighted silhouettes.}
Because \(\mathcal{D}\) is a multiset, it is often convenient to embed it into a function space.
Given \(\mathcal{D}\), define for each \(p=(a_p,b_p)\in\mathcal{D}\) the tent function \(\Lambda_p:\mathbb{R}\to\mathbb{R}\),
\begin{equation}\label{def:tent}
\Lambda_p(t)=\max\{0,\min\{t-a_p,b_p-t\}\},\qquad t\in\mathbb{T},
\end{equation}
where \(\mathbb{T}\subset\mathbb{R}\) is a fixed compact interval.
The weighted silhouette is the normalized weighted average of these tents,
\[
\phi(t;\mathcal{D})=\frac{\sum_{p\in\D}\, w_p\,\Lambda_{p}(t)}{\sum_{p\in\D} w_p},\qquad t\in\mathbb{T},
\]
where $w_p \geq0$ is the weight corresponding to point $p$, and \(\sum_{p\in\D} w_p>0\).
A common choice is the power weight \(w_p=(b_p-a_p)^r\) with $r > 0$, yielding the power-weighted silhouette
\begin{equation}\label{def:silhouette}
\phi(t;\mathcal{D},r)=\frac{\sum_{p\in\D} (b_p-a_p)^r\,\Lambda_{p}(t)}{\sum_{p\in\D} (b_p-a_p)^r},\qquad t\in\mathbb{T}.
\end{equation}
Larger \(r\) emphasizes longer-lived features, while smaller \(r\) assigns relatively more weight to short-lived ones.
Since \(r\) only controls which features are emphasized and does not affect identification, estimation, or inference, we suppress \(r\) in the notation when it is fixed.

The following lemma records a basic regularity property of silhouettes that will be used in Section~\ref{sec:inference}.
\begin{lemma}[Lipschitz stability of the weighted silhouette]\label{lem:Lipschitz-Silhouette}
For any \(\delta>0\),
\[
\sup_{|s-t|\le\delta}\bigl|\phi(s;\mathcal{D})-\phi(t;\mathcal{D})\bigr|\le \delta,
\]
and hence
\[
\E\bigg[\sup_{|s-t|\le\delta}\bigl|\phi(s;\mathcal{D})-\phi(t;\mathcal{D})\bigr|\bigg]\le \delta.
\]
\end{lemma}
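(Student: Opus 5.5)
The argument is deterministic and rests on the fact that each tent function is 1-Lipschitz; the silhouette, being a convex combination of tents, inherits this property. The expectation bound then follows by monotonicity of the expectation.

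\textbf{Tents are 1-Lipschitz.} Fix \(p=(a_p,b_p)\in\mathcal{D}\). The maps \(t\mapsto t-a_p\) and \(t\mapsto b_p-t\) are each 1-Lipschitz, and pointwise minima and maxima of 1-Lipschitz functions remain 1-Lipschitz. Since the constant \(0\) is also 1-Lipschitz, \(\Lambda_p(t)=\max\{0,\min\{t-a_p,b_p-t\}\}\) in \eqref{def:tent} satisfies
\[
|\Lambda_p(s)-\Lambda_p(t)|\le |s-t| \quad \text{for all } s,t\in\mathbb{T}.
\]
If \(b_p=\infty\), the min reduces to \(t-a_p\) and the claim still holds. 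In that case the weight must be finite, for example by truncating death times to \(\mathbb{T}\) as is standard, so that the silhouette is well defined.

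\textbf{Convex combination.} Write \(\pi_p=w_p/\sum_{q\in\mathcal{D}}w_q\). Then \(\pi_p\ge 0\), \(\sum_p\pi_p=1\), and \(\phi(t;\mathcal{D})=\sum_p\pi_p\Lambda_p(t)\). By the triangle inequality,
\[
|\phi(s;\mathcal{D})-\phi(t;\mathcal{D})|\le\sum_p\pi_p|\Lambda_p(s)-\Lambda_p(t)|\le|s-t|.
\]
Taking the supremum over \(|s-t|\le\delta\) gives the first inequality. The bound holds for every diagram \(\mathcal{D}\) with positive total weight, and this covers the power weight \(w_p=(b_p-a_p)^r\).

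\textbf{Expectation bound.} The supremum is bounded by \(\delta\) pointwise, for every realization of the random diagram \(\mathcal{D}\). Taking expectations and using monotonicity gives the second claim. For measurability of the supremum, note that \(\phi\) is continuous in \(t\), so the supremum can be taken over rational \(s,t\); the variable is then a countable supremum of measurable functions. Alternatively, one can interpret the expectation as an outer expectation.

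The only potential obstacle is bookkeeping: handling infinite death times, and making sure the normalization depends on \(\mathcal{D}\) but not on \(t\), so that it factors out uniformly. Both are straightforward.
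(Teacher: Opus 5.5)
Your proposal is correct and follows essentially the same route as the paper: each tent $\Lambda_p$ is $1$-Lipschitz, the normalized weights make $\phi(\cdot;\mathcal{D})$ a convex combination of tents and hence $1$-Lipschitz, and the pathwise bound is then integrated to give the expectation bound. The paper shows the tent is $1$-Lipschitz by noting its slopes are $\pm 1$ rather than through your min/max closure argument, and it does not discuss measurability or infinite death times; these are minor presentational differences.
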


\textbf{Choice of filtration.}
Computing persistent-homology descriptors requires specifying a filtration \(\mathcal{F}=\{K(t):t\in\mathbb{R}\}\),
typically obtained by choosing a simplicial (or cubical) complex \(K\) together with a monotone filtration function \(f:K\to\mathbb{R}\)
and setting \(K(t)=f^{-1}((-\infty,t])\).
The appropriate construction depends on the data modality and the scientific notion of scale one wishes to probe.
For point-cloud data, Vietoris--Rips and \(\alpha\)-filtrations are common choices; for grid-structured data such as digital images,
sublevel or superlevel filtrations on cubical complexes are natural.
For graph data, one can use clique filtrations, or the persistent homology transform (PHT) \citep{turner2014persistent}.
These standard options allow our framework to accommodate a broad range of complex outcome types by pairing each modality with an appropriate filtration. Concrete definitions of several standard constructions (Vietoris--Rips, \(\alpha\)-, and cubical filtrations)
are collected in Appendix~\ref{app:background}.

\section{Framework} \label{sec:framework}

Let $\{Z_i=(X_i,A_i,Y_i)\}_{i=1}^n$ denote an i.i.d.\ observed sample.
Here, $A_i\in\mathcal{A}=\{0,1\}$ is a binary treatment (intervention) indicator and $X_i\in\mathcal{X}\subseteq\mathbb{R}^l$ is an $l$-dimensional covariate vector.
Let $\mathcal{F}_i$ denote the filtration of simplicial complexes constructed from $Y_i$, and let $\mathcal{F}_i^{a}$ denote the corresponding \emph{potential (counterfactual) filtration} that would be constructed from the potential outcome $Y_i^{a}$ under treatment $A_i=a$.
For each homology degree $d$, define the observed and potential persistence diagrams by $\mathcal{D}_{i,d}\coloneqq \mathcal{D}_d(\mathcal{F}_i)$ and $\mathcal{D}_{i,d}^{a}\coloneqq \mathcal{D}_d(\mathcal{F}_i^{a})$.
Also, let $\phi_{i,d}(t)\coloneqq \phi\{t;\mathcal{D}_{i,d}\}$ and $\phi^{a}_{i,d}(t)\coloneqq \phi\{t;\mathcal{D}_{i,d}^{a}\}$, for $t\in\mathbb{T}$, denote the corresponding power-weighted silhouettes in \eqref{def:silhouette}, where $\mathbb{T}$ is the domain of definition of $\phi$.

Our target parameter of interest is the \emph{topological average treatment effect} (TATE).
For each fixed $d$, define the functional causal effect
\begin{align}\label{def:target-effects}
\psi_d(t)
\coloneqq
\E\!\left\{\phi^{1}_{i,d}(t)-\phi^{0}_{i,d}(t)\right\},
\qquad t\in\mathbb{T}.
\end{align}
Equivalently, for a chosen maximal homology degree $d_{\max}$, we may stack these components into the $\mathbb{R}^{d_{\max}+1}$-valued curve $\psi(t)=(\psi_0(t),\ldots,\psi_{d_{\max}}(t))$, $\forall t\in\mathbb{T}$, so that $\psi_d(t)$ is the $d$th coordinate of $\psi(t)$. Each function $\psi_d(t)$ captures the treatment-induced change in $d$-dimensional topological features at filtration scale $t$:
$\psi_d(t)>0$ indicates that, on average, treated units exhibit more or stronger $d$-dimensional features at scale $t$, while $\psi_d(t)<0$ indicates attenuation under treatment.
{Thus, the mapping $t\mapsto \psi_d(t)$ is a functional causal effect in a Hilbert space, tracing how treatment alters topology across filtration scales.}

There are several compelling advantages to interpreting \eqref{def:target-effects} as treatment-induced topological effects. 
First, \(\psi_d(t)\) is inherently scale-aware: the silhouette is indexed by the filtration parameter $t$, thereby retaining information about topology across geometric scales, thus allowing \(\psi_d(t)\) to localize treatment-induced changes in topology across geometric scales. 
Second, it is robust to noise: power-weighted silhouettes downweight short-lived, potentially spurious features, so \(\psi_d(t)\) emphasizes persistent structural differences that are more likely to be meaningful. 
Third, unlike raw persistence diagrams, \(\psi_d(t)\) is vectorizable, in the sense that it lives in a separable Hilbert space, which facilitates theoretical analysis and integration into gradient-based machine learning pipelines. 
Fourth, \(\psi_d(t)\) naturally fits within recent advances in functional causal inference. 
Collectively, \(\psi_d(t)\) can be viewed as a topology-aware analogue of the average treatment effect, capturing how an intervention reshapes the \(d\)-dimensional homological structure of the underlying outcome across treatment groups. 

One limitation is that silhouettes can obscure the exact number of changing homological features, since they aggregate multiple tent functions into a single weighted summary. In principle, however, the TATE can also be defined using individual persistence landscape functions across homology degrees up to a chosen level, which may provide finer topological resolution. Nevertheless, representing each diagram by a single silhouette curve offers a simple, interpretable summary of overall topological variation within a given homology degree, and the power parameter $r$ gives a transparent way to emphasize the most persistent features when desired.

Although one could apply standard average treatment effect estimators after vectorizing each persistence diagram into an ad hoc Euclidean feature vector, such heuristics generally lack a principled link to the underlying topological object. Our approach is fundamentally different: we define the causal estimand directly in terms of persistence-diagram geometry and use the silhouette as a functional embedding to obtain a well-posed target in a Hilbert space. This yields a coherent notion of a topological causal effect, along with doubly robust estimation and valid inference that are explicitly aligned with this target.

\begin{figure}[t]
    \centering
    \begin{subfigure}{.345\linewidth}
      \centering
      \includegraphics[width=\linewidth]{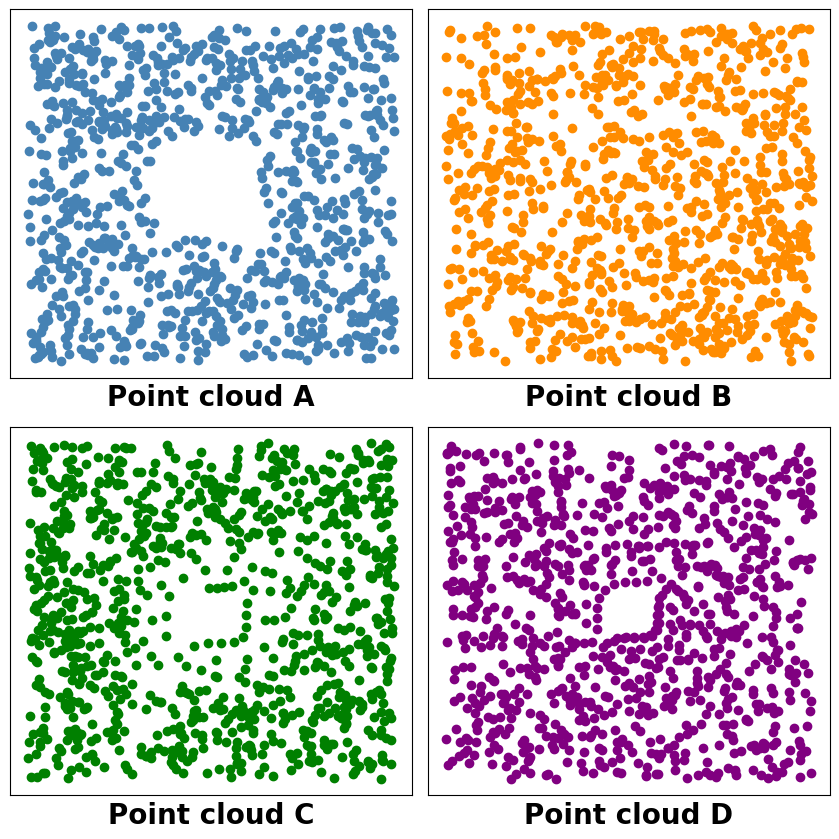}
      \caption{} \label{fig:illust_a}
    \end{subfigure}
    \hfill
    \begin{subfigure}{.305\linewidth}
      \centering
      \includegraphics[width=\linewidth]{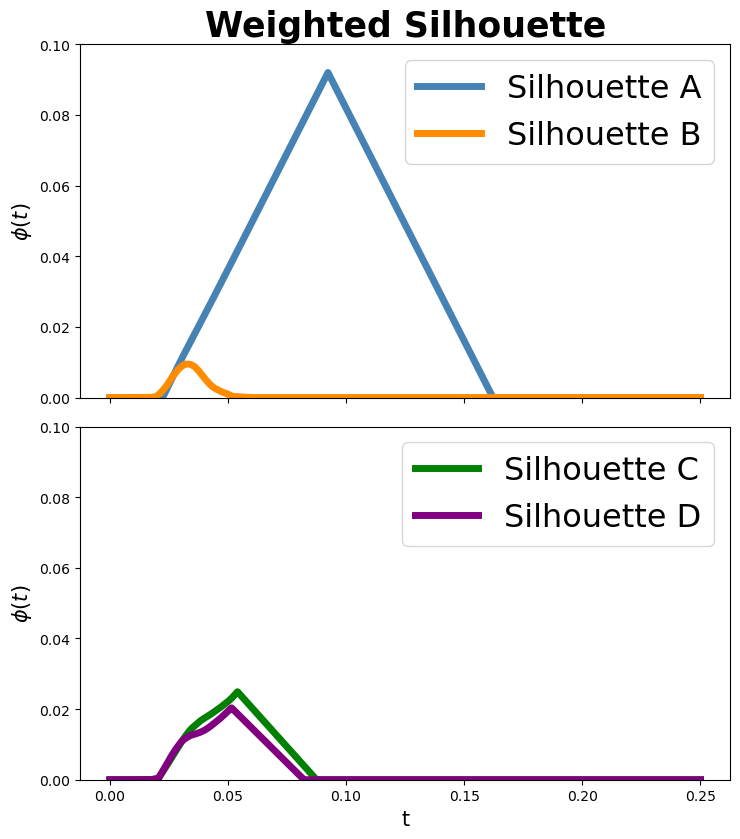}
      \caption{} \label{fig:illust_b}
    \end{subfigure}
    \hfill
    \begin{subfigure}{.305\linewidth}
      \centering
      \includegraphics[width=\linewidth]{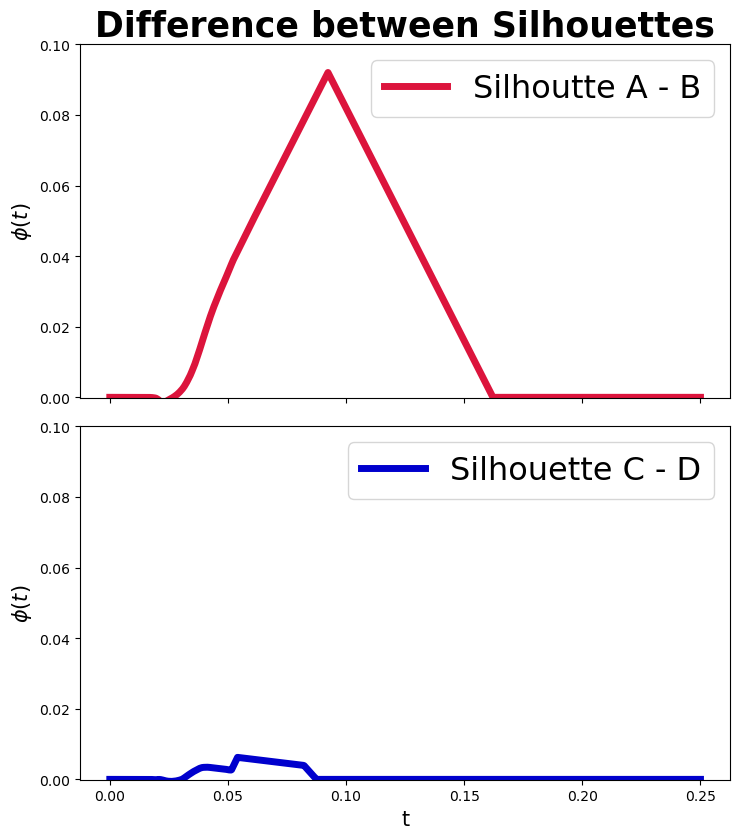}
      \caption{} \label{fig:illust_c}
    \end{subfigure}
    \caption{Silhouette functions revealing differences in $1$-dimensional features in the ORBIT dataset \citep{adams2017persistence}. 
(a) Point clouds $A,B,C,D$; (b) their power-weighted silhouettes; 
(c) silhouette contrasts $\phi_A-\phi_B$ (top) and $\phi_C-\phi_D$ (bottom). 
Strong signals in $\phi_A-\phi_B$ indicate new $1$-dimensional features in $A$ relative to $B$, whereas near-zero $\phi_C-\phi_D$ shows little change between $C$ and $D$.}
    \label{fig:illust}
\end{figure}

Figure~\ref{fig:illust} illustrates how the target parameter \eqref{def:target-effects} captures structural variation. The point-cloud pairs \((A,B)\) and \((C,D)\) in Figure~\ref{fig:illust_a} represent potential outcomes under treatment and control. The treatment induces a loop in \(A\) relative to \(B\), producing a pronounced silhouette contrast in Figure~\ref{fig:illust_c}, whereas the contrast for \((C,D)\) is close to zero, indicating negligible structural change. The sign of the silhouette difference curve encodes the direction of topological change: positive values correspond to features that emerge under treatment, while negative values correspond to features that disappear relative to control.

For identification, we invoke the standard causal assumptions: (C1) \textit{Consistency}, $\mathcal{F}_i=\mathcal{F}_i^{a}$ whenever $A_i=a$; (C2) \textit{No unmeasured confounding}, $A_i \indep \phi^{a}_{i,d}(t)\mid X_i$; and (C3) \textit{Positivity}, $\Pb(A_i=a\mid X_i)>0$ almost surely, for all $t\in\mathbb{T}$, $a\in\mathcal{A}$, and homology degree $d$. Despite the increased complexity of our target parameter, the identification conditions closely parallel those in standard causal inference settings \citep[e.g.,][]{imbens2015causal}.
We assume that Assumptions (C1)–(C3) hold throughout the paper. Note that component-wise identification is sufficient for identifying the functional effect $\psi_d(t)$ at each filtration scale $t$. Then, for each $d$, the target $\psi_{d}(t)$ is identified as
\begin{align}
    \psi_{d}(t)
    &= \E\!\left[ \E\!\left\{ \phi_{i,d}(t)\mid X_i, A_i=1\right\} - \E\!\left\{\phi_{i,d}(t)\mid X_i, A_i=0\right\} \right] \label{eq:identified-target-PI} \\
    &= \E\!\left\{ \frac{A_i\,\phi_{i,d}(t)}{\pi(X_i)} - \frac{(1-A_i)\,\phi_{i,d}(t)}{1-\pi(X_i)} \right\}. \label{eq:identified-target-IPW}
\end{align}
The identifying expressions above motivate the use of plug-in and inverse probability weighting estimators, which will be discussed in greater detail in the following section.

{
\begin{remark}[Choice of $r$]\label{rmk:choice-of-r}
The TATE $\psi_d(t)$ depends on the power parameter $r$, which controls how strongly the silhouette emphasizes long-lived versus short-lived persistence pairs.
We view $r$ as a problem-specific tuning parameter: practitioners may select it using domain knowledge (e.g., larger $r$ to upweight more persistent features), or by a simple data-driven criterion, such as checking the stability of $\widehat\psi_d$ over candidate values of $r$ (for example, via a validation step within a treatment arm or via cross-validation).
In our experiments, the qualitative shape and statistical significance of the estimated effects remained stable across a modest range of $r$ values, suggesting limited sensitivity to its precise choice.
\end{remark}
}

\section{Estimation} \label{sec:estimation}
In this section, we develop an estimation strategy for $\psi_{d}$ defined in \eqref{def:target-effects}. Hereafter, when no ambiguity arises, we omit the subscript indexing the $i$-th observation and write generic random variables in place of their sample realizations. For convenience, define the propensity score and the conditional silhouette regression functions by
\begin{align*}
    \pi(x) \coloneqq \Pb(A=1\mid X=x),
    \qquad
    \mu_a(t,x;d) \coloneqq \E\{\phi_{d}(t)\mid X=x, A=a\}.
\end{align*}
Let $\hat\pi$ and $\hat\mu_a$ denote estimators of $\pi$ and $\mu_a$, respectively.
We write $\widehat{\Pb}$ for the empirical distribution used to fit the nuisance estimators $(\hat\pi,\hat\mu_0,\hat\mu_1)$, and we assume access to a separate empirical distribution $\Pn$, independent of $\widehat{\Pb}$, used to construct the final estimator. 


Motivated by the identifying expressions in \eqref{eq:identified-target-PI} and \eqref{eq:identified-target-IPW}, we propose the plug-in (PI) regression estimator and the inverse-probability-weighting (IPW) estimator as
\begin{align}
    \widehat{\psi}_{\mathrm{PI},d}(t)
    &= \Pn\bigl\{\hat\mu_1(t,X;d)-\hat\mu_0(t,X;d)\bigr\},\label{eq:pi}\\
    \widehat{\psi}_{\mathrm{IPW},d}(t)
    &= \Pn\left\{\frac{A\,\phi_d(t)}{\hat\pi(X)}-\frac{(1-A)\,\phi_d(t)}{1-\hat\pi(X)}\right\}.\label{eq:fipw}
\end{align}
The estimators $\widehat{\psi}_{\mathrm{PI},d}$ and $\widehat{\psi}_{\mathrm{IPW},d}$ inherit their convergence rates from those of $\hat\mu_a$ and $\hat\pi$, respectively.
In many observational studies, the IPW estimators can be appealing because modeling the treatment assignment mechanism is often more feasible than modeling the full conditional mean of a functional outcome \citep{imbens2015causal}.
In our setting, $\widehat{\psi}_{\mathrm{IPW},d}$ can be particularly convenient because estimating the functional regression $\mu_a(\cdot,\cdot;d)$ may be substantially more challenging than estimating the scalar propensity score $\pi$.

A more efficient estimator can be constructed using semiparametric efficiency theory.
Define the uncentered efficient influence function (EIF) by
\begin{align}\label{eq:eif}
\varphi_d(t,Z;\eta)
&\coloneqq
\mu_1(t,X;d)-\mu_0(t,X;d)
+\left\{\frac{A}{\pi(X)}-\frac{1-A}{1-\pi(X)}\right\}
\left\{\phi_d(t)-\mu_A(t,X;d)\right\},
\end{align}
where $\eta=\{\pi,\mu_0,\mu_1\}$ collects the nuisance functions.
By construction, $\E\{\varphi_d(t,Z;\eta)\}=\psi_d(t)$ for each $t\in\mathbb{T}$.
The influence-function representation yields an augmented IPW estimator,
\begin{align}\label{eq:dr-estimator}
    \widehat{\psi}_{\mathrm{AIPW},d}(t)
    \coloneqq
    \Pn\{\varphi_d(t,Z;\widehat\eta)\}
    \equiv
    \Pn\{\widehat\varphi_d(t)\},
\end{align}
where $\widehat\eta=\eta(\widehat{\Pb})=\{\hat\pi,\hat\mu_0,\hat\mu_1\}$ is estimated using the nuisance-training sample $\widehat{\Pb}$.
This construction yields the usual second-order remainder structure and the familiar double-robust behavior, allowing $\widehat{\psi}_{\mathrm{AIPW},d}$ to attain fast rates under weak nonparametric conditions on the nuisance estimators \citep{kennedy2016semiparametric,kennedy2024semiparametric}.

There are two standard approaches for establishing asymptotic properties of augmented inverse-probability-weighted estimators. One approach is to impose empirical-process conditions (e.g., Donsker or suitable entropy conditions) on the indexed class of influence functions \(\{\varphi_d(t,\cdot;\eta): t\in\mathbb{T}\}\), along with corresponding regularity of the nuisance estimators, but such conditions can be overly restrictive for modern, highly adaptive nuisance learning.
An alternative is sample splitting, which allows arbitrarily complex nuisance estimators by separating nuisance fitting from the final debiasing step, thereby mitigating overfitting concerns.
In principle, we remain agnostic about how the nuisance functions are learned. For clarity, however, we adopt a sample-splitting setup throughout: the nuisance estimators are fitted on \(\widehat{\Pb}\), and the final estimator is constructed on an independent sample \(\Pn\). (See Assumption \ref{assumption:A2}.)

\begin{remark}[Sample splitting]\label{rmk:sample-splitting}
Independent samples $\Pn$ and $\widehat{\Pb}$ can be obtained via random data splitting.
Full-sample efficiency can be recovered via cross-fitting \citep[e.g.,][]{Chernozhukov18,newey2018cross}.
For clarity of exposition, we work with a single split, though extending to multiple splits is straightforward \citep[e.g.,][]{kennedy2019nonparametric,kennedy2020optimal, kennedy2023semiparametric}. 
\end{remark}

\section{Inference} \label{sec:inference}

Conditions guaranteeing desirable large-sample properties, such as $\sqrt{n}$-rate convergence and asymptotic normality, are well understood for average treatment effect estimation with scalar outcomes \citep[e.g.,][]{kennedy2016semiparametric, kennedy2024semiparametric}. In our setting, the outcome is a curve indexed by the filtration scale, so valid inference requires additional structure to control the behavior of the resulting stochastic process.

We begin with a simplified setting in which the treatment assignment mechanism is fully known, as in randomized experiments. In particular, we assume that the true propensity score $\pi$ is known almost surely. In this case, the IPW estimator in \eqref{eq:fipw} is unbiased. 

To obtain weak convergence of the entire estimated effect curve in $\ell^\infty(\mathbb{T})$, however, it is not enough to assume only pointwise second moments; one also needs a functional central limit theorem for the indexed empirical process $\{\xi_d(\cdot,t):t\in\mathbb{T}\}$. This can be ensured, for example, by assuming that the indexed class is $\Pb$-Donsker (or by standard entropy-and-envelope conditions with suitable measurability). In our setting, these conditions are plausible because the silhouette $\phi_d(t)$ is Lipschitz in the index $t$ (see Remark \ref{rmk:donsker-plausible} in Appendix \ref{app:proof-thm-ipw-weak-convergence}). The following theorem shows that, when $\pi$ is known,  $\widehat{\psi}_{\mathrm{IPW},d}$ converges weakly to a Gaussian process in $\ell^\infty(\mathbb{T})$ under mild conditions.

\begin{theorem}
\label{thm:ipw-weak-convergence}
For $t\in\mathbb{T}$, define
$
\xi_d(Z,t)
\coloneqq
\left\{\frac{A}{\pi(X)}-\frac{1-A}{1-\pi(X)}\right\}\phi_d(t),
$
where $\pi$ is known. Assume that $\sup_{t\in\mathbb{T}}\var(\xi_d(Z,t))<\infty$ and that the class
$\{\xi_d(\cdot,t):t\in\mathbb{T}\}$ is $\Pb$-Donsker (or satisfies any other standard sufficient condition for a functional central limit theorem in $\ell^\infty(\mathbb{T})$). Then
\[
\sqrt{n}\bigl\{\widehat{\psi}_{\mathrm{IPW},d}(t)-\psi_d(t)\bigr\}
\rightsquigarrow
\mathbb{G}_d(t)
\quad\text{in}\quad \ell^\infty(\mathbb{T}).
\]
where $\mathbb{G}_d$ is a mean-zero Gaussian process with covariance function
$
\cov\{\mathbb{G}_d(s),\mathbb{G}_d(t)\}
=
\cov\bigl(\xi_d(Z,s),\xi_d(Z,t)\bigr),
$ $\forall s,t\in\mathbb{T}$.
\end{theorem}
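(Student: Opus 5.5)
The plan is to show that, with $\pi$ known, the centered and scaled IPW estimator is exactly an empirical process indexed by $\{\xi_d(\cdot,t):t\in\mathbb{T}\}$, and then invoke the Donsker property directly.

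\textbf{Step 1: exact linear representation.} Since $\pi$ is known, $\hat\pi=\pi$ in \eqref{eq:fipw}. Hence
\[
\widehat{\psi}_{\mathrm{IPW},d}(t)=\Pn\{\xi_d(Z,t)\}
\]
for every $t\in\mathbb{T}$, and the nuisance sample $\widehat{\Pb}$ plays no role.

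\textbf{Step 2: unbiasedness at each scale.} Using the identification formula \eqref{eq:identified-target-IPW}, which relies on (C1)--(C3), we have $\E\{\xi_d(Z,t)\}=\psi_d(t)$ for every $t\in\mathbb{T}$. Therefore
\[
\sqrt{n}\bigl\{\widehat{\psi}_{\mathrm{IPW},d}(t)-\psi_d(t)\bigr\}=\sqrt{n}(\Pn-\Pb)\xi_d(\cdot,t)\eqqcolon \G_n\xi_d(\cdot,t).
\]
This holds identically in $t$, with no remainder term. To make the sample paths elements of $\ell^\infty(\mathbb{T})$, I note the following. By Lemma~\ref{lem:Lipschitz-Silhouette} and positivity, each map $t\mapsto\xi_d(Z,t)$ is bounded and Lipschitz on the compact interval $\mathbb{T}$. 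The Lipschitz constant is at most $\max\{1/\pi(X),1/(1-\pi(X))\}$. Thus $\sup_t|\xi_d(Z,t)-\psi_d(t)|<\infty$ almost surely, and the centered class has finite pointwise envelope. The assumption $\sup_t\var(\xi_d(Z,t))<\infty$ guarantees that the limiting covariance kernel is well defined and finite.

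\textbf{Step 3: invoke the functional CLT.} By definition of $\Pb$-Donsker (e.g., van der Vaart and Wellner, Section 2.1), $\G_n$ converges weakly in $\ell^\infty(\mathcal{G})$, where $\mathcal{G}=\{\xi_d(\cdot,t)\}$, to a tight $\Pb$-Brownian bridge. Composing with the index map $t\mapsto\xi_d(\cdot,t)$ transfers this convergence to $\ell^\infty(\mathbb{T})$, yielding a limit $\G_d$. Finite-dimensional marginals of $\G_d$ follow from the multivariate CLT. They are mean-zero Gaussian with $\cov\{\G_d(s),\G_d(t)\}=\cov(\xi_d(Z,s),\xi_d(Z,t))$, which identifies the limit.

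\textbf{Main obstacle.} The only step with content beyond bookkeeping is the Donsker condition, which the theorem assumes. If one wished to verify it instead, I would use the Lipschitz-in-parameter argument. The class $\{\xi_d(\cdot,t)\}$ satisfies $|\xi_d(z,s)-\xi_d(z,t)|\le F(z)|s-t|$ with $F(z)=\max\{1/\pi(x),1/(1-\pi(x))\}$. Note that $F$ does not depend on $t$, since the silhouette's Lipschitz constant in $t$ is at most $1$ by Lemma~\ref{lem:Lipschitz-Silhouette}. The standard bracketing bound for Lipschitz-in-parameter classes then gives $N_{[\,]}(2\epsilon\|F\|_{L_2},\mathcal{G},L_2)\lesssim |\mathbb{T}|/\epsilon$. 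This yields a finite bracketing integral whenever $\E F^2<\infty$, for instance under strong positivity. The theorem then follows from the bracketing Donsker theorem.
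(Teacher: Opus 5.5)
Your proposal is correct and follows the same route as the paper. With $\pi$ known, you write $\sqrt{n}\{\widehat{\psi}_{\mathrm{IPW},d}(t)-\psi_d(t)\}=\mathbb{G}_n\xi_d(\cdot,t)$ exactly, invoke the assumed Donsker property to get weak convergence in $\ell^\infty(\mathbb{T})$, and identify the covariance kernel from the finite-dimensional limits. Your optional verification of the Donsker condition by Lipschitz-in-$t$ bracketing mirrors the paper's Lemma~\ref{lem:bracketing-xi}, with the $t$-free constant $\max\{1/\pi(X),1/(1-\pi(X))\}$ in place of the paper's $2/\kappa$.
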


In observational settings, it is often preferable to use the AIPW estimator defined in \eqref{eq:dr-estimator}. Hereafter, we use the notation $\Vert \cdot \Vert_{\Pb,q}$ to denote the $L_{q}(\mathbb{P})$-norm. To analyze the asymptotic properties, we introduce the following set of assumptions.

\begin{itemize}[leftmargin=*, label={}]
	\item \customlabel{assumption:A1}{(A1)} $\left\Vert 1/\widehat{\pi}\right\Vert _{\infty} < \infty$ and $\left\Vert 1/(1-\widehat{\pi})\right\Vert _{\infty} < \infty$. 
	
	\item \customlabel{assumption:A2}{(A2)} \textit{Sample splitting}: Nuisance estimates $\widehat\eta$ are fit on \(\widehat{\Pb}\), and the final estimator is computed on an independent sample \(\Pn\) \((\Pn \indep \widehat{\Pb})\).

        \item \customlabel{assumption:A3}{(A3)} \textit{Cross-sectional consistency}: $\left\Vert \widehat{\varphi}(t) - \varphi(t) \right\Vert_{\Pb,2} = o_\Pb(1)$, $\forall t \in \mathbb{T}$.

        \item \customlabel{assumption:A4}{(A4)} \textit{Rate condition on nuisance estimation}:$
    \left\Vert \hat{\pi}(X) - \pi(X) \right\Vert_{\Pb,2} \left\{\sum_{a\in\mathcal{A}} \left\Vert \hat{\mu}_a(t, X; d) - \mu_a(t, X; d) \right\Vert_{\Pb,2} \right\} = o_\Pb(n^{-1/2})
    $, $\forall t \in \mathbb{T}$.

    \item \customlabel{assumption:A5}{(A5)} \textit{Uniform consistency}: For all $x \in \mathcal{X}$, $\mu_a(t,x;d)$ is uniformly Lipschitz in $t$ on $\mathbb{T}$, and following conditions hold:
        \[
        \sup_{t\in\mathcal T,x\in\mathcal X}
        \bigl|\widehat\mu_a(t,x;d)-\mu_a(t,x;d)\bigr|
        =o_\Pb(1),\quad
        \sup_{x\in\mathcal X}\bigl|\widehat\pi(x)-\pi(x)\bigr|=o_\Pb(1).
        \]
    \item \customlabel{assumption:A5'}{(A5$^\prime$)}  \textit{Lipschitz modulus for \(\hat\mu_a\)}: For $\delta >0$ and $a \in \mathcal{A}$, $\hat{\mu}_a(\cdot, X; d)$ satisfies $\E \left[\sup_{|s-t|\le\delta}\left\vert \hat{\mu}_a(s, X; d)-\hat{\mu}_a(t, X; d)\right\vert \right]
     \;\le\; L_{\hat\mu}\delta$, for some positive constant $L_{\hat\mu}$.
\end{itemize}

Assumptions \ref{assumption:A1}--\ref{assumption:A4} are standard in semiparametric causal inference \citep[e.g.,][]{kennedy2019nonparametric,kennedy2023semiparametric,kennedy2024semiparametric}.  \ref{assumption:A1} is a mild boundedness condition, while \ref{assumption:A2} separates nuisance fitting from the final empirical average and thereby avoids empirical-process restrictions on the nuisance learners. \ref{assumption:A3} enforces $L_2(\Pb)$ consistency of the estimated influence function at each $t$, and \ref{assumption:A4} ensures that the second-order von~Mises remainder is $o_\Pb(n^{-1/2})$.

For the functional nuisance $\hat\mu_a(\cdot,\cdot;d)$, we consider two alternative regularity conditions. Assumption \ref{assumption:A5'}, adopted in \citet{testa2025doubly}, imposes a Lipschitz modulus directly on the random sample paths of $\hat\mu_a$ by requiring an $L_1$ bound on $\sup_{|s-t|\le\delta}\bigl|\hat\mu_a(s,X;d)-\hat\mu_a(t,X;d)\bigr|$. In contrast, Assumption \ref{assumption:A5} requires only that the target regression $\mu_a(\cdot,\cdot;d)$ is uniformly Lipschitz in $t$ and that $\hat\mu_a$ converges uniformly to $\mu_a$. In fact, under a mild integrability condition such as \(\E\{\sup_{t\in\mathbb{T}}|\hat\mu_a(t,X;d)-\mu_a(t,X;d)|\}=o(1)\), we have
\[
\E\Big[\sup_{|s-t|\le\delta}\bigl|\hat\mu_a(s,X;d)-\hat\mu_a(t,X;d)\bigr|\Big]
\le L_\mu\,\delta + o(1),
\]
so \(\hat\mu_a\) need not be smooth itself, provided it converges uniformly to a Lipschitz limit.

Depending on how \(\hat\mu_a\) is learned, the regularity requirements in Assumptions~\ref{assumption:A5} and \ref{assumption:A5'} can often be relaxed, and in some cases can be completely avoided, as illustrated by the following example.

\begin{example}[Functional linear smoother]\label{ex:regularity-condition-mu}
Consider a functional linear smoother for \(\mu_a\) \citep[e.g.,][]{reiss2017methods}: suppose that, for each \(d\) and \(a\in\{0,1\}\), the fitted regression admits the representation
$
\hat\mu_a(t,X;d)=\sum_{j=1}^n L_{d,j}(X)\,\phi_{j,d}(t),\, t\in\mathbb{T},
$
where \(\phi_{j,d}(t)=\phi\{t;\mathcal{D}_{j,d}\}\) are the training silhouettes and \(L_{d,j}(X)\) are (possibly data-adaptive) weights determined by the smoother, evaluated at \(X\).
Then, for any \(\delta>0\),
\begin{align*}
\E\Big[\sup_{|s-t|\le\delta}\bigl|\hat\mu_a(s,X;d)-\hat\mu_a(t,X;d)\bigr|\Big]
&\le
\E\Big[\sum_{j=1}^n |L_{d,j}(X)|
\sup_{|s-t|\le\delta}\bigl|\phi_{j,d}(s)-\phi_{j,d}(t)\bigr|\Big] \\
&\le
\delta\,\E\Big[\sum_{j=1}^n |L_{d,j}(X)|\Big],
\end{align*}
where the last inequality follows from Lemma~\ref{lem:Lipschitz-Silhouette}.
In particular, if \(\E\{\sum_{j=1}^n |L_{d,j}(X)|\}<\infty\), then Assumption~\ref{assumption:A5'} holds with
\(L_{\hat\mu}=\E\{\sum_{j=1}^n |L_{d,j}(X)|\}\).
This verification relies only on moment control of the smoothing weights and the intrinsic Lipschitz property of silhouettes.
Consequently, for this class of estimators one may dispense with Assumption~\ref{assumption:A5} and replace Assumption~\ref{assumption:A5'} by the much simpler primitive condition \(\E\{\sum_{j=1}^n |L_{d,j}(X)|\}<\infty\).
Ordinary least squares is a notable special case.
\end{example}

The following weak convergence result is the main technical ingredient for our inferential framework.
In Appendix~\ref{appsec:proof-thm-gaussian-dr} we establish the result under Assumption~\ref{assumption:A5'}.
Appendix~\ref{appsec:alternative-proof-thm-gaussian} then shows that the same conclusion holds under Assumption~\ref{assumption:A5}.

\begin{theorem}\label{thm:gausian-dr}
Under Assumptions \ref{assumption:A1}--\ref{assumption:A4} and either \ref{assumption:A5'} or \ref{assumption:A5},
we have
\[
\sqrt{n}\{\widehat{\psi}_{\mathrm{AIPW},d}(t)-\psi_d(t)\}
\rightsquigarrow
\mathbb{G}_d(t)
\quad\text{in}\quad \ell^\infty(\mathbb{T}),
\]
where $\mathbb{G}_d$ is a mean-zero Gaussian process with
$\cov\{\mathbb{G}_d(s),\mathbb{G}_d(t)\}
=
\cov\{\varphi_d(s,Z;\eta),\varphi_d(t,Z;\eta)\}$ for all $s,t\in\mathbb{T}$.
\end{theorem}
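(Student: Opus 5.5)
We start from the standard three-term decomposition of the AIPW estimator. Since $\E\{\varphi_d(t,Z;\eta)\}=\psi_d(t)$, for every $t\in\mathbb{T}$ we write
\begin{align*}
\widehat{\psi}_{\mathrm{AIPW},d}(t)-\psi_d(t)
&=(\Pn-\Pb)\varphi_d(t,Z;\eta)
+(\Pn-\Pb)\{\widehat\varphi_d(t)-\varphi_d(t,Z;\eta)\}
+\bigl[\Pb\{\widehat\varphi_d(t)\}-\psi_d(t)\bigr],
\end{align*}
where $\Pb\{\widehat\varphi_d(t)\}$ integrates over a new $Z$ with $\widehat\eta$ held fixed. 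Call the three terms the main term $M_n(t)$, the empirical-process term $E_n(t)$ and the remainder $R_n(t)$. The plan has three steps:
\begin{itemize}
\item show that $\sqrt n M_n\rightsquigarrow\mathbb{G}_d$ in $\ell^\infty(\mathbb{T})$;
\item show that $\sup_t\sqrt n|E_n(t)|=o_\Pb(1)$;
\item show that $\sup_t\sqrt n|R_n(t)|=o_\Pb(1)$.
\end{itemize}
The theorem then follows by Slutsky's lemma in $\ell^\infty(\mathbb{T})$.

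\textbf{The main term.} This reuses the argument of Theorem~\ref{thm:ipw-weak-convergence}. The class $\{\varphi_d(t,\cdot;\eta):t\in\mathbb{T}\}$ is a sum of three pieces:
\begin{itemize}
\item the IPW-type piece $\{A/\pi-(1-A)/(1-\pi)\}\phi_d(t)$, which is Lipschitz in $t$ with constant $|A/\pi-(1-A)/(1-\pi)|$ by Lemma~\ref{lem:Lipschitz-Silhouette};
\item the regression pieces in $\mu_a(t,X;d)$, which are Lipschitz in $t$ because $|\mu_a(s,x)-\mu_a(t,x)|\le \E\{|\phi_d(s)-\phi_d(t)|\mid x,a\}\le|s-t|$, again by Lemma~\ref{lem:Lipschitz-Silhouette}.
\end{itemize}
So the whole class is Lipschitz in the index $t$ with a square-integrable envelope constant, given positivity. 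By the Lipschitz-in-parameter bracketing bound (van der Vaart, Thm.~2.7.11), the class is $\Pb$-Donsker because $\mathbb{T}$ is a compact interval. The claimed covariance is then read off directly.

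\textbf{The remainder.} A direct computation gives
\[
R_n(t)=\sum_{a}\pm\,\Pb\Bigl[\frac{\{\pi(X)-\hat\pi(X)\}\{\mu_a(t,X;d)-\hat\mu_a(t,X;d)\}}{\hat\pi_a(X)}\Bigr],
\]
where $\hat\pi_1=\hat\pi$ and $\hat\pi_0=1-\hat\pi$. By \ref{assumption:A1} and Cauchy--Schwarz, this is bounded by a constant times the product in \ref{assumption:A4}. That assumption is stated pointwise in $t$, so uniformity needs an extra argument. My approach is to take a finite $\delta$-grid of $\mathbb{T}$ and use equicontinuity: under \ref{assumption:A5'}, $\E\sup_{|s-t|\le\delta}|\hat\mu_a(s)-\hat\mu_a(t)|\le L_{\hat\mu}\delta$, and $\mu_a$ has the same modulus. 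Combined with the boundedness of $\hat\pi-\pi$, moving off the grid costs $O(\|\hat\pi-\pi\|\,\delta)$. Letting $\delta=\delta_n\to0$ slowly with the grid size, and reading \ref{assumption:A4} as holding uniformly over finite grids (or essentially uniformly), gives $\sup_t\sqrt n|R_n|=o_\Pb(1)$. Under \ref{assumption:A5} the same modulus bound for $\hat\mu_a$ comes from uniform convergence to the Lipschitz limit, as noted before the theorem.

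\textbf{The empirical-process term (main obstacle).} Uniformity over $t$ without Donsker conditions on the learners is the hardest step. Conditionally on $\widehat{\Pb}$, \ref{assumption:A2} makes $\Delta_t\coloneqq\widehat\varphi_d(t)-\varphi_d(t)$ a fixed function, so $\sqrt n E_n=\G_n\Delta_\cdot$ is an empirical process over a class indexed by $t$. I would bound it in two parts. For finite-dimensional convergence to zero, Chebyshev gives $\Pb\{|\G_n\Delta_t|>\epsilon\mid\widehat{\Pb}\}\le\|\Delta_t\|^2_{\Pb,2}/\epsilon^2\to0$ by \ref{assumption:A3}. For asymptotic equicontinuity, the conditional class $\{\Delta_t\}$ has a modulus controlled in $L_1$ by \ref{assumption:A5'} (or \ref{assumption:A5}), by Lemma~\ref{lem:Lipschitz-Silhouette} and by \ref{assumption:A1}. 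I would establish equicontinuity by a chaining/bracketing argument over a $\delta$-grid; the intervals $[\Delta_{t_k}-\text{mod}_k,\Delta_{t_k}+\text{mod}_k]$ form brackets with $L_1$-size $O(\delta)$. The delicate point is that brackets need $L_2$ size, so I may need to upgrade the $L_1$ modulus. The upgrade would use boundedness: silhouettes lie in $[0,|\mathbb{T}|]$, and $\hat\mu_a$ is bounded after truncation. Tightness together with pointwise convergence to $0$ then yields $\sup_t|\G_n\Delta_t|=o_\Pb(1)$, completing the proof.
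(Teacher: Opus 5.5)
Your route differs from the paper's. The paper applies your three-term decomposition only at finitely many $t$, so pointwise \ref{assumption:A3}--\ref{assumption:A4} suffice for the finite-dimensional limits. It then gets tightness from a Markov bound $\Pb\{\sup_{|s-t|\le\delta}|\widehat\psi(s)-\widehat\psi(t)|\}\lesssim\delta$ on the \emph{unscaled} estimator, and invokes Billingsley's Theorem 7.5. The process that must be tight is $\sqrt n(\widehat\psi_{\mathrm{AIPW},d}-\psi_d)$, whose modulus is $\sqrt n$ times larger, so that shortcut does not establish tightness. Your decomposition correctly isolates what is needed. Your Donsker argument for $M_n$ is fine, given strong positivity so that the Lipschitz constant is square-integrable.

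\textbf{Gap 1: the remainder.} Your grid argument cannot turn the pointwise rate in \ref{assumption:A4} into a uniform one. After scaling by $\sqrt n$, leaving a $\delta$-grid costs order $\sqrt n\,\|\hat\pi-\pi\|_\infty\,\delta$. So the mesh must shrink like $o(n^{-1/2}/\|\hat\pi-\pi\|_\infty)$, not ``slowly'', and the grid grows with $n$. Pointwise $o_\Pb$ statements say nothing about a maximum over a growing grid. What you actually use is the extra hypothesis $\sqrt n\,\|\hat\pi-\pi\|_{\Pb,2}\sup_{t\in\mathbb{T}}\sum_a\|\hat\mu_a(t,X;d)-\mu_a(t,X;d)\|_{\Pb,2}=o_\Pb(1)$.

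This hypothesis cannot be avoided. Take $\hat\mu_0=\mu_0$ and $\hat\mu_1=\mu_1+\epsilon_n g(X)h\{(t-\tau_n)/\epsilon_n\}$, where $h$ is a unit tent, $g$ is bounded and nonzero, and $\tau_n$ is uniform on $\mathbb{T}$ and computed from the training sample. Set $\hat\pi=\pi+n^{-1/2}\epsilon_n^{-1}g$ and $\epsilon_n=n^{-1/8}$. Then \ref{assumption:A1}--\ref{assumption:A4}, \ref{assumption:A5} and \ref{assumption:A5'} all hold as stated. Yet $\sqrt n R_n(t)=c_n h\{(t-\tau_n)/\epsilon_n\}$ with $c_n\to\E\{g^2/\pi\}>0$. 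This is a bump of fixed height and vanishing width, which destroys asymptotic equicontinuity, so the limit cannot be the continuous $\mathbb{G}_d$. Your instinct to strengthen \ref{assumption:A4} is right, but it is a change of hypothesis, not something the grid argument delivers.

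\textbf{Gap 2: the empirical-process term under \ref{assumption:A5}.} This assumption only gives $\sup_{|s-t|\le\delta}|\hat\mu_a(s,x;d)-\hat\mu_a(t,x;d)|\le L_\mu\delta+2\sup_{t,x}|\hat\mu_a-\mu_a|$. The additive term yields neither brackets finer than $\sup_{t,x}|\hat\mu_a-\mu_a|$ nor anything better than the crude bound $\sqrt n\sup_{t,x}|\hat\mu_a-\mu_a|$.

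For example, take $\pi\equiv\hat\pi\equiv 1/2$, $X_1\sim U[0,1]$, $\hat\mu_0=\mu_0$ and $\hat\mu_1(t,x)=\mu_1(t,x)+n^{-1/4}r_{k(t)}(x_1)$. Here $r_k$ is the $k$th Rademacher (binary-digit) function and $t\mapsto k(t)$ takes every value in $\mathbb{N}$. Then \ref{assumption:A1}--\ref{assumption:A5} hold and $\Delta_t=n^{-1/4}r_{k(t)}(X_1)(1-2A)$. Almost surely some $k$ has $r_k(X_{1i})=1-2A_i$ for every $i$ in the estimation sample, so $\sup_t\G_n\Delta_t=n^{1/4}\to\infty$. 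Your ``(or \ref{assumption:A5})'' therefore fails.

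Your proof can only be completed under \ref{assumption:A5'}, where your bracketing argument works, together with a uniform version of \ref{assumption:A4}. Even there, your $L_1\to L_2$ bracket upgrade needs uniform bounds on $\phi_d$ and $\hat\mu_a$, and these should be stated as assumptions. Silhouettes are bounded by $|\mathbb{T}|$ only if all births lie above $\inf\mathbb{T}$, and truncating $\hat\mu_a$ changes the estimator.
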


Pointwise $\sqrt{n}$-consistency and asymptotic normality are immediate corollaries of Theorem \ref{thm:gausian-dr}.
In particular, for each fixed $t\in\mathbb{T}$, $\widehat{\psi}_{\mathrm{AIPW},d}(t)$ attains the semiparametric efficiency bound with asymptotic variance $\var\{\varphi_d(t,Z;\eta)\}$, and achieves $\sqrt{n}$ rates whenever the nuisance errors satisfy a product-rate condition (for example, each converging at $o_\Pb(n^{-1/4})$ in $L_2(\Pb)$).
In contrast, the PI and IPW estimators typically incur first-order bias terms driven by $\hat\mu_a-\mu_a$ and $\hat\pi-\pi$, respectively, and thus generally require substantially stronger nuisance rates to achieve $\sqrt{n}$-consistent inference under flexible nonparametric learning \citep{kennedy2024semiparametric}. In Appendix~\ref{app:bias_var}, we further provide an explicit bias--variance analysis of the AIPW, IPW, and PI estimators under fixed deviations of $\hat\mu_a$ and $\hat\pi$ from their true values, demonstrating the doubly robust behavior of the AIPW estimator under nuisance misspecification.

Building on Theorem~\ref{thm:gausian-dr}, we can construct simultaneous confidence bands over $\mathbb{T}$. There are several approaches to constructing an asymptotically valid $1 - \alpha$ confidence band, depending on the strength of additional assumptions one is willing to impose or the level of computational complexity one is prepared to accommodate. For instance, one may apply the pivotal method proposed by \cite{liebl2023fast}, or adopt the parametric bootstrap approach developed by \cite{pini2017interval}. For a more detailed discussion and comparison, see \cite{testa2025doubly}.

For many, if not most, practitioners, the primary inferential goal is to determine whether there is any topological effect at all, as captured by persistent homology. Hypothesis tests based on estimated, vectorized topological summaries (e.g., silhouettes, persistence images, or landscapes) generally do not yield valid inference in the metric space of persistence diagrams. In contrast, our framework furnishes a formal test of the null of `no topological effect'. To this end, we first establish stability bounds for weighted silhouettes, which, to our knowledge, have not previously appeared in the literature.

\begin{theorem} \label{thm:stability-silhouettes}
Let $W_q(\mathcal{D}, \mathcal{D}’)$ denote the $q$-Wasserstein distance between two persistence diagrams $\mathcal{D}$ and $\mathcal{D}’$, and let $m^\star$ denote the corresponding optimal $W_1$ matching. Assume that
\begin{itemize}[leftmargin=*, label={}]	
        \item \customlabel{assumption:A6}{(A6)}
        For the power-weighted silhouette defined in \eqref{def:silhouette}, its corresponding persistence diagram $\D$ is bounded such that for all $p = (a_p, b_p) \in \D$, $-\infty < a_p \leq b_p < \infty$. Moreover, there exists a global constant $L > 0$ such that, for all $p \in \mathcal{D}$ and a given weighting exponent $r$, $ \frac{b_p - a_p}{(b_p - a_p)^r} \;\le\; L$. The same holds for $D'$ as well.
\end{itemize}
Consider weighted silhouette functions $\phi$ and $\phi'$ corresponding to $\mathcal{D}$ and $\mathcal{D}'$. 
Under Assumption~\ref{assumption:A6},
\[
\|\phi-\phi'\|_{\infty} \;\le\; \bigl(1+2Lr\,c^{\,r-1}\bigr)\, W_1(\mathcal{D},\mathcal{D}'),
\]
for some constant $c>0$ that depends only on $r$ and an upper bound on the persistences in the diagrams $\mathcal{D},\mathcal{D}'$.
\end{theorem}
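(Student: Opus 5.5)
We reduce the problem to a pointwise comparison of matched tents under the optimal $W_1$ matching $m^\star$. Augment $\mathcal{D}$ and $\mathcal{D}'$ with diagonal points in the usual way, so that $m^\star$ becomes a bijection $p\mapsto p'=m^\star(p)$. A diagonal point has zero persistence, hence zero weight and a zero tent, so it contributes nothing to either silhouette. Write $w_p=(b_p-a_p)^r$, $W=\sum_p w_p$, and $W'=\sum_p w_{p'}$. For each $t$ we decompose
\[
\phi(t)-\phi'(t)=\sum_p \frac{w_p}{W}\bigl\{\Lambda_p(t)-\Lambda_{p'}(t)\bigr\}+\sum_p\Bigl(\frac{w_p}{W}-\frac{w_{p'}}{W'}\Bigr)\Lambda_{p'}(t).
\]
The first sum is the tent-perturbation term and the second is the weight-perturbation term. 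We bound them separately, and they should produce the constants $1$ and $2Lrc^{r-1}$ respectively.

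\textbf{Tent term.} The tent function satisfies $\|\Lambda_p-\Lambda_{p'}\|_\infty\le \|p-p'\|_\infty$, because $\min\{t-a,b-t\}$ is $1$-Lipschitz in $(a,b)$ with respect to the sup-norm and $\max\{0,\cdot\}$ preserves this. The weights $w_p/W$ form a convex combination, so a crude bound gives
\[
\sup_t\Bigl|\sum_p \frac{w_p}{W}\bigl(\Lambda_p-\Lambda_{p'}\bigr)(t)\Bigr|\le \sum_p \|p-p'\|_\infty = W_1(\mathcal{D},\mathcal{D}').
\]
This yields the leading constant $1$.

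\textbf{Weight term.} We proceed in three steps.

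First, by the mean value theorem, with $c$ an upper bound on all persistences in both diagrams (guaranteed finite by (A6)),
\[
|w_p-w_{p'}|\le r c^{r-1}\,\bigl|(b_p-a_p)-(b_{p'}-a_{p'})\bigr|\le 2rc^{r-1}\|p-p'\|_\infty.
\]
If $r<1$ the derivative blows up near zero, so $c$ should be interpreted as a constant depending on $r$ and the persistence bound. The statement allows exactly this, and (A6) is what controls it: the condition $(b-a)^{1-r}\le L$ keeps small persistences from causing trouble.

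Second, rewrite the weight term as
\[
\sum_p\frac{w_p-w_{p'}}{W}\,\Lambda_{p'}(t)+\sum_p w_{p'}\Lambda_{p'}(t)\Bigl(\frac{1}{W}-\frac{1}{W'}\Bigr)
=\frac{1}{W}\Bigl[\sum_p (w_p-w_{p'})\Lambda_{p'}(t)-(W-W')\,\phi'(t)\Bigr].
\]

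Third, bound the tents and the silhouette by persistence. We have $\Lambda_{p'}\le (b_{p'}-a_{p'})/2\le Lw_{p'}/2$ by (A6), and $\phi'\le \max_{p'}(b_{p'}-a_{p'})/2$. The key is to use (A6) to convert every tent height into a multiple of its weight, which then cancels against the $1/W$ normalisation. This gives terms of the form $L\,|w_p-w_{p'}|\cdot(\text{ratio of weights})$, each bounded by $Lrc^{r-1}\|p-p'\|_\infty$.

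The two pieces (the $\sum_p(w_p-w_{p'})\Lambda_{p'}$ piece and the $(W-W')\phi'$ piece) each contribute at most $Lrc^{r-1}W_1$. Summing them gives $2Lrc^{r-1}W_1$, and adding the tent term completes the stated bound.

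\textbf{Main obstacle.} The delicate step is the normalisation mismatch. A ratio such as $w_{p'}/W$, with the denominator coming from the other diagram, is not obviously bounded, and $W$ could be small relative to individual $w_{p'}$. I would first try to avoid this by symmetrising: split the error as $\phi-\phi'$ using whichever denominator makes each coefficient a convex weight. Failing that, I would absorb the bounded ratio into the constant $c$, which the statement permits to depend on $r$ and the persistence bound.
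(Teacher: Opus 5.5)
Your skeleton matches the paper's: the diagonal-augmented bijection $m^\star$, the split of $\phi-\phi'$ into a tent term and a weight term, the bound $W_1(\mathcal{D},\mathcal{D}')$ for the tent term, and the mean-value bound $|w_p-w_{p'}|\le 2rc^{r-1}\|p-p'\|_\infty$. The gap is in the weight term, exactly at the step you flag as the main obstacle, and the argument as written cannot be completed. Your common-denominator form
\[
\frac{1}{W}\Bigl[\sum_p (w_p-w_{p'})\Lambda_{p'}(t)-(W-W')\,\phi'(t)\Bigr]
\]
divides tents of $\mathcal{D}'$ by the weight sum of $\mathcal{D}$. Its two pieces are not individually controlled by $W_1$; only their combination is. Take $\mathcal{D}=\{q\}$ with $q$ of persistence $\epsilon$, and $\mathcal{D}'=\{q,P\}$ with $P$ of persistence $\ell$, far from $q$ and matched to the diagonal. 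Then $W_1(\mathcal{D},\mathcal{D}')=\ell/2$. Your two pieces equal $-(\ell^r/\epsilon^r)\Lambda_P(t)$ and $(\ell^r/\epsilon^r)\phi'(t)$, each of order $\ell^{1+r}/\epsilon^r$ at the peak of $\Lambda_P$, while their sum is $\ell^r(\Lambda_q-\Lambda_P)(t)/(\epsilon^r+\ell^r)$, which is at most $\ell/2$ in absolute value. At $r=1$, (A6) holds with $L=1$ for every $\epsilon$. So the claim that each piece is at most $Lrc^{r-1}W_1$ is false. Your fallback of absorbing $W'/W$ into $c$ is also unavailable, because that ratio is not controlled by $r$ and a persistence upper bound.

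The missing idea is the choice of denominators, which is how the paper proceeds. With $S=\sum_p w_p$ and $S'=\sum_{p'}w_{p'}$, write
\[
\frac{w_p}{S}-\frac{w_{p'}}{S'}=\frac{w_p\,(S'-S)}{S\,S'}+\frac{w_p-w_{p'}}{S'} .
\]
In the first piece, $w_p/S\le 1$ is a genuine convex weight. In both pieces, each tent $\Lambda_{p'}$ is now normalised by $S'$, the weight sum of its own diagram. Summing (A6) over $\mathcal{D}'$ gives $\sum_{p'}\|\Lambda_{p'}\|_\infty/S'\le L/2$. Combined with $|S'-S|\le\sum_{(p,p')\in m^\star}|w_p-w_{p'}|\le 2rc^{r-1}W_1$, each piece is at most $Lrc^{r-1}W_1$, which gives the constant $1+2Lrc^{r-1}$. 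Your ``symmetrising'' instinct points here, but it has to be carried out in this form.

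Separately, your remark that (A6) tames small persistences when $r<1$ is wrong. For $r<1$, (A6) reads $(b_p-a_p)^{1-r}\le L$, which is an upper bound on persistence. A point matched to the diagonal pushes the mean-value point toward $0$, where $rx^{r-1}$ blows up. Indeed, if $\mathcal{D}$ has one point of persistence $\ell$ and $\mathcal{D}'$ adds a disjoint point of persistence $\epsilon$, then $W_1=\epsilon/2$ while $\|\phi-\phi'\|_\infty$ is of order $\epsilon^r\ell^{1-r}$. So no constant of the stated type exists for $r<1$. The paper's proof shares this defect: it replaces $c_{pp'}^{r-1}$ by $c^{r-1}$ with $c$ the largest persistence, which is valid only for $r\ge 1$. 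This is a limitation of the statement rather than of your route, and for $r\ge 1$ your mean-value step is fine.
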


Appendix~\ref{appsec:proof-stability} provides the formal definitions of the Wasserstein distance, the associated optimal matching \(m^\star\), and the constant \(c\). Building upon the stability result above and the Gaussian weak convergence of our estimator, our framework provides, to our knowledge, the first formal test of the null hypothesis of no topological effect, with asymptotically correct size and consistency against fixed alternatives, as formally stated below.

\begin{corollary}\label{cor:test-zero-top-effect}
Fix $d\in\mathbb{N}_0$ and consider the null hypothesis
\[
H_0:\; W_1(\mathcal{D}_d^{1},\mathcal{D}_d^{0})=0 \quad \text{a.s.},
\]
under which $\psi_d(t)=0$ for all $t\in\mathbb{T}$.
Let $T_n := \sqrt{n}\,\|\widehat{\psi}_{\mathrm{AIPW},d}\|_{\infty}$.
Under Assumptions~\ref{assumption:A1}--\ref{assumption:A6}, we have
$
T_n \;\rightsquigarrow\; \|\mathbb{G}_d\|_{\infty},
$
where $\mathbb{G}_d$ is the Gaussian limit in Theorem~\ref{thm:gausian-dr}.
Assume moreover that a Gaussian- or Rademacher-multiplier bootstrap process
$\widehat{\mathbb{G}}_{n,d}$ based on the estimated influence function
consistently estimates the law of $\|\mathbb{G}_d\|_\infty$ conditionally on the data.
Let $c_{1-\alpha}$ be the conditional $(1-\alpha)$-quantile of
$\|\widehat{\mathbb{G}}_{n,d}\|_{\infty}$.
Then the test that rejects $H_0$ when $T_n>c_{1-\alpha}$ has asymptotic size $\alpha$
and is consistent against any fixed alternative with $\|\psi_d\|_{\infty}>0$.
\end{corollary}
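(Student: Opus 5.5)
The plan is to reduce the whole statement to Theorem~\ref{thm:gausian-dr}, the continuous mapping theorem, and standard multiplier-bootstrap reasoning. Theorem~\ref{thm:stability-silhouettes} enters only to translate the diagram-level null into a statement about $\psi_d$.

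\emph{Step 1: the null forces $\psi_d\equiv 0$.} Under $H_0$ we have $W_1(\mathcal{D}_d^{1},\mathcal{D}_d^{0})=0$ almost surely. Assumption~\ref{assumption:A6} holds, so Theorem~\ref{thm:stability-silhouettes} gives
\[
\|\phi_d^{1}-\phi_d^{0}\|_\infty\le (1+2Lr\,c^{r-1})\,W_1(\mathcal{D}_d^{1},\mathcal{D}_d^{0})=0 \quad \text{a.s.}
\]
Taking expectations pointwise in $t$ yields $\psi_d(t)=\E\{\phi_d^{1}(t)-\phi_d^{0}(t)\}=0$ for all $t\in\mathbb{T}$. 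Hence, under $H_0$,
\[
T_n=\sqrt{n}\,\bigl\|\widehat{\psi}_{\mathrm{AIPW},d}-\psi_d\bigr\|_\infty .
\]

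\emph{Step 2: limit law of $T_n$.} Theorem~\ref{thm:gausian-dr} gives $\sqrt{n}(\widehat{\psi}_{\mathrm{AIPW},d}-\psi_d)\rightsquigarrow\mathbb{G}_d$ in $\ell^\infty(\mathbb{T})$. Its hypotheses are contained in \ref{assumption:A1}--\ref{assumption:A5}. The map $f\mapsto\|f\|_\infty$ is $1$-Lipschitz on $\ell^\infty(\mathbb{T})$, so the continuous mapping theorem yields $T_n\rightsquigarrow\|\mathbb{G}_d\|_\infty$.

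\emph{Step 3: asymptotic size.} The key fact needed here is continuity of the law of $\|\mathbb{G}_d\|_\infty$ at its $(1-\alpha)$-quantile $q_{1-\alpha}$.

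I would argue this via the known result that the supremum of a separable, tight, mean-zero Gaussian process has a continuous distribution on $(0,\infty)$, with possibly an atom only at $0$. Tightness holds because $\mathbb{G}_d$ is a weak limit in $\ell^\infty(\mathbb{T})$. The case of a degenerate process ($\var\{\varphi_d(t,Z;\eta)\}=0$ for all $t$) is trivial or should be excluded, since then $q_{1-\alpha}=0$.

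Given this continuity, the bootstrap assumption says that the conditional law of $\|\widehat{\mathbb{G}}_{n,d}\|_\infty$ converges to that of $\|\mathbb{G}_d\|_\infty$. By the standard quantile-convergence lemma (e.g.\ van der Vaart, Lemma 21.2, applied conditionally), this gives $c_{1-\alpha}\to q_{1-\alpha}$ in probability. Slutsky's lemma then gives
\[
\Pb(T_n>c_{1-\alpha})\to\Pb(\|\mathbb{G}_d\|_\infty>q_{1-\alpha})=\alpha .
\]

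\emph{Step 4: consistency against fixed alternatives.} Suppose $\|\psi_d\|_\infty>0$. By the reverse triangle inequality,
\[
T_n\ge\sqrt{n}\|\psi_d\|_\infty-\sqrt{n}\|\widehat{\psi}_{\mathrm{AIPW},d}-\psi_d\|_\infty .
\]
The second term is $O_\Pb(1)$ by Step 2, so $T_n\to\infty$ in probability.

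The bootstrap consistency assumption is stated for the estimated influence function and does not rely on $H_0$. Hence $c_{1-\alpha}=O_\Pb(1)$, converging to the quantile of $\|\mathbb{G}_d\|_\infty$ under the alternative, and so $\Pb(T_n>c_{1-\alpha})\to1$.

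\emph{Main obstacle.} I expect Step 3, specifically the continuity of the distribution of $\|\mathbb{G}_d\|_\infty$ at the critical value, to be the main obstacle. I would cite Tsirelson's theorem, or anti-concentration bounds for Gaussian suprema, and assume nondegeneracy of $\var\{\varphi_d(t,Z;\eta)\}$ for some $t$.

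A secondary subtlety is that Step 1 only uses $W_1=0$ a.s. to obtain $\psi_d\equiv0$. The test is therefore really a test of $\psi_d\equiv0$, which $H_0$ implies.
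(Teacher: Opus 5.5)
Your proposal is correct and follows the same route as the paper's sketch: Theorem~\ref{thm:gausian-dr} plus the continuous mapping theorem for $f\mapsto\|f\|_\infty$ under the null, bootstrap critical values for size, and $T_n\ge\sqrt{n}\|\psi_d\|_\infty-O_\Pb(1)\to\infty$ for power. Your extra points are details the paper leaves implicit, not a different argument: deriving $\psi_d\equiv0$ from $H_0$ (which needs only that $W_1=0$ forces the diagrams to coincide off the diagonal, not the full stability bound), continuity and strict increase of the law of $\|\mathbb{G}_d\|_\infty$ at $q_{1-\alpha}$ under nondegeneracy (strict increase is what the quantile-convergence lemma actually requires), and $c_{1-\alpha}=O_\Pb(1)$ under alternatives.
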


Under \(H_0\),  \(T_n=\sqrt{n}\|\widehat\psi_{\mathrm{AIPW},d}\|_\infty\) behaves like the supremum norm of the centered empirical-process fluctuation and converges to \(\|\mathbb{G}_d\|_\infty\).
Under any fixed alternative with \(\|\psi_d\|_\infty>0\), \(T_n\) diverges to \(+\infty\), so the test rejects with probability tending to one. Because the limit law of \(T_n=\sqrt{n}\|\widehat\psi_{\mathrm{AIPW},d}\|_\infty\) depends on the unknown distribution of the Gaussian process \(\mathbb{G}_d\), we assume a valid multiplier bootstrap so that the conditional \((1-\alpha)\)-quantile of \(\|\mathbb{G}_d\|_\infty\) can be consistently approximated from the data, yielding asymptotically correct critical values.

\section{Experiments}\label{sec:experiments}
To demonstrate the effectiveness of our method, we carry out experiments on two semi-synthetic datasets and one synthetic dataset; results for the synthetic (ORBIT) data are deferred to Appendix~\ref{app:orbit} for completeness. For all experiments, we construct a hypothetical dataset $(X, A, Y^0, Y^1)$, where the potential outcome pairs are designed to exhibit distinct topological contrasts across treatment groups. We generate the covariates $X \in \R^5$ from a multivariate Gaussian distribution, imposing a subgroup structure by specifying different mean vectors for each subgroup. Given $X$, treatment $A$ is assigned with probability $\pi(X) = expit( -0.5X_1 -0.1X_2 + 0.6 X_3 + 0.1 X_4 + 0.1X_5 + 0.5X_2X_3 - 0.7X_1X_3)$. This treatment mechanism is designed such that one subgroup has a higher probability of receiving treatment than the other. We model the silhouette regression function $\mu_a$ using function-on-scalar regression with a Fourier basis expansion, while the propensity score $\pi$ is estimated via a random forest classifier. Our goal is to estimate the true topological causal effect based on the observable data. All experiments are repeated over 20 simulations. For complete details of the experimental setup, see Appendix~\ref{app:experiment}.

\begin{figure}
    \centering
    \begin{subfigure}{0.237\linewidth}
        \includegraphics[width=\linewidth]{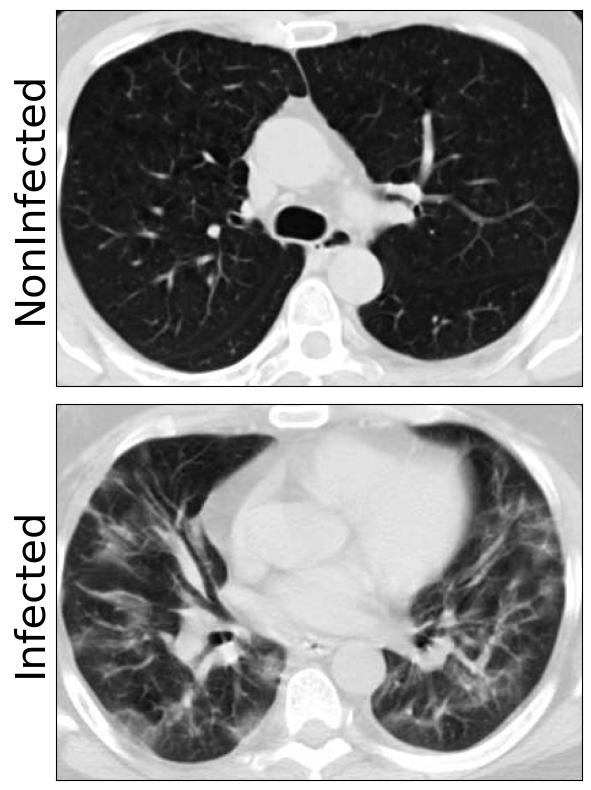}
        \caption{} \label{fig:covid_illustration_a}
    \end{subfigure}
    \hfill
    \begin{subfigure}{0.257\linewidth}
        \begin{subfigure}{\linewidth}
            \centering
            \includegraphics[width=\linewidth]{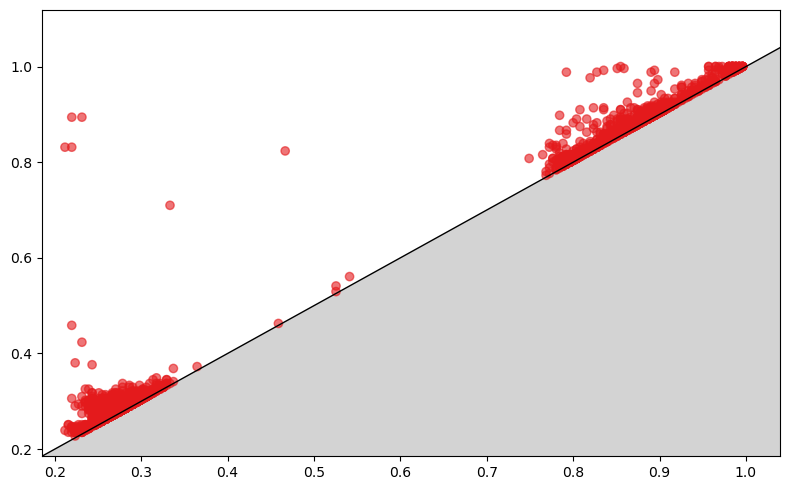}
        \end{subfigure}
        \begin{subfigure}{\linewidth}
            \centering
            \includegraphics[width=\linewidth]{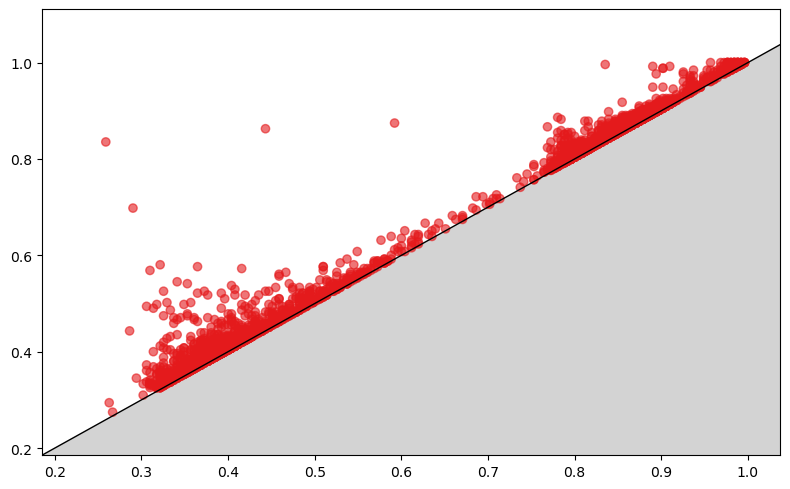}
        \end{subfigure}
        \caption{} \label{fig:covid_illustration_b}
    \end{subfigure}
    \hfill
    \begin{subfigure}{0.332\linewidth}
        \includegraphics[width=\linewidth]{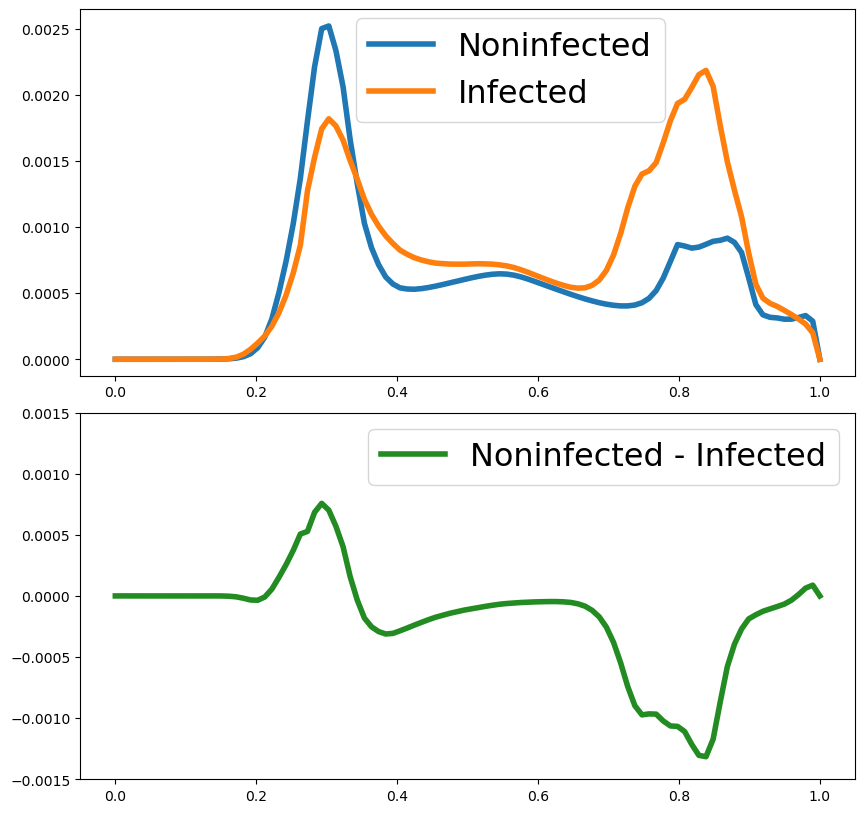}
        \caption{} \label{fig:covid_illustration_c}
    \end{subfigure}
    \caption{(a) CT-scan images of non-infected (top) and infected (bottom) patients; (b) corresponding 0-dimensional persistence diagrams; (c) average silhouettes of non-infected and infected patients given $r=0.1$ (top) and difference in average silhouettes between non-infected and infected patients.}
    \label{fig:covid_illustration}
\end{figure}

\textbf{SARS-CoV-2 Dataset.} The first semi-synthetic experiment uses image, possibly in different sizes, with synthesized covariates and treatment assignments while retaining real outcomes, allowing controlled yet realistic evaluation of causal estimators. We employ the SARS-CoV-2 dataset \citep{soares2020sars}, which contains CT-scans collected from real patients who are infected or non-infected by COVID-19. Infected patients exhibit high rates of ground-glass opacities and consolidations that appear as isolated regions in CT-scans, which can be captured by 0-dimensional persistence diagrams of Lower-star filtration \citep{iqbal2025classification}. In Figure~\ref{fig:covid_illustration_a} and \ref{fig:covid_illustration_b}, the difference between an infected and a non-infected CT scan image is reflected in the associated persistence diagrams. Figure~\ref{fig:covid_illustration_c} (top) illustrates the average silhouettes of infected and non-infected patients, where the non-infected group exhibits higher values in [0.2, 0.4] and the infected group exhibits higher values in [0.7, 0.9]. Thus, a treatment can be assumed to be more effective when TATE exhibits larger magnitudes over the interval $[0.2, 0.4]$ (positive direction) and $[0.7, 0.9]$ (negative direction), as illustrated in Figure~\ref{fig:covid_illustration_c} (bottom). 
In this experiment, the true TATE is known, as we manually construct $(Y^0, Y^1)$ by assigning 500 infected samples to $Y^0$ and subsequently pairing it with $Y^1$, which consist of 75\% non-infected and 25\% infected samples. We compute the PI, IPW, and AIPW estimators from the observed data and compare their estimates with the true effect.


\textbf{Results.} The blue curve in Figure~\ref{fig:exp_result_a}
shows the true silhouette, which by design clearly reflects a causal effect. Although all three estimators reasonably capture the overall shape of the true target, the IPW estimator systematically overestimates the true treatment effect, whereas the PI estimator underestimates it. In contrast, the AIPW estimator provides an accurate reconstruction of the true silhouette function, achieving minimal bias and closely matching the exact shape of the ground truth.

\textbf{GEOM-Drugs Dataset.}
Next, we evaluate our framework on graph data through a semi-synthetic experiment with the GEOM-Drugs dataset \citep{axelrod2022geom}, which provides graph-structured representations of molecular compounds. To analyze graph-structured data, we first assign each graph node a scalar weight via a weighted sum of its features, and subsequently define each edge weight as the maximum of the weights of its adjacent nodes; we then apply a sublevel set filtration. Adopting procedures analogous to the previous experiment, we construct $(Y^0, Y^1)$ by taking $Y^0$ to consist of 1000 single-loop graphs and pairing it with $Y^1$, which is a mixture comprising $75\%$ two-loop graphs and $25\%$ single-loop graphs. As before, we compute the PI, IPW, and AIPW estimators from the observed data and compare their estimates to the true effect.


\textbf{Results.} All three estimators closely align with the true 0-dimensional silhouette (Figures~\ref{fig:exp_result_b}), which exhibits a negative effect, implying that some connected components were merged under treatment. The true 1-dimensional silhouette (Figure~\ref{fig:exp_result_c}) shows a pronounced positive effect, indicating that treatment induces the formation of new loops, consistent with our experimental design. However, the IPW estimator overestimates the true 1-dimensional treatment effect, while the PI estimator fails to capture the relatively complex curvature. Overall, the AIPW estimator delivers the most accurate and reliable approximation of the true silhouettes, consistent with the theoretical results in Section~\ref{sec:inference}.


\begin{figure}
    \centering
    \begin{subfigure}{0.329\linewidth}
        \includegraphics[width=\linewidth]{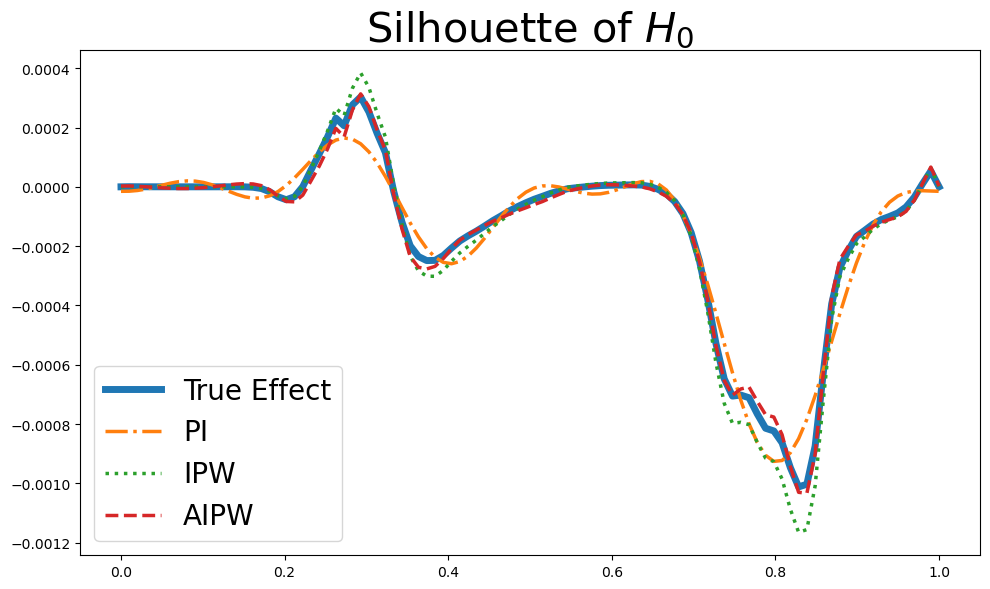}
        \caption{} \label{fig:exp_result_a}
    \end{subfigure}
    \hfill
    \begin{subfigure}{0.329\linewidth}
        \includegraphics[width=\linewidth]{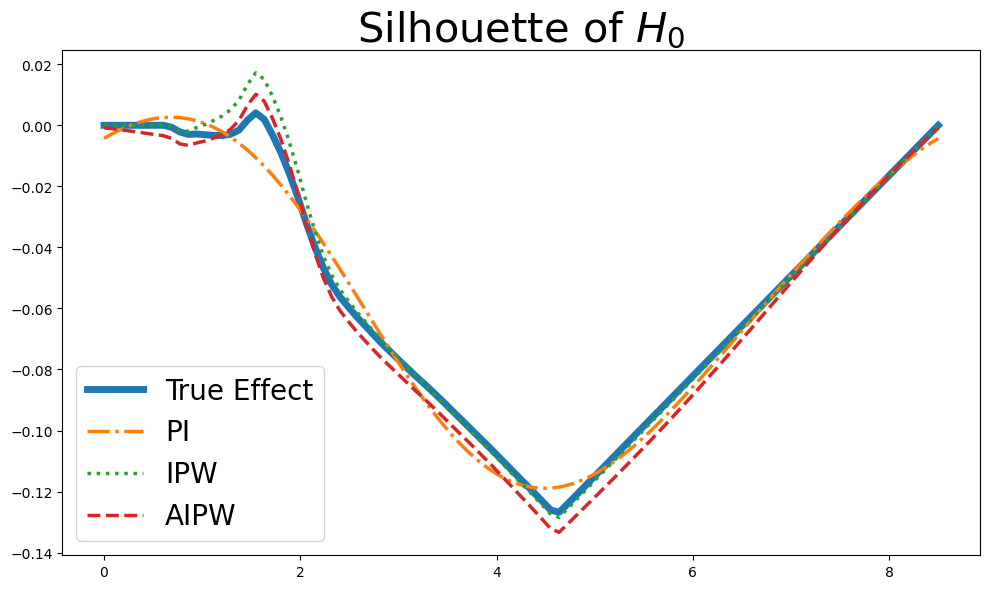}
        \caption{} \label{fig:exp_result_b}
    \end{subfigure}
    \hfill
     \begin{subfigure}{0.329\linewidth}
        \includegraphics[width=\linewidth]{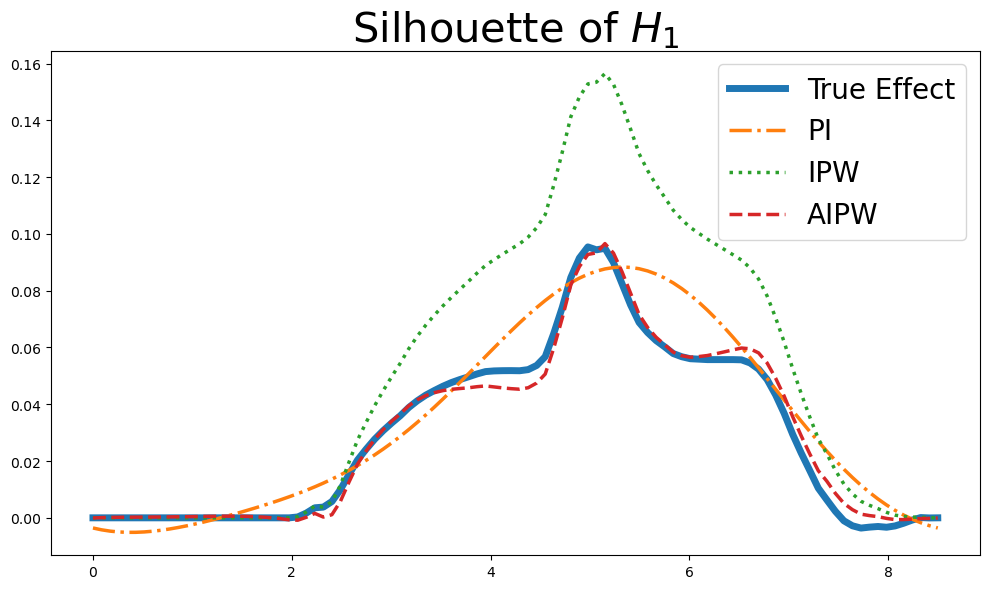}
        \caption{} \label{fig:exp_result_c}
    \end{subfigure}
    \caption{The true silhouette function and its PI, IPW, AIPW estimates. (a) 0-dimensional silhouettes computed from the SARS-CoV-2 dataset, (b) 0-dimensional silhouettes computed from the GEOM-Drugs dataset, and (c) 1-dimensional silhouettes computed from the GEOM-Drugs dataset.}
    \label{fig:exp_result}
\end{figure}

\section{Discussion} \label{sec:discussion}
We introduce a novel connection between causal inference and TDA, enabling estimation of causal effects that capture not only shifts in mean or variance but also changes in the meaningful intrinsic topological structure of the outcome space. This broadens the scope of causal inference and opens new avenues for investigating causal mechanisms in complex, high-dimensional settings, including those involving unstructured data.

Several caveats and prospective solutions deserve attention, highlighting fruitful directions for future investigation. First, the proposed TATE framework is designed to capture macroscopic topological shifts and may be less informative when the primary interest lies in detecting fine-grained, local changes. In such cases, standard causal estimands could be estimated in parallel if possible, potentially after appropriate preprocessing. Relatedly, as discussed in Section~\ref{sec:framework}, silhouette functions do not exactly quantify the number of changing homological features. Nevertheless, the proposed estimators and the analyses in Sections~\ref{sec:estimation} and~\ref{sec:inference} extend naturally to individual persistence landscape functions, offering finer topological resolution when required.
Lastly, the construction of our estimators can be computationally intensive due to the use of persistent homology. This cost may be mitigated by adopting more efficient topological summaries, such as Euler characteristic curves \citep{turner2014persistent}. Additional extensions include adapting the framework to more complex causal settings, such as continuous treatments, instrumental variable designs, or longitudinal exposures.

\section*{Code Availability} 
Python code implementing the proposed methodology is available at \url{https://github.com/kwangho-joshua-kim/top-causal-effect}.

\section*{Acknowledgements} 
This work was supported by the National Research Foundation of Korea (NRF) grant funded by the Korean government (MSIT)(No. RS-2022-NR068754, RS-2024-00335008, RS-2025-24534596), by a Korea University Grant (K2304801), and by Samsung Science and Technology Foundation under Project Number SSTF-BA2502-01.


\bibliography{mybib}
\bibliographystyle{iclr2026_conference}

\newpage
\appendix
\section*{\centerline{Appendix}}

\section{Standard Filtration Constructions and Related Descriptors}\label{app:background}

This appendix provides concrete definitions and examples of the filtration constructions, as well as related topological summaries referenced in Section~\ref{sec:preliminaries} (Choice of filtration).

\textbf{Vietoris--Rips complex.}
Let \(X\) be a finite set of points equipped with a distance \(\mathsf{d}\) (for example, the Euclidean distance when \(X\subset\mathbb{R}^m\)).
For \(r>0\), the Vietoris--Rips complex at scale \(r\) is the simplicial complex
\[
\mathrm{Rips}(r)
=
\Bigl\{\sigma\subseteq X:\ \mathsf{d}(u,u')\le 2r\ \text{for all }u,u'\in\sigma\Bigr\}.
\]
If \(r_1\le r_2\), then \(\mathrm{Rips}(r_1)\subseteq \mathrm{Rips}(r_2)\), so \(\{\mathrm{Rips}(r):r\ge 0\}\) forms a natural filtration.

\textbf{\(\alpha\)-complex.}
Let \(X\subset\mathbb{R}^m\) be a finite point set.
For each \(u\in X\), let \(V_u\) denote the Voronoi cell of \(u\),
\[
V_u=\bigl\{x\in\mathbb{R}^m:\ \|x-u\|\le \|x-u'\|\ \text{for all }u'\in X\bigr\},
\]
and let \(B_u(r)=\{x\in\mathbb{R}^m:\ \|x-u\|\le r\}\) be the closed ball of radius \(r\) centered at \(u\).
Define the truncated ball \(R_u(r)=B_u(r)\cap V_u\).
The \(\alpha\)-complex at scale \(r\) is
\[
\mathrm{Alpha}(r)
=
\Bigl\{\sigma\subseteq X:\ \bigcap_{u\in\sigma} R_u(r)\neq\varnothing\Bigr\}.
\]
As \(r\) increases, \(\mathrm{Alpha}(r)\) is nested, yielding a filtration \(\{\mathrm{Alpha}(r):r\ge 0\}\).

\textbf{Cubical complexes.}
A cubical complex is an analogue of a simplicial complex built from axis-aligned cubes, and is well suited for grid-structured data such as digital images.
An \emph{elementary interval} is an interval of the form \([l,l]\) or \([l,l+1]\) for some \(l\in\mathbb{Z}\), called \emph{degenerate} and \emph{nondegenerate}, respectively.
An \emph{elementary cube} is a finite product \(Q=I_1\times\cdots\times I_n\) of elementary intervals, and its dimension is the number of nondegenerate factors.
A \emph{cubical complex} \(K\) is a finite collection of elementary cubes such that every face of a cube in \(K\) is also in \(K\).
A filtered cubical complex can be constructed by assigning filtration values to cubes and taking sublevel (or superlevel) subcomplexes.

\textbf{Persistence landscapes.}
Persistence landscapes \citep{bubenik2015statistical} embed persistence diagrams into a functional Hilbert space.
Given a persistence diagram \(\mathcal{D}\), its persistence landscape is the collection \(\{\lambda_k(\cdot;\mathcal{D})\}_{k\in\mathbb{N}}\) defined by
\[
\lambda_k(t;\mathcal{D})
=
\underset{p\in\mathcal{D}}{\mathrm{kmax}}\ \Lambda_p(t),
\qquad
k\in\mathbb{N},\ t\in\mathbb{T},
\]
where \(\Lambda_p\) is the tent function in \eqref{def:tent}, \(\mathrm{kmax}\) denotes the \(k\)-th largest value (counting multiplicities), and \(\mathbb{T}\subset\mathbb{R}\) is a fixed compact interval.
When \(k\) is fixed, we may write \(\lambda(t;\mathcal{D},k)\) for \(\lambda_k(t;\mathcal{D})\), and \(\lambda(t;\mathcal{D}_d,k)\) for the landscape associated with the \(d\)-th diagram.
If \(\mathcal{D}\) has \(N\) off-diagonal points, then \(\lambda_k(\cdot;\mathcal{D})\equiv 0\) for all \(k>N\).

\section{Bias and Variance Analysis} \label{app:bias_var}
In this section, we explore how our estimators $\widehat\psi_{\mathrm{PI}, d}$, $\widehat\psi_{\mathrm{IPW}, d}$, and $\widehat\psi_{\mathrm{AIPW}, d}$ perform as the nuisance estimates $\hat{\mu}_a(t,x;d)$ and $\hat\pi_a(x)$ deviate from the truth. To this end, we analyze the bias and variance of our estimators in a similar manner to \citet{dudik2011doubly}. Henceforth, we omit the homology dimension $d$ from all notations, as the subsequent results do not depend on $d$. Let $\Delta_a(t, x)$ denote the additive deviation of $\hat\mu_a(t,x)$ from $\mu_a(t,x) = \E(\phi(t; \D) \mid X=x, A=a)$, and $\delta_a(x)$ the multiplicative deviation of $\hat\pi_a(x)$ from $\pi_a(x) = \Pb(A=a \mid X=x)$, such that 
\begin{align*}
    \Delta_a(t, x) &= \hat{\mu}_a(t, x) - \mu_a(t, x), \\
    \delta_a(x) &= 1 - \frac{\pi_a(x)}{\hat\pi_a(x)}.
\end{align*}

As discussed in Remark \ref{rmk:sample-splitting}, we assume that the estimates $\hat\mu_a(t,x)$ and $\hat\pi_a(x)$ are fitted from an independent sample via sample splitting, and are thus fixed. For notational convenience, we use the shorthand $\mu_a$ for $\mu_a(t,x)$, $\hat\mu_a$ for $\hat\mu_a(t,x)$, $\pi_a$ for $\pi_a(x)$, $\hat\pi_a$ for $\hat\pi_a(x)$, $\I_a$ for $\I(A=a)$, $\phi$ for $\phi(t; \D)$, $\Delta_a$ for $\Delta_a(t,x)$, and $\delta_a$ for $\delta_a(x)$.

\subsection{Bias Analysis}
We first analyze the bias of $\widehat\psi_{\mathrm{AIPW}}$. Let $\varphi_a \coloneq \mu_a + \frac{\I_a}{\pi_a}(\phi - \mu_a)$, and let  $\widehat\varphi_a$ be its estimate using $\hat\mu_a$ and $\widehat\pi_a$. Then, the expectation of $\widehat\varphi_a$ is
\begin{align*}
    \E(\widehat\varphi_a) &= \E\left\{\hat\mu_a + \frac{\I_a}{\hat\pi_a} (\phi - \hat\mu_a)\right\} \\
    &= \E\left\{\mu_a + \Delta_a + \frac{\I_a}{\hat\pi_a} (\phi - \mu_a - \Delta_a)\right\} \\
    &= \E(\mu_a) + \E_X\left[\E\left\{\Delta_a + \frac{\I_a}{\hat\pi_a} (\phi - \mu_a - \Delta_a) \middle\vert X \right\}\right] \\
    &= \E(\phi^a) + \E_X\left\{\Delta_a\left(1 - \frac{\pi_a}{\hat\pi_a}\right) + \frac{\pi_a}{\hat\pi_a}(\mu_a - \mu_a)\right\} \\
    &= \E(\phi^a) + \E(\Delta_a \delta_a),
\end{align*}
where the fourth equality follows from the law of total expectation, coupled with the (C1) consistency and (C2) unconfoundedness assumptions from Section \ref{sec:framework}. From the above formulation, we can derive the bias of $\widehat{\psi}_{\mathrm{AIPW}}$ as follows:
\begin{align*}
    \E\left(\widehat\psi_{\mathrm{AIPW}}\right) - \psi&= \E(\widehat\varphi_1 - \widehat\varphi_0) - \psi\\
    &= \E(\phi^1 - \phi^0) - \psi + \E(\Delta_1 \delta_1 - \Delta_0 \delta_0) \\
    &= \E(\Delta_1 \delta_1 - \Delta_0 \delta_0).
\end{align*}
The biases of $\widehat\psi_{\mathrm{PI}}$ and $\widehat\psi_{\mathrm{IPW}}$ naturally follow from the observation that the PI and IPW estimators are special cases of the AIPW estimator such that $\frac{\I_a}{\hat\pi_a} (\phi - \hat\mu_a)=0$ and $\hat\mu_a=0$, respectively.
\begin{align*}
    \E\left(\widehat{\psi}_{\mathrm{PI}}\right) - \psi&= \E(\Delta_1 - \Delta_0) \\
    \E\left(\widehat{\psi}_{\mathrm{IPW}}\right) - \psi &= -\E(\delta_1\mu_1 - \delta_0\mu_0).
\end{align*}
In general, none of the estimators uniformly dominates the others in terms of bias. However, the bias of the AIPW estimator will be close to $0$ if \textit{either} $\hat\mu_a \approx \mu_a$ or $\hat\pi_a \approx \pi_a$, whereas the PI estimator generally requires $\hat\mu_a \approx \mu_a$ and the IPW estimator generally requires $\hat\pi_a \approx \pi_a$. Thus, the AIPW estimator can effectively integrate both sources of information for better estimation.

\subsection{Variance Analysis}
We now analyze the variance of the estimators. For convenience, we let $\Var(\hat\psi) \coloneq \Cov( \hat\psi(s), \hat\psi(t))$ for any $s,t \in \mathbb{T}$, and the subsequent results hold for arbitrary choices of $s$ and $t$. We define $\epsilon_a \coloneq \frac{\I_a}{\hat\pi_a}(\phi - \mu_a)$ for notational simplicity. Note that from the law of total expectation, $\E(\epsilon_a \mid X) = 0$.

\textbf{AIPW.} Given that $\Var(\hat\psi_\mathrm{AIPW}) = \frac{1}{n}\Var(\hat\varphi_1 - \hat\varphi_0)$, it suffices to analyze $\Var(\hat\varphi_1 - \hat\varphi_0)$. Since $\Var(\hat\varphi_1 - \hat\varphi_0) = \E\{(\hat\varphi_1 - \hat\varphi_0)^2\} - \E(\hat\varphi_1 - \hat\varphi_0)^2$ and $\E(\hat\varphi_1 - \hat\varphi_0)$ has already been derived in the previous section, analyzing $\E\{(\hat\varphi_1 - \hat\varphi_0)^2\}$ will give us the desired result. 
\begin{align*}
    \E\{(\hat\varphi_1 - \hat\varphi_0)^2\} &= \E \left[\left\{\hat\mu_1 + \frac{\I_1}{\hat\pi_1}(\phi - \hat\mu_1) - \hat\mu_0 - \frac{\I_0}{\hat\pi_0}(\phi - \hat\mu_0)\right\}^2\right] \\
    &= \E\left[\left\{(\mu_1 -\mu_0) + (\epsilon_1 - \epsilon_0) + \Delta_1\left(1- \frac{\I_1}{\hat\pi_1}\right)  - \Delta_0\left(1 - \frac{\I_0}{\hat\pi_0}\right) \right\}^2\right] \\
    &= \E\left\{(\mu_1 -\mu_0)^2 \right\} + \E\left\{(\epsilon_1 -\epsilon_0)^2 \right\} + \E\left[\left\{ \Delta_1\left(1- \frac{\I_1}{\hat\pi_1}\right)  - \Delta_0\left(1 - \frac{\I_0}{\hat\pi_0}\right) \right\}^2\right] \\
    &\quad+ 2\E\left[ (\mu_1 -\mu_0) \left\{ \Delta_1\left(1- \frac{\I_1}{\hat\pi_1}\right) - \Delta_0\left(1 - \frac{\I_0}{\hat\pi_0}\right) \right\} \right],
\end{align*}
where the expectation of cross products of $\epsilon_1 - \epsilon_0$ is 0 as a consequence of iterated expectation and $\E(\epsilon_a \mid X)=0$. Next, we further expand the third term and the last term. The third term can be expanded as
\begin{align*}
    &\E\left[\left\{ \Delta_1\left(1- \frac{\I_1}{\hat\pi_1}\right)  - \Delta_0\left(1 - \frac{\I_0}{\hat\pi_0}\right) \right\}^2\right] \\
    &= \E\left\{\Delta_1^2\left(1- \frac{\I_1}{\hat\pi_1}\right)^2 + \Delta_0^2\left(1 - \frac{\I_0}{\hat\pi_0}\right)^2 -2 \Delta_1\Delta_0\left(1- \frac{\I_1}{\hat\pi_1}\right) \left(1 - \frac{\I_0}{\hat\pi_0}\right) \right\} \\
    &= \E\left\{\Delta_1^2\left(1- 2\frac{\I_1}{\hat\pi_1} + \frac{\I_1}{\hat\pi_1^2}\right) + \Delta_0^2\left(1 - 2\frac{\I_0}{\hat\pi_0} + \frac{\I_0}{\hat\pi_0^2}\right) -2 \Delta_1\Delta_0\left(1- \frac{\I_1}{\hat\pi_1} - \frac{\I_0}{\hat\pi_0}\right) \right\} \\
    &= \E_X\left\{\Delta_1^2\left(1- 2\frac{\pi_1}{\hat\pi_1} + \frac{\pi_1}{\hat\pi_1^2}\right) + \Delta_0^2\left(1 - 2\frac{\pi_0}{\hat\pi_0} + \frac{\pi_0}{\hat\pi_0^2}\right) -2 \Delta_1\Delta_0\left(1- \frac{\pi_1}{\hat\pi_1} - \frac{\pi_0}{\hat\pi_0}\right) \right\} \\
    &= \E_X\left\{\Delta_1^2\delta_1^2 + \Delta_1^2\left(\frac{\pi_1(1-\pi_1)}{\hat\pi_1^2}\right) + \Delta_0^2\delta_0^2 + \Delta_0^2\left(\frac{\pi_0(1-\pi_0)}{\hat\pi_0^2}\right) -2 \Delta_1\Delta_0\delta_1\delta_0 + 2\Delta_1\Delta_0\frac{\pi_1\pi_0}{\hat\pi_1\hat\pi_0} \right\} \\
    &= \E_X\left\{(\Delta_1\delta_1 - \Delta_0\delta_0)^2 + \Delta_1^2\left(\frac{\pi_1\pi_0}{\hat\pi_1^2}\right) + \Delta_0^2\left(\frac{\pi_1\pi_0}{\hat\pi_0^2}\right) + 2\Delta_1\Delta_0\frac{\pi_1\pi_0}{\hat\pi_1\hat\pi_0} \right\} \\
    &= \E_X\left\{(\Delta_1\delta_1 - \Delta_0\delta_0)^2 + \Delta_1^2(1-\delta_1)^2\left(\frac{\pi_0}{\pi_1}\right) + \Delta_0^2(1-\delta_0)^2\left(\frac{\pi_1}{\pi_0}\right) + 2\Delta_1\Delta_0(1-\delta_1)(1-\delta_0) \right\} \\
    &= \E_X\left\{(\Delta_1\delta_1 - \Delta_0\delta_0)^2\right\} + \E_X\left[\left\{\Delta_1(1-\delta_1)\sqrt{\pi_0/\pi_1} + \Delta_0(1-\delta_0)\sqrt{\pi_1/\pi_0} \right\}^2\right],
\end{align*}
where the third equality follows by iterated expectation. The last term can be expanded as
\begin{align*}
    \E\left[ (\mu_1 -\mu_0) \left\{ \Delta_1\left(1- \frac{\I_1}{\hat\pi_1}\right) - \Delta_0\left(1 - \frac{\I_0}{\hat\pi_0}\right) \right\} \right] &= \E_X\left[ (\mu_1 -\mu_0) \left\{ \Delta_1\left(1- \frac{\pi_1}{\hat\pi_1}\right) - \Delta_0\left(1 - \frac{\pi_0}{\hat\pi_0}\right) \right\} \right] \\
    &= \E_X\{ (\mu_1 -\mu_0) (\Delta_1\delta_1 - \Delta_0\delta_0)\},
\end{align*}
where the first equality follows by iterated expectation. Combining the above results, we have
\begin{align*}
    \E\{(\hat\varphi_1 - \hat\varphi_0)^2\} 
    &= \E\left\{(\mu_1 -\mu_0)^2 \right\} + 2\E_X\{(\mu_1 -\mu_0) (\Delta_1\delta_1 - \Delta_0\delta_0)\} +\E_X\left\{(\Delta_1\delta_1 - \Delta_0\delta_0)^2\right\} \\
    &\quad + \E\left\{(\epsilon_1 -\epsilon_0)^2 \right\} + \E_X\left[\left\{\Delta_1(1-\delta_1)\sqrt{\pi_0/\pi_1} + \Delta_0(1-\delta_0)\sqrt{\pi_1/\pi_0} \right\}^2\right] \\
    &= \E\left\{(\mu_1 + \Delta_1\delta_1 -\mu_0 - \Delta_0\delta_0)^2 \right\} + \E\left\{(\epsilon_1 -\epsilon_0)^2 \right\} \\
    &\quad + \E_X\left[\left\{\Delta_1(1-\delta_1)\sqrt{\pi_0/\pi_1} + \Delta_0(1-\delta_0)\sqrt{\pi_1/\pi_0} \right\}^2\right].
\end{align*}
Having derived $\E\{(\hat\varphi_1 - \hat\varphi_0)^2\}$, we can now compute the variance of the AIPW estimator. From the bias analysis in the previous section, we know that $\E(\hat\varphi_1 - \hat\varphi_0) = \E(\mu_1 + \Delta_1 \delta_1 -\mu_0 - \Delta_0 \delta_0)$. Thus, the variance of the AIPW estimator is
\begin{align*}
    \Var(\hat\psi_\mathrm{AIPW}) &= \frac{1}{n}\Var(\hat\varphi_1 - \hat\varphi_0) \\
    &= \frac{1}{n}\Big(\E\{(\hat\varphi_1 - \hat\varphi_0)^2\} - \E(\hat\varphi_1 - \hat\varphi_0)^2\Big) \\
    &= \frac{1}{n}\Bigg( \Var\left(\mu_1 + \Delta_1\delta_1 -\mu_0 - \Delta_0\delta_0\right) + \E\left\{(\epsilon_1 -\epsilon_0)^2 \right\} \\
    &\quad + \E_X\left[\left\{\Delta_1(1-\delta_1)\sqrt{\pi_0/\pi_1} + \Delta_0(1-\delta_0)\sqrt{\pi_1/\pi_0} \right\}^2\right] \Bigg).
\end{align*}

\textbf{IPW.} To derive the variance of $\hat\psi_\mathrm{IPW}$, we proceed in an analogous fashion by first analyzing the second moment.
\begin{align*}
    \E\left\{\left(\frac{\I_1}{\hat\pi_1}\phi - \frac{\I_0}{\hat\pi_0}\phi\right)^2\right\}
    &= \E\left\{\left(\epsilon_1 - \epsilon_0 + \frac{\I_1}{\hat\pi_1}\mu_1 - \frac{\I_0}{\hat\pi_0}\mu_0 \right)^2\right\} \\
    &= \E\left[\left\{ (\epsilon_1 - \epsilon_0) + (\mu_1 - \mu_0) - \mu_1 \left( 1-\frac{\I_1}{\hat\pi_1} \right) +\mu_0\left(1- \frac{\I_0}{\hat\pi_0}\right) \right\}^2\right] \\
    &= \E\left\{(\epsilon_1 - \epsilon_0)^2\right\} + \E\left\{(\mu_1 - \mu_0)^2\right\} + \E\left[\left\{\mu_1 \left( 1-\frac{\I_1}{\hat\pi_1} \right) - \mu_0\left(1- \frac{\I_0}{\hat\pi_0}\right) \right\}^2\right] \\
    &\quad - 2\E\left[(\mu_1 - \mu_0) \left\{\mu_1 \left( 1-\frac{\I_1}{\hat\pi_1} \right) - \mu_0\left(1- \frac{\I_0}{\hat\pi_0}\right) \right\}\right],
\end{align*}
where the expectation of cross products of $\epsilon_1 - \epsilon_0$ is 0 as a consequence of iterated expectation and $\E(\epsilon_a \mid X)=0$. Note that the third and the last terms have the same structure as the corresponding terms that appeared in the derivation of the AIPW estimator. Thus, the previous results can be directly applied, and we have
\[
\E\left[\left\{\mu_1 \left( 1-\frac{\I_1}{\hat\pi_1} \right) - \mu_0\left(1- \frac{\I_0}{\hat\pi_0}\right) \right\}^2\right] = \E_X\left\{(\mu_1\delta_1 - \mu_0\delta_0)^2\right\} + \E_X\left[\left\{\mu_1(1-\delta_1)\sqrt{\pi_0/\pi_1} + \mu_0(1-\delta_0)\sqrt{\pi_1/\pi_0}\right\}^2\right],
\]
and
\[
\E\left[(\mu_1 - \mu_0) \left\{\mu_1 \left( 1-\frac{\I_1}{\hat\pi_1} \right) - \mu_0\left(1- \frac{\I_0}{\hat\pi_0}\right) \right\}\right] = \E_X\{ (\mu_1 -\mu_0) (\mu_1\delta_1 - \mu_0\delta_0)\}.
\]
Combining the above results, we obtain
\begin{align*}
    \E\left\{\left(\frac{\I_1}{\hat\pi_1}\phi - \frac{\I_0}{\hat\pi_0}\phi\right)^2\right\} 
    &= \E\left\{(\mu_1 - \mu_0)^2\right\} - 2\E_X\{ (\mu_1 -\mu_0) (\mu_1\delta_1 - \mu_0\delta_0)\} + \E_X\left\{(\mu_1\delta_1 - \mu_0\delta_0)^2\right\} \\
    &\quad + \E\left\{(\epsilon_1 - \epsilon_0)^2\right\} + \E_X\left[\left\{\mu_1(1-\delta_1)\sqrt{\pi_0/\pi_1} + \mu_0(1-\delta_0)\sqrt{\pi_1/\pi_0}\right\}^2\right] \\
    &= \E\left[\left\{(1-\delta_1)\mu_1 - (1-\delta_0)\mu_0\right\}^2\right] + \E\left\{(\epsilon_1 - \epsilon_0)^2\right\} \\
    &\quad + \E_X\left[\left\{\mu_1(1-\delta_1)\sqrt{\pi_0/\pi_1} + \mu_0(1-\delta_0)\sqrt{\pi_1/\pi_0}\right\}^2\right].
\end{align*}
From the bias analysis in the previous section, we know that $\E\left(\frac{\I_1}{\hat\pi_1}\phi - \frac{\I_0}{\hat\pi_0}\phi\right) = \E\left\{(1-\delta_1)\mu_1 - (1-\delta_0)\mu_0\right\}$. Thus, the variance of the IPW estimator is
\begin{align*}
    \Var(\hat\psi_\mathrm{IPW}) &= \frac{1}{n}\Var\left(\frac{\I_1}{\hat\pi_1}\phi - \frac{\I_0}{\hat\pi_0}\phi\right) \\
    &= \frac{1}{n}\Bigg(\E\left\{\left(\frac{\I_1}{\hat\pi_1}\phi - \frac{\I_0}{\hat\pi_0}\phi\right)^2\right\} -\E\left(\frac{\I_1}{\hat\pi_1}\phi - \frac{\I_0}{\hat\pi_0}\phi\right)^2\Bigg) \\
    &= \frac{1}{n}\Bigg(\Var\left\{(1-\delta_1)\mu_1 - (1-\delta_0)\mu_0\right\} + \E\left\{(\epsilon_1 - \epsilon_0)^2\right\} \\
    &\quad + \E_X\left[\left\{\mu_1(1-\delta_1)\sqrt{\pi_0/\pi_1} + \mu_0(1-\delta_0)\sqrt{\pi_1/\pi_0}\right\}^2\right] \Bigg).
\end{align*}
We observe that $\Var(\hat\psi_\mathrm{AIPW})$ and $\Var(\hat\psi_\mathrm{IPW})$ have similar structures. The first term will generally be of similar magnitude for both estimator when $\hat\pi \approx \pi$, but it may be smaller for AIPW if $\hat\mu \approx \mu$ is simultaneously satisfied. The second term is identical. In the third term, the only difference is that AIPW involves $\Delta$, whereas IPW replaces it with $\mu$. In general, $\Delta$ will be of smaller magnitude than $\mu$, and IPW is therefore more likely to exhibit larger variance than AIPW.

\textbf{PI.} The variance of $\hat\psi_\mathrm{PI}$ is straightforward:
\begin{align*}
    \Var(\hat\psi_\mathrm{PI}) &= \frac{1}{n}\Var(\hat\mu_1 - \hat\mu_0) \\
    &= \frac{1}{n}\Var(\mu_1 +\Delta_1 - \mu_0 - \Delta_0).
\end{align*}
Unlike AIPW and IPW estimators, the variance of PI contains only a single term, which corresponds to the first component in the AIPW and IPW variance decompositions. Therefore, PI estimator will generally have smaller variance compared to AIPW and IPW estimators. However, as discussed in the bias analysis, PI estimators are susceptible to high bias if the regression model is misspecified.

\section{Proofs} \label{app:proof}

\textbf{Notation.} We let $\Vert x \Vert_q$ denote $L_q$ norm for any fixed vector $x$. For a given function $f$, we use the notation
$\Vert f \Vert_{\Pb,q} = \left[\Pb (\vert f \vert^q) \right]^{1/q} = \left[\int \vert f(z)\vert^q d\Pb(z)\right]^{1/q}$
as the $L_{q}(\mathbb{P})$-norm of $f$. Also, we let ${\Pb}$ denote the conditional expectation given the sample operator $\hat{f}$, as in $\mathbb{P}(\hat{f})=\int\hat{f}(z)d\mathbb{P}(z)$. Notice that $\Pb(\hat{f})$ is random only if $\hat{f}$ depends on samples, in which case $\Pb(\hat{f}) \neq \E(\hat{f})$. Otherwise $\Pb$ and $\E$ can be used exchangeably. For example, if $\hat{f}$ is constructed on a separate (training) sample $\mathsf{D}^n = (Z_1,...,Z_n)$, then ${\Pb}\left\{\hat{f}(Z)\right\} = \E\left\{\hat{f}(Z) \mid \mathsf{D}^n \right\}$ for a new observation $Z \sim \Pb$. We let $\Pn$ denote the empirical measure as in $\Pn(f)=\Pn\{(f(Z)\}=\frac{1}{n}\sum_{i=1}^n f(Z_i)$. Lastly, we use the shorthand $a_n \lesssim b_n$ to denote $a_n \leq \mathsf{c} b_n$ for some universal constant $\mathsf{c} > 0$. 

\subsection{Proof of Lemma \ref{lem:Lipschitz-Silhouette}}

\begin{proof}
For notational convenience, we denote $\phi(t) := \phi(t; \mathcal{D}, r)$ for any fixed persistence diagram $\mathcal{D}$ and the power parameter $r$. Fix $p=(a,b)\in\mathcal D$.
The tent function $\Lambda_p$ is linear on $[a,(a+b)/2]$ with slope $+1$
and on $[(a+b)/2,b]$ with slope $-1$. Hence, for all $s,t\in\mathbb R$,
\begin{equation*}
\left\vert\Lambda_p(s)-\Lambda_p(t)\right\vert\le |s-t|.
\end{equation*}
Namely, each $\Lambda_{p}$ is 1–Lipschitz.

Rewrite the weighted silhouette as
$\phi(t)=\sum_{j=1}^{N}\alpha_j\Lambda_{p_j}(t)$ with weights $\alpha_j=w_j\bigl/\sum_{k}w_k\in(0,1)$ satisfying $\sum_{j}\alpha_j=1$. Then it follows that
\[
\left\vert\phi(s)-\phi(t)\right\vert
=\left\vert\sum_{j}\alpha_j
        \bigl(\Lambda_{p_j}(s)-\Lambda_{p_j}(t)\bigr)
 \right\vert
\le\sum_{j}\alpha_j\,|s-t|
=|s-t|,
\]
which shows the silhouette functions is also Lipschitz with constant $L=1$.

Even when the persistence diagram $\mathcal{D}$ is random, the inequality holds pathwise; thus, taking expectations yields
\[
\E\!\Bigl[\sup_{|s-t|\le\delta}|\phi(s)-\phi(t)|\Bigr]
\;\le\; \delta .
\]
\end{proof}

\subsection{Proof of Theorem \ref{thm:ipw-weak-convergence}} \label{app:proof-thm-ipw-weak-convergence}

\begin{proof}
Because $\pi$ is known, we have
\[
\widehat{\psi}_{\mathrm{IPW},d}(t)=\Pn\{\xi_d(Z,t)\},
\qquad
\psi_d(t)=\Pb\{\xi_d(Z,t)\},
\qquad t\in\mathbb{T}.
\]
Hence,
\[
\sqrt{n}\bigl\{\widehat{\psi}_{\mathrm{IPW},d}(t)-\psi_d(t)\bigr\}
=
\sqrt{n}(\Pn-\Pb)\{\xi_d(Z,t)\}
=
\mathbb{G}_n\,\xi_d(\cdot,t),
\]
where $\mathbb{G}_n\coloneqq \sqrt{n}(\Pn-\Pb)$ denotes the empirical process.

By the given condition that the indexed class $\{\xi_d(\cdot,t):t\in\mathbb{T}\}$ is $\Pb$-Donsker (or satisfies any other standard sufficient condition for a functional central limit theorem in $\ell^\infty(\mathbb{T})$), the functional central limit theorem yields
\[
\bigl(\mathbb{G}_n\,\xi_d(\cdot,t)\bigr)_{t\in\mathbb{T}}
\rightsquigarrow
\bigl(\mathbb{G}_d(t)\bigr)_{t\in\mathbb{T}}
\quad\text{in}\quad \ell^\infty(\mathbb{T}),
\]
where $\mathbb{G}_d$ is a tight, mean-zero Gaussian process indexed by $\mathbb{T}$.

On the other hand, for any $s,t\in\mathbb{T}$, we have
\[
\cov\{\mathbb{G}_d(s),\mathbb{G}_d(t)\}
=
\cov\bigl(\xi_d(Z,s),\xi_d(Z,t)\bigr),
\]
which follows from the standard covariance characterization of the Gaussian limit in the Donsker theorem, together with $\sup_{t\in\mathbb{T}}\var\{\xi_d(Z,t)\}<\infty$ \citep[][Theorem 2.5.2]{van1996weak}.

Combining the display above with the empirical-process representation give the desired result:
\[
\sqrt{n}\bigl\{\widehat{\psi}_{\mathrm{IPW},d}(t)-\psi_d(t)\bigr\}
\rightsquigarrow
\mathbb{G}_d(t)
\quad\text{in}\quad \ell^\infty(\mathbb{T}).
\]
\end{proof}

\begin{remark}
\label{rmk:donsker-plausible}
A sufficient condition for applying a functional central limit theorem to the empirical process indexed by $\{\xi_d(\cdot,t):t\in\mathbb{T}\}$ is that this class is $\Pb$-Donsker, which is implied by standard bracketing-entropy bounds under an $L_2(\Pb)$ envelope and suitable measurability. In our setting, these conditions are plausible because $t\mapsto \phi_{i,d}(t)$ is Lipschitz, which transfers to Lipschitz control of $t\mapsto \xi_d(Z,t)$ and thus yields small bracketing entropy, as formalized in the following lemma.
\end{remark}

\begin{lemma}[A bracketing-entropy sufficient condition for $\{\xi_d(\cdot,t)\}$]
\label{lem:bracketing-xi}
Assume $\mathbb{T}\subset\mathbb{R}$ is compact with diameter ${\rm diam}(\mathbb{T})<\infty$.
Then, for all $s,t\in\mathbb{T}$,
\[
|\xi_d(Z,s)-\xi_d(Z,t)|
\le
\Bigl(\frac{1}{\kappa}+\frac{1}{\kappa}\Bigr)\,|s-t|
=
\frac{2}{\kappa}|s-t|
\quad\text{a.s.}
\]
Moreover, letting
$
F(Z)\coloneqq \sup_{t\in\mathbb{T}}|\xi_d(Z,t)|,
$
we have $F\in L_2(\Pb)$ provided $\sup_{t\in\mathbb{T}}\var\{\xi_d(Z,t)\}<\infty$ and $\sup_{t\in\mathbb{T}}|\Pb\,\xi_d(Z,t)|<\infty$.
In that case, the bracketing numbers satisfy
\[
N_{[]}\!\left(\varepsilon\|F\|_{\Pb,2},\,\{\xi_d(\cdot,t):t\in\mathbb{T}\},\,L_2(\Pb)\right)
\;\le\;
C\,\varepsilon^{-1},
\qquad 0<\varepsilon\le 1,
\]
for a constant $C$ depending only on ${\rm diam}(\mathbb{T})$ and $\kappa$.
Consequently,
\[
\int_0^1 \sqrt{\log N_{[]}(\varepsilon\|F\|_{\Pb,2},\{\xi_d(\cdot,t)\},L_2(\Pb))}\,d\varepsilon
\;<\;\infty,
\]
so $\{\xi_d(\cdot,t):t\in\mathbb{T}\}$ is $\Pb$-Donsker by a standard bracketing-entropy criterion.
\end{lemma}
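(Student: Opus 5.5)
The plan is to verify the three ingredients of a bracketing-entropy Donsker criterion in turn: a pathwise Lipschitz property in the index $t$, a square-integrable envelope, and a polynomial bracketing bound. Throughout I read $\kappa$ as the strong-positivity constant, $\kappa\le\pi(X)\le 1-\kappa$ a.s.; this is the quantitative form of (C3) implicit in the statement.

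\textbf{Step 1 (Lipschitz bound).} Write $\xi_d(Z,t)=w(Z)\,\phi_d(t)$ with $w(Z)=A/\pi(X)-(1-A)/(1-\pi(X))$. Under strong positivity each of the two terms of $w$ is bounded by $1/\kappa$, so $|w(Z)|\le 2/\kappa$; in fact only one term is nonzero, giving the sharper bound $1/\kappa$. Lemma~\ref{lem:Lipschitz-Silhouette} says each silhouette path is $1$-Lipschitz. Hence pathwise
\[
|\xi_d(Z,s)-\xi_d(Z,t)|=|w(Z)|\,|\phi_d(s)-\phi_d(t)|\le \tfrac{2}{\kappa}|s-t| .
\]

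\textbf{Step 2 (envelope).} Fix any $t_0\in\mathbb{T}$. Step~1 gives
\[
F(Z)\le |\xi_d(Z,t_0)|+\tfrac{2}{\kappa}\,{\rm diam}(\mathbb{T}).
\]
Moreover $\E\{\xi_d(Z,t_0)^2\}=\var\{\xi_d(Z,t_0)\}+\{\Pb\,\xi_d(Z,t_0)\}^2$, and both terms are finite by the two stated hypotheses. So $F\in L_2(\Pb)$. Measurability is harmless: $t\mapsto\xi_d(Z,t)$ is continuous, so the supremum can be taken over a countable dense subset of $\mathbb{T}$.

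\textbf{Step 3 (brackets).} This is the Lipschitz-in-parameter construction (van der Vaart and Wellner, Thm.~2.7.11).
\begin{itemize}
\item Cover $\mathbb{T}$ by points $t_1,\dots,t_N$ with mesh $\eta$, where $N\le {\rm diam}(\mathbb{T})/\eta+1$.
\item Use the brackets $[\xi_d(\cdot,t_j)-\tfrac{2}{\kappa}\eta,\ \xi_d(\cdot,t_j)+\tfrac{2}{\kappa}\eta]$. By Step~1 they cover the whole class.
\item Each bracket has $L_2(\Pb)$-width $4\eta/\kappa$.
\item Choose $\eta=\varepsilon\kappa\|F\|_{\Pb,2}/4$. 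Then $N_{[]}(\varepsilon\|F\|_{\Pb,2})\le 4\,{\rm diam}(\mathbb{T})/(\kappa\|F\|_{\Pb,2}\varepsilon)+1\le C\varepsilon^{-1}$ for $\varepsilon\le1$.
\end{itemize}

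\textbf{Step 4 (entropy integral and conclusion).} Since $\int_0^1\sqrt{\log(C/\varepsilon)}\,d\varepsilon<\infty$, the bracketing integral is finite. The Donsker property then follows from the bracketing CLT (van der Vaart and Wellner, Thm.~2.5.6).

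The main obstacle is bookkeeping of the constant in Step~3, not any analytic step. The constant $C$ involves $1/\|F\|_{\Pb,2}$, which conflicts with the claim that $C$ depends only on ${\rm diam}(\mathbb{T})$ and $\kappa$. I would try first to resolve this with a lower bound on $\|F\|_{\Pb,2}$. Silhouettes are nonnegative, and the case $F\equiv0$ is trivially Donsker. Otherwise $\|F\|_{\Pb,2}$ is a fixed positive constant of $\Pb$. So either it is absorbed into $C$, or, as a cleaner alternative, one works with the unnormalized bracketing number $N_{[]}(\varepsilon,\cdot,L_2(\Pb))$, for which the dependence is genuinely only on ${\rm diam}(\mathbb{T})$ and $\kappa$.
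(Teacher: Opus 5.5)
Your proposal is correct and follows the paper's own argument: bound the inverse-propensity weight by $2/\kappa$, use the $1$-Lipschitz property of silhouettes, place a grid on $\mathbb{T}$, and take constant-width brackets around each $\xi_d(\cdot,t_j)$, which gives $O(\varepsilon^{-1})$ brackets and a finite entropy integral. Your write-up is somewhat more careful than the paper's: you verify $F\in L_2(\Pb)$, which the paper's proof never does; your mesh $\varepsilon\kappa\|F\|_{\Pb,2}/4$ gives brackets of width exactly $\varepsilon\|F\|_{\Pb,2}$, whereas the paper's mesh $\varepsilon\kappa\|F\|_{\Pb,2}/2$ gives width $2\varepsilon\|F\|_{\Pb,2}$; and you rightly point out that $C$ also depends on $\|F\|_{\Pb,2}$, a dependence present in the paper's count $m\le 1+{\rm diam}(\mathbb{T})/\delta$ but hidden behind $\lesssim$.
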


\begin{proof}
Fix $s,t\in\mathbb{T}$.
By definition,
\[
\xi_d(Z,t)
=
\left\{\frac{A}{\pi(X)}-\frac{1-A}{1-\pi(X)}\right\}\phi_{i,d}(t).
\]
Under the positivity condition $\kappa\le \pi(X)\le 1-\kappa$ a.s., the prefactor is bounded in absolute value by $1/\kappa$ when $A=1$ and by $1/\kappa$ when $A=0$, hence by $2/\kappa$ uniformly.
By virtue of Lemma \ref{lem:Lipschitz-Silhouette}, we obtain
\[
|\xi_d(Z,s)-\xi_d(Z,t)|
\le
\left|\frac{A}{\pi(X)}-\frac{1-A}{1-\pi(X)}\right|
\cdot
|\phi_{i,d}(s)-\phi_{i,d}(t)|
\le
\frac{2}{\kappa}|s-t|.
\]
Now cover $\mathbb{T}$ by a grid $\{t_j\}_{j=1}^m$ with mesh size at most
$\delta=\varepsilon\|F\|_{\Pb,2}\kappa/2$, so that $m\le 1+{\rm diam}(\mathbb{T})/\delta$.
For each $t\in\mathbb{T}$ choose $j(t)$ with $|t-t_{j(t)}|\le\delta$ and define brackets
\[
\ell_t(Z)=\xi_d(Z,t_{j(t)})-\frac{2}{\kappa}\delta,
\qquad
u_t(Z)=\xi_d(Z,t_{j(t)})+\frac{2}{\kappa}\delta.
\]
Then $\ell_t(Z)\le \xi_d(Z,t)\le u_t(Z)$ a.s. and
$\|u_t-\ell_t\|_{\Pb,2}\le (4/\kappa)\delta = 2\varepsilon\|F\|_{\Pb,2}$.
Thus $N_{[]}(\varepsilon\|F\|_{\Pb,2},\cdot,L_2(\Pb))\le m \lesssim \varepsilon^{-1}$, proving the entropy bound and hence the finite bracketing integral.
\end{proof}

\subsection{Proof of Theorem \ref{thm:gausian-dr} under Assumption \ref{assumption:A5'}} \label{appsec:proof-thm-gaussian-dr}

Recall that 
$\varphi$ is the uncentered EIF for the target functional in \eqref{def:target-effects}), satisfying a Von Mises expansion:
\begin{align} \label{eq:vonMises}
\psi(t;\widehat{\Pb}) - \psi(t;\Pb) = \int \varphi(t, z; \eta) \, d(\widehat{\Pb} - \Pb)(z) + R_2(\widehat{\Pb}, \Pb),    
\end{align}
where the second-order remainder term is specified as
\begin{equation} \label{eq:remainder}
\begin{aligned}
    R_2(\widehat{\Pb}, \Pb) &= \int \left\{ \frac{1}{\pi(X)} - \frac{1}{\hat{\pi}(X)} \right\} \left\{ \mu_1(t, X; d) - \hat{\mu}_1(t, X; d) \right\} \pi(X) \, d\mathbb{P} \\
    &\quad - \int \left\{ \frac{1}{1 - \pi(X)} - \frac{1}{1 - \hat{\pi}(X)} \right\} \left\{ \mu_0(t, X; d) - \hat{\mu}_0(t, X; d) \right\} \left\{ 1 - \pi(X) \right\} \, d\mathbb{P}, \, \forall t \in \mathbb{T}.
\end{aligned}
\end{equation}

First, we prove the asymptotic normality of finite-dimensional causal effect projections as stated in the following lemma. The proof follows the argument of Theorem 5.31 in \citet{van2002semiparametric}; similar techniques to those have also been employed in \citet{kennedy2019nonparametric}.

\begin{lemma}[Asymptotic Normality for Discrete-Time Estimators]
\label{lemma:asymptotic-normality-causal}
Let \( k \in \mathbb{N} \) and  \( t_1, \dots, t_k \in \mathcal{T} \) be fixed. Throughout in the proof, we write $ \widehat{\psi}_d=\widehat{\psi}_{\mathrm{AIPW},d}$. Define
\[
\widehat{\boldsymbol{\psi}}_{t_1,\dots,t_k}
=
\bigl(\widehat{\psi}_d(t_1),\dots,\widehat{\psi}_d(t_k)\bigr)^\top,
\qquad
\boldsymbol{\psi}_{t_1,\dots,t_k}
=
\bigl(\psi_d(t_1),\dots,\psi_d(t_k)\bigr)^\top.
\]
Under Assumptions \ref{assumption:A1}-\ref{assumption:A5}, we have
\[
\sqrt{n} \left( \widehat{\boldsymbol{\psi}}_{t_1,\dots,t_k} - \boldsymbol{\psi}_{t_1,\dots,t_k} \right) \overset{d}{\longrightarrow} {N}(0, \Sigma_{t_1,\dots,t_k}),
\]
where the covariance matrix is given by
\[
\Sigma_{t_1,\dots,t_k} = \mathbb{E} \left[ \boldsymbol{\varphi}_{t_1,\dots,t_k}(\mathcal{D}) \boldsymbol{\varphi}_{t_1,\dots,t_k}(\mathcal{D})^\top \right],
\]
where we write \( \boldsymbol{\varphi}_{t_1,\dots,t_k}(\mathcal{D}) = \left( \varphi(t_1; \mathcal{D}), \dots, \varphi(t_k; \mathcal{D}) \right)^\top \).
\end{lemma}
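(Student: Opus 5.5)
The plan is to establish the finite-dimensional result by reducing the vector statement to a scalar one through the Cramér--Wold device and then using the standard three-term decomposition of the AIPW estimator. Fix $c=(c_1,\dots,c_k)^\top\in\mathbb{R}^k$ and write $\varphi_t=\varphi_d(t,Z;\eta)$ and $\widehat\varphi_t=\varphi_d(t,Z;\widehat\eta)$. For each $t_j$ we have the decomposition
\begin{align*}
\widehat\psi_d(t_j)-\psi_d(t_j)
&=(\Pn-\Pb)\varphi_{t_j}
+(\Pn-\Pb)\bigl(\widehat\varphi_{t_j}-\varphi_{t_j}\bigr)
+\Pb\bigl(\widehat\varphi_{t_j}\bigr)-\psi_d(t_j),
\end{align*}
where $\Pb(\widehat\varphi)$ integrates only over a new observation, with the nuisance estimates held fixed. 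Taking the linear combination $\sum_j c_j(\cdot)$ and multiplying by $\sqrt n$, the first term is $\sqrt n(\Pn-\Pb)\{\sum_j c_j\varphi_{t_j}\}$. This is a centered i.i.d. average with finite variance, because \ref{assumption:A1}, positivity and the boundedness of silhouettes (each tent is bounded on the compact $\mathbb{T}$) make $\varphi_t$ square integrable. The classical CLT then gives the limit $N(0,c^\top\Sigma_{t_1,\dots,t_k}c)$, with $\Sigma$ as stated, since $\varphi_t$ is centered at $\psi_d(t)$. The lemma follows from Slutsky's theorem once the other two terms are shown to be $o_\Pb(n^{-1/2})$.

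For the empirical-process term we condition on the training sample $\widehat{\Pb}$ and use \ref{assumption:A2}. Conditionally, $\widehat\varphi_{t_j}-\varphi_{t_j}$ is a fixed function, so $\sqrt n(\Pn-\Pb)(\widehat\varphi_{t_j}-\varphi_{t_j})$ has conditional mean zero and conditional variance at most $\|\widehat\varphi_{t_j}-\varphi_{t_j}\|_{\Pb,2}^2$. Chebyshev's inequality applied conditionally, followed by bounded convergence for the unconditional probability (as in Kennedy's sample-splitting lemma), shows this term is $O_\Pb(\|\widehat\varphi_{t_j}-\varphi_{t_j}\|_{\Pb,2})$. By \ref{assumption:A3} it is therefore $o_\Pb(1)$. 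Since $k$ is finite, this holds jointly across all $j$.

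For the bias term we write $\Pb(\widehat\varphi_t)-\psi_d(t)$ explicitly. Iterating expectations over $A$ given $X$ yields exactly the remainder $R_2$ in \eqref{eq:remainder}:
\[
\int\Bigl(1-\tfrac{\pi}{\hat\pi}\Bigr)(\hat\mu_1-\mu_1)\,d\Pb-\int\Bigl(1-\tfrac{1-\pi}{1-\hat\pi}\Bigr)(\hat\mu_0-\mu_0)\,d\Pb .
\]
Using $|1-\pi/\hat\pi|=|\hat\pi-\pi|/\hat\pi\le\|1/\hat\pi\|_\infty|\hat\pi-\pi|$ (by \ref{assumption:A1}) and the analogous bound for the second integral, Cauchy--Schwarz bounds the remainder by a constant times $\|\hat\pi-\pi\|_{\Pb,2}\sum_a\|\hat\mu_a(t,X;d)-\mu_a(t,X;d)\|_{\Pb,2}$. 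This is $o_\Pb(n^{-1/2})$ by \ref{assumption:A4}, pointwise at each $t_j$.

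The main obstacle is careful bookkeeping rather than depth. Two points need attention: the conditional-to-unconditional step for the empirical-process term, and confirming that the algebra for $\Pb(\widehat\varphi_t)-\psi_d(t)$ really produces the product structure. The latter relies on (C1)--(C2) through $\E\{A\phi_d(t)\mid X\}=\pi\mu_1$. Notably, \ref{assumption:A5} is not needed for this finite-dimensional statement; it enters only later, in the asymptotic tightness step of Theorem~\ref{thm:gausian-dr}.
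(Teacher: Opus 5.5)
Your proposal is correct and follows essentially the paper's argument. It uses the same three-term decomposition, a CLT for the leading term, the sample-splitting Chebyshev bound (Kennedy's Lemma~2) combined with \ref{assumption:A3} for the empirical-process term, and \ref{assumption:A4} for the product-form remainder $R_2$, closed by Slutsky; the only cosmetic difference is that you reduce to scalars via Cram\'er--Wold, whereas the paper invokes the multivariate CLT and a multidimensional Chebyshev inequality directly. One small point: the limit covariance your CLT actually delivers is $\Cov(\boldsymbol{\varphi}_{t_1,\dots,t_k})$, which equals the stated $\E[\boldsymbol{\varphi}\boldsymbol{\varphi}^\top]$ only after centering the uncentered EIF $\varphi$ at $\psi_d$, a notational slip already present in the lemma statement.
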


\begin{proof}
For notational simplicity, let $\boldsymbol{\varphi}=\boldsymbol{\varphi}_{t_1,\dots,t_k}(\mathcal{D})$ and  $\widehat{\boldsymbol{\varphi}}=\widehat{\boldsymbol{\varphi}}_{t_1,\ldots,t_k}(\mathcal{D}) = \left( \widehat{\varphi}(t_1; \mathcal{D}), \ldots, \widehat{\varphi}(t_k; \mathcal{D}) \right)^\top$. Recall that $\widehat{\varphi}(t; \mathcal{D})$ is defined by
\begin{align*}
\widehat{\varphi}(t; \mathcal{D}) = \hat{\mu}_1(t, X; d) - \hat{\mu}_0(t, X; d) + \left\{ \frac{A}{\hat{\pi}(X)} - \frac{1-A}{1-\hat{\pi}(X)} \right\} \left\{ \phi(t; \D_d) - \hat{\mu}_A(t, X; d) \right\},
\end{align*}
where all the nuisance estimators are constructed on a separate, independent sample (Assumption \ref{assumption:A2}).
Then, consider the following decomposition:
\begin{align} \label{proof-eq-1}
\sqrt{n} \left( \widehat{\boldsymbol{\psi}}_{t_1,\ldots,t_k} - \boldsymbol{\psi}_{t_1,\ldots,t_k} \right) &= 
\sqrt{n} (\mathbb{P}_n - \mathbb{P})\boldsymbol{\varphi} + \sqrt{n} (\mathbb{P}_n - \mathbb{P})(\widehat{\boldsymbol{\varphi}} - \boldsymbol{\varphi}) + \Pb(\widehat{\boldsymbol{\varphi}} - \boldsymbol{\varphi}).
\end{align}
By the multivariate central limit theorem, 
the first term in \eqref{proof-eq-1} converges to a multivariate Normal
distribution with mean 0 and covariance matrix equal to $\Sigma_{t_1,\dots,t_k}$. The third term in \eqref{proof-eq-1} is simply $\sqrt{n}R_2$ where $R_2$ is the remainder term specified in \eqref{eq:remainder}. Under the rate condition in Assumption \ref{assumption:A4}, it follows that $\sqrt{n}R_2=o_\Pb(1)$. Thus, it suffices to show that the second empirical process term is negligible, i.e., of order $o_\Pb(1)$. By the consistency condition \ref{assumption:A3} and the triangle inequality,
\begin{align*}
    \Vert \widehat{\boldsymbol{\varphi}} - \boldsymbol{\varphi} \Vert_{\Pb,2} = O\left( \sum_{j=1}^k \Vert \widehat{{\varphi}}(t_j) - {\varphi(t_j)} \Vert_{\Pb,2}\right) =o_\Pb(1).
\end{align*}
Applying the multidimensional Chebyshev's inequality to the proof of Lemma 2 in \citet{kennedy2020sharp}, it is immediate to get
\begin{align*}
        (\mathbb{P}_n - \mathbb{P})(\widehat{\boldsymbol{\varphi}} - \boldsymbol{\varphi}) = O_\Pb\left( \frac{\Vert \widehat{\boldsymbol{\varphi}} - \boldsymbol{\varphi} \Vert_{\Pb,2}}{\sqrt{n}} \right) = o_\Pb(n^{-1/2}).
\end{align*}
Putting all the pieces together into \eqref{proof-eq-1}, the result follows by Slutsky’s theorem.
\end{proof}

We are now in a position to prove Theorem~\ref{thm:gausian-dr}. Here, we impose Assumption \ref{assumption:A5'} and follow the proof strategy of \citet{testa2025doubly}. We then show that the same result holds under the weaker conditions from Assumption \ref{assumption:A5}, in Section \ref{appsec:alternative-proof-thm-gaussian}.

\begin{proof}
    First, we show the stochastic equicontinuity of $\widehat{\psi}$:
    \[
\lim_{\delta \to 0} \limsup_{n \to \infty} 
\mathbb{P} \left[ \sup_{|s - t| \leq \delta} 
\left| \widehat{\psi}(s) - \widehat{\psi}(t) \right| \geq \varepsilon \right] = 0,
\]
for every $\varepsilon > 0$.
Fix $\delta>0$ and write
\[
w(\widehat\psi;\delta)=\sup_{|s-t|\le\delta}
\left\vert\widehat\psi(s)-\widehat\psi(t)\right\vert.
\]
For $a\in\{0,1\}$ set
\(
\Delta_a(s,t)=\widehat\mu_a(s,X;d)-\widehat\mu_a(t,X;d)
\)
and
\(
R(s,t)=\phi(s;\mathcal D_d)-\phi(t;\mathcal D_d)
-\bigl\{\widehat\mu_A(s,X;d)-\widehat\mu_A(t,X;d)\bigr\}.
\)
Then
\[
\widehat\psi(s)-\widehat\psi(t)
=\Delta_1(s,t)-\Delta_0(s,t)
+\Bigl\{\tfrac{A}{\widehat\pi(X)}-\tfrac{1-A}{1-\widehat\pi(X)}\Bigr\}
R(s,t).
\]
By Assumption \ref{assumption:A5}, $\Pb \left[\sup_{|s-t|\le\delta}\left\vert \Delta_a(s,t)\right\vert \right]
    \;\le\; L_{\hat \mu}\delta$, $\forall a \in \mathcal{A}$. Since persistent pairs with infinite lifespan are excluded from consideration, we have
$\Vert \phi \Vert_\infty < \infty$. Further, by Lemma \ref{lem:Lipschitz-Silhouette}, $\phi(t)$ is 1-Lipschitz continuous in $t$, yielding $\left\vert \phi(s;\mathcal D_d)-\phi(t;\mathcal D_d) \right\vert \leq |s-t|$. Thus, it follows that $\Pb \left[\sup_{|s-t|\le\delta}\left\vert R(s,t)\right\vert \right]
    \;\le\; (L_{\hat \mu}+1)\delta$. Therefore, by the Markov inequality, for any fixed $\varepsilon > 0$, we have
    \begin{align*}
        \mathbb{P} \left[ w(\widehat{\psi}; \delta) \geq \varepsilon \right]
\leq \frac{\Pb \left[ w(\widehat{\psi}; \delta) \right]}{\varepsilon} 
 \lesssim \frac{(3L_{\hat \mu}+1)\delta}{\varepsilon},
    \end{align*}
where the second inequality follows by the triangle inequality and Assumption \ref{assumption:A1}. Finally, taking $\delta\to 0$ and then $\limsup_{n\to\infty}$ yields
\(
\lim_{\delta\to 0}\limsup_{n\to\infty}
\Pb\bigl[w(\widehat\psi;\delta)\ge\varepsilon\bigr]=0
\)
for every $\varepsilon>0$, as desired. Having established the stochastic equicontinuity of $\widehat{\psi}$, the result follows directly from Lemma~\ref{lemma:asymptotic-normality-causal} and Theorem 7.5 of \citet{billingsley1999convergence}.
\end{proof}

\subsection{Proof of Theorem \ref{thm:gausian-dr} under Assumption \ref{assumption:A5}}\label{appsec:alternative-proof-thm-gaussian}

\begin{proof}
The finite-dimensional convergence argument in Section \ref{appsec:proof-thm-gaussian-dr} is unchanged, so it suffices to establish stochastic equicontinuity.

Fix $\delta>0$ and write
\[
\sup_{|s-t|\le\delta}\left|\widehat\psi(s)-\widehat\psi(t)\right|
\le
\Pn\!\left[\sup_{|s-t|\le\delta}\left|\widehat\varphi_d(s,Z)-\widehat\varphi_d(t,Z)\right|\right].
\]
Decompose $\widehat\varphi_d(t,Z)=\varphi_d(t,Z;\eta)+\{\widehat\varphi_d(t,Z)-\varphi_d(t,Z;\eta)\}$ to obtain
\[
\sup_{|s-t|\le\delta}\left|\widehat\varphi_d(s,Z)-\widehat\varphi_d(t,Z)\right|
\le
\sup_{|s-t|\le\delta}\left|\varphi_d(s,Z;\eta)-\varphi_d(t,Z;\eta)\right|
+2\sup_{u\in\mathbb{T}}\left|\widehat\varphi_d(u,Z)-\varphi_d(u,Z;\eta)\right|.
\]
For the first term, Assumption \ref{assumption:A5} implies that $t\mapsto \mu_a(t,x;d)$ is Lipschitz uniformly in $x$, and Lemma \ref{lem:Lipschitz-Silhouette} gives Lipschitz continuity of $t\mapsto \phi_{i,d}(t)$. Together with the boundedness of $1/\pi(X)$ and $1/\{1-\pi(X)\}$ implied by positivity, this yields a constant $L_\varphi<\infty$ such that
\[
\sup_{|s-t|\le\delta}\left|\varphi_d(s,Z;\eta)-\varphi_d(t,Z;\eta)\right|
\le
L_\varphi\,\delta
\quad\text{a.s.}
\]
For the second term, uniform consistency in Assumption \ref{assumption:A5} and bounded denominators (Assumption \ref{assumption:A1}) imply
\[
\Delta_n
\coloneqq
\sup_{u\in\mathbb{T}}\left|\widehat\varphi_d(u,Z)-\varphi_d(u,Z;\eta)\right|
=
o_\Pb(1).
\]
Therefore,
\[
\sup_{|s-t|\le\delta}\left|\widehat\varphi_d(s,Z)-\widehat\varphi_d(t,Z)\right|
\le
L_\varphi\,\delta + 2\Delta_n.
\]
Choose $\delta=\delta_n\rightarrow0$ such that $L_\varphi\delta_n\le\varepsilon/2$. Then
\[
\Pb\!\left(
\sup_{|s-t|\le\delta_n}\left|\widehat\psi(s)-\widehat\psi(t)\right|>\varepsilon
\right)
\le
\Pb(2\Delta_n>\varepsilon/2)\to 0,
\]
which establishes stochastic equicontinuity. The remaining tightness and conclusion follow as in Section \ref{appsec:proof-thm-gaussian-dr}, completing the proof under Assumption \ref{assumption:A5}.
\end{proof}

The finite-dimensional convergence argument in Section \ref{appsec:proof-thm-gaussian-dr} is unchanged, so it suffices to establish stochastic equicontinuity of the process
$t\mapsto \sqrt{n}\{\widehat\psi(t)-\psi_d(t)\}$ in $\ell^\infty(\mathbb{T})$.

Throughout, recall that the proposed estimator is
\[
\widehat\psi(t)=\Pn\{\widehat\varphi_d(t,Z)\},
\qquad
\widehat\varphi_d(t,Z)\equiv \varphi_d(t,Z;\widehat\eta),
\]
where $\widehat\eta=(\hat\pi,\hat\mu_0,\hat\mu_1)$ is fit on an independent sample from
$\widehat{\Pb}$, where $\Pn\indep\widehat{\Pb}$ by Assumption~\ref{assumption:A2}.
We also write $\varphi_d(t,Z)\equiv \varphi_d(t,Z;\eta)$ for the true EIF
with $\eta=(\pi,\mu_0,\mu_1)$.

Fix $\delta>0$. Since $\widehat\psi(t)=\Pn\{\widehat\varphi_d(t,Z)\}$, we have
\begin{align}
\sup_{|s-t|\le\delta}\bigl|\widehat\psi(s)-\widehat\psi(t)\bigr|
&=
\sup_{|s-t|\le\delta}\Bigl|\Pn\bigl\{\widehat\varphi_d(s,Z)-\widehat\varphi_d(t,Z)\bigr\}\Bigr| \nonumber \\
&\le
\sup_{|s-t|\le\delta}\Pn\Bigl|\widehat\varphi_d(s,Z)-\widehat\varphi_d(t,Z)\Bigr|
\le
\Pn\Bigl[\sup_{|s-t|\le\delta}\bigl|\widehat\varphi_d(s,Z)-\widehat\varphi_d(t,Z)\bigr|\Bigr]. \label{eq:A5'-step1}
\end{align}

Next, for any $t\in\mathbb{T}$, 
\[
\widehat\varphi_d(t,Z)
=
\varphi_d(t,Z)+\bigl\{\widehat\varphi_d(t,Z)-\varphi_d(t,Z)\bigr\},
\]
and thereby, $\forall s,t\in\mathbb{T}$,
\begin{align}
\bigl|\widehat\varphi_d(s,Z)-\widehat\varphi_d(t,Z)\bigr|
&\le
\bigl|\varphi_d(s,Z)-\varphi_d(t,Z)\bigr|
+ \bigl|\widehat\varphi_d(s,Z)-\varphi_d(s,Z)\bigr|
+ \bigl|\widehat\varphi_d(t,Z)-\varphi_d(t,Z)\bigr| \nonumber\\
&\le
\bigl|\varphi_d(s,Z)-\varphi_d(t,Z)\bigr|
+ 2\sup_{u\in\mathbb{T}}\bigl|\widehat\varphi_d(u,Z)-\varphi_d(u,Z)\bigr|.
\label{eq:modulus-decomp}
\end{align}

Recall that
\[
\varphi_d(t,Z)
=
\mu_1(t,X;d)-\mu_0(t,X;d)
+\left\{\frac{A}{\pi(X)}-\frac{1-A}{1-\pi(X)}\right\}
\left\{\phi_{i,d}(t)-\mu_A(t,X;d)\right\}.
\]
By Assumption~\ref{assumption:A5}, for each $a\in\{0,1\}$,
$t\mapsto \mu_a(t,x;d)$ is Lipschitz on $\mathbb{T}$ uniformly over $x$,
so there exists $L_\mu<\infty$ such that
\[
\sup_{x\in\mathcal{X}}\bigl|\mu_a(s,x;d)-\mu_a(t,x;d)\bigr|\le L_\mu|s-t|.
\]
By Lemma~\ref{lem:Lipschitz-Silhouette}, $t\mapsto \phi_{i,d}(t)$ is $1$-Lipschitz,
so $|\phi_{i,d}(s)-\phi_{i,d}(t)|\le |s-t|$.
Moreover, the positivity condition stated elsewhere implies that
$\|1/\pi\|_\infty<\infty$ and $\|1/(1-\pi)\|_\infty<\infty$; set
\[
C_\pi \coloneqq \left\|\frac{1}{\pi}\right\|_\infty+\left\|\frac{1}{1-\pi}\right\|_\infty <\infty.
\]
Then for any $s,t\in\mathbb{T}$,
\begin{align*}
|\varphi_d(s,Z)-\varphi_d(t,Z)|
&\le
|\mu_1(s,X;d)-\mu_1(t,X;d)|
+
|\mu_0(s,X;d)-\mu_0(t,X;d)| \\
&\quad+
\left|\frac{A}{\pi(X)}-\frac{1-A}{1-\pi(X)}\right|
\Big(
|\phi_{i,d}(s)-\phi_{i,d}(t)|
+
|\mu_A(s,X;d)-\mu_A(t,X;d)|
\Big) \\
&\le
2L_\mu|s-t| + C_\pi\{\,|s-t|+L_\mu|s-t|\,\}
\;=\; L_\varphi\,|s-t|,
\end{align*}
where $L_\varphi\coloneqq 2L_\mu + C_\pi(1+L_\mu)<\infty$.
Consequently,
\begin{equation}\label{eq:phi-lip}
\sup_{|s-t|\le\delta}|\varphi_d(s,Z)-\varphi_d(t,Z)| \le L_\varphi\,\delta
\qquad\text{a.s.}
\end{equation}

From \eqref{eq:modulus-decomp}, define
\[
\Delta_n(Z)\coloneqq \sup_{u\in\mathbb{T}}\bigl|\widehat\varphi_d(u,Z)-\varphi_d(u,Z)\bigr|.
\]
Note that
\begin{align*}
\widehat\varphi_d(t,Z)-\varphi_d(t,Z)
&=
\{\hat\mu_1(t,X;d)-\mu_1(t,X;d)\}
-\{\hat\mu_0(t,X;d)-\mu_0(t,X;d)\} \\
&\quad+
\Big(\frac{A}{\hat\pi(X)}-\frac{A}{\pi(X)}\Big)\{\phi_{i,d}(t)-\hat\mu_1(t,X;d)\} \\
&\quad
-\Big(\frac{1-A}{1-\hat\pi(X)}-\frac{1-A}{1-\pi(X)}\Big)\{\phi_{i,d}(t)-\hat\mu_0(t,X;d)\} \\
&\quad+
\left\{\frac{A}{\pi(X)}-\frac{1-A}{1-\pi(X)}\right\}
\bigl\{\mu_A(t,X;d)-\hat\mu_A(t,X;d)\bigr\}.
\end{align*}
Now, take $\sup_{t\in\mathbb{T}}|\cdot|$ and use the triangle inequality. 
Assumption~\ref{assumption:A1} gives bounded denominators for $\hat\pi$,
and the positivity condition gives bounded denominators for $\pi$. Also, we have $\|\phi_{i,d}\|_\infty<\infty$.
Therefore there exists an almost surely finite random constant $C(Z)$
(depending only on $\|\phi_{i,d}\|_\infty$, $\|1/\pi\|_\infty$, $\|1/(1-\pi)\|_\infty$,
$\|1/\hat\pi\|_\infty$, and $\|1/(1-\hat\pi)\|_\infty$) such that
\begin{align}
\Delta_n(Z)
=\sup_{t\in\mathbb{T}}|\widehat\varphi_d(t,Z)-\varphi_d(t,Z)|
&\le
C(Z)\Big\{
\max_{a\in\{0,1\}}\sup_{t\in\mathbb{T}}|\hat\mu_a(t,X;d)-\mu_a(t,X;d)|
+
|\hat\pi(X)-\pi(X)|
\Big\}.
\label{eq:Delta-bound}
\end{align}
Now Assumption~\ref{assumption:A5} implies the uniform-in-$(t,x)$ consistency
$\sup_{t,x}|\hat\mu_a(t,x;d)-\mu_a(t,x;d)|=o_\Pb(1)$ and $\sup_x|\hat\pi(x)-\pi(x)|=o_\Pb(1)$,
hence the right-hand side of \eqref{eq:Delta-bound} is $o_\Pb(1)$.
Therefore,
\begin{equation}\label{eq:Delta-op}
\Delta_n(Z)=o_\Pb(1).
\end{equation}

Combine \eqref{eq:modulus-decomp}, \eqref{eq:phi-lip}, and \eqref{eq:Delta-op} to obtain
\[
\sup_{|s-t|\le\delta}\bigl|\widehat\varphi_d(s,Z)-\widehat\varphi_d(t,Z)\bigr|
\le
L_\varphi\,\delta + 2\Delta_n(Z).
\]
Plugging into \eqref{eq:A5'-step1} yields
\[
\sup_{|s-t|\le\delta}\bigl|\widehat\psi(s)-\widehat\psi(t)\bigr|
\le
L_\varphi\,\delta + 2\,\Pn\{\Delta_n(Z)\}.
\]
Since $\Pn\{\Delta_n(Z)\}\to 0$ in probability by \eqref{eq:Delta-op}, for any $\varepsilon>0$ choose a deterministic
sequence $\delta_n\rightarrow 0$ such that $L_\varphi\delta_n\le \varepsilon/2$. Then
\[
\Pb\!\left(
\sup_{|s-t|\le\delta_n}\bigl|\widehat\psi(s)-\widehat\psi(t)\bigr|>\varepsilon
\right)
\le
\Pb\!\left(2\,\Pn\{\Delta_n(Z)\}>\varepsilon/2\right)\to 0,
\]
which verifies stochastic equicontinuity, thus completing the proof of Theorem~\ref{thm:gausian-dr} under Assumption~\ref{assumption:A5}.

\subsection{Proof of Theorem \ref{thm:stability-silhouettes}}\label{appsec:proof-stability}

We first introduce several definitions that will be used in the development of our stability results.

\begin{definition}[Matching]
    Let $\triangle$ denote the diagonal of a persistence diagram. A \textnormal{matching} between two persistence diagrams $\D$ and $\D'$ is defined as a subset $m \subset \D \times \D'$ such that every point in $\D \setminus \triangle$ and $\D' \setminus \triangle$ appears exactly once in $m$.
\end{definition}

\begin{definition}[Wasserstein distance]\label{def:wasserstein}
    The \textnormal{$q$-th Wasserstein distance} between two persistence diagrams $\D$ and $\D'$ is
    \[
    W_q(\D, \D') = \inf_{\textnormal{matching } m} \left( \sum_{(p,p') \in m} \Vert p - p'\Vert_\infty^q \right)^{1/q}.
    \]
\end{definition}

We denote the matching $m$ that satisfies Definition \ref{def:wasserstein} as the \textit{optimal $W_q$ matching.}

Let $\phi = \phi(t; \D, r) $ and $\phi' = \phi(t; \D', r)$ be the corresponding power-weighted silhouettes for persistence diagrams $\D$ and $\D'$, respectively. Given $p = (a_p, b_p) \in \D$ and $\ p' = (a_{p'}, b_{p'}) \in \D'$, we denote the power weights corresponding to $p$ and $p'$ as $w_p = \vert b_p - a_p \vert^r$ and $w_{p'} = \vert b_{p'} - a_{p'} \vert^r$, respectively, where $r > 0$. Note that under the diagram boundedness condition of Assumption \ref{assumption:A6}, the absolute value in the power weights may be omitted. Hence, we assume $w_p = (b_p - a_p)^r$ and $w_{p'} = (b_{p'} - a_{p'})^r$ throughout this section.

    


We now establish the following lemma, which serves as a preliminary result essential for the development of the main theorem.

\begin{lemma}\label{lemma:weight bound}
    Given a matching $m \subset \D \times \D'$, for any $(p, p') \in m$,
    \begin{align*}
        \vert w_p - w_{p'} \vert \ \leq\  2r c_{pp'}^{r-1}\cdot\Vert p - p' \Vert_\infty,
    \end{align*}
where $c_{pp'} \in (\min\{\,b_p-a_p,\; b_{p'}-a_{p'}\}, \max\{\,b_p-a_p,\; b_{p'}-a_{p'}\}) $ is some positive constant that satisfies the Mean Value Theorem for the function $g(x) = x^r, x\in [0, \infty), r > 0$, given points $p$ and $p'$. Consequently,
    \begin{align*}
        \vert w_p - w_{p'} \vert \ \leq\  2 r c^{r-1} \cdot\Vert p - p' \Vert_\infty,
    \end{align*}
    for \[
c \;=\; \max_{(p,p')\in m}\, \max\{\,b_p-a_p,\; b_{p'}-a_{p'}\,\} \;=\; \max\{\,b_x-a_x : x\in \D\cup\D'\,\},
\]
so \(c\) depends only on the weighting exponent \(r\) (through the bound) and on the
largest lifetime across \(\D\) and \(\D'\).
\end{lemma}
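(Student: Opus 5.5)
The plan is to reduce the weight difference to a difference of lifetimes and then apply the Mean Value Theorem to $g(x)=x^r$. Write $\ell_p=b_p-a_p$ and $\ell_{p'}=b_{p'}-a_{p'}$, so that $w_p=g(\ell_p)$ and $w_{p'}=g(\ell_{p'})$. Assumption \ref{assumption:A6} makes both lifetimes finite and nonnegative.

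First I bound the lifetime difference by the sup-norm distance of the matched points. Directly,
\[
|\ell_p-\ell_{p'}| = |(b_p-b_{p'})-(a_p-a_{p'})| \le |b_p-b_{p'}|+|a_p-a_{p'}| \le 2\Vert p-p'\Vert_\infty,
\]
which is where the factor $2$ in the statement comes from. If $p$ or $p'$ is matched to the diagonal, I treat the diagonal partner as the projection $\bigl((a+b)/2,(a+b)/2\bigr)$. Its lifetime is $0$, so its weight is $0$ for $r>0$. The same inequality then holds, and in fact with the better constant $1$, since $\Vert p-\text{proj}(p)\Vert_\infty=\ell_p/2$.

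Second, when $\ell_p\neq\ell_{p'}$, the Mean Value Theorem on the interval between $\ell_p$ and $\ell_{p'}$ gives some $c_{pp'}$ strictly between them with
\[
|w_p-w_{p'}| = r\,c_{pp'}^{r-1}|\ell_p-\ell_{p'}| \le 2r\,c_{pp'}^{r-1}\Vert p-p'\Vert_\infty .
\]
If $\ell_p=\ell_{p'}$ the left side is zero and the bound is trivial for any admissible $c_{pp'}$. The point $c_{pp'}$ is positive because it lies strictly above $\min\{\ell_p,\ell_{p'}\}\ge 0$. This also makes the case of a zero lifetime (a diagonal match) unproblematic, even though $g'$ may blow up at $0$ when $r<1$.

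Third, I pass to the uniform constant $c$ defined as the largest lifetime in $\D\cup\D'$. Every $c_{pp'}\le c$, since it is at most the larger of the two matched lifetimes. For $r\ge1$ the map $x\mapsto x^{r-1}$ is nondecreasing, so $c_{pp'}^{r-1}\le c^{r-1}$ and the second display follows.

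The main obstacle is the case $0<r<1$. There $x^{r-1}$ is decreasing, so $c_{pp'}^{r-1}$ can be large when both matched lifetimes are short, and the inequality $c_{pp'}^{r-1}\le c^{r-1}$ fails. I would handle this case in one of two ways:
\begin{itemize}
\item interpret $c$ as the constant guaranteed by the lemma, depending on $r$ and the diagram bounds, chosen as a lower lifetime bound when $r<1$ (this is where the Assumption \ref{assumption:A6} control on small lifetimes could enter); or
\item use the Hölder bound $|x^r-y^r|\le|x-y|^r$, which holds for $r\le1$.
\end{itemize}
My primary plan is to state the second display under $r\ge1$ with the maximal lifetime and to note this adjustment explicitly for $r<1$.
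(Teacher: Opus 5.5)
Your proof of the first display is the paper's proof: the Mean Value Theorem for $x\mapsto x^r$ at the two lifetimes, followed by $|\ell_p-\ell_{p'}|\le|a_p-a_{p'}|+|b_p-b_{p'}|\le 2\|p-p'\|_\infty$. You also handle degenerate cases the paper ignores. When $\ell_p=\ell_{p'}$, the open interval that is supposed to contain $c_{pp'}$ is empty. For diagonal matches, the MVT on $[0,\ell_p]$ is still valid because $x^r$ is continuous at $0$. One small slip: for a diagonal match $\ell_p=2\|p-\mathrm{proj}(p)\|_\infty$ exactly, so the factor $2$ is attained rather than improved to $1$. This is harmless, since you only need the factor $2$.

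Where you go beyond the paper is the ``Consequently'' step, which the paper asserts without argument. Your monotonicity argument is exactly right for $r\ge 1$. Your concern about $0<r<1$ is not just a missing step: the second display is false there, even under Assumption (A6).

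Take $r=\tfrac12$, $\mathcal{D}=\{(0,\varepsilon),(0,1)\}$, $\mathcal{D}'=\{(0,2\varepsilon),(0,1)\}$ with the obvious (and optimal) matching. Then $c=1$, and (A6) holds with $L=1$. For the pair $((0,\varepsilon),(0,2\varepsilon))$ the left side is $(\sqrt{2}-1)\sqrt{\varepsilon}$ and the right side is $\varepsilon$, so the inequality fails once $\varepsilon<(\sqrt{2}-1)^2$.

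So your primary plan, stating the second display only for $r\ge 1$, is the correct resolution. The same mechanism, a short-lived feature carrying weight $\varepsilon^r\gg\varepsilon$ relative to its matching cost, also makes the paper's $W_1$ stability bound fail for $r<1$.

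Your two suggested repairs need correcting, though.

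\begin{itemize}
\item For $r<1$, (A6) reads $\ell^{1-r}\le L$, which is an \emph{upper} bound on lifetimes. It yields a lower bound only when $r>1$, where none is needed. A positive lower bound $\ell_{\min}$ on lifetimes would therefore be a new hypothesis. Under it you do get $|w_p-w_{p'}|\le 2\ell_{\min}^{r-1}\|p-p'\|_\infty$. For diagonal matches, however, you must bound $\ell_p^r\le \ell_{\min}^{r-1}\ell_p$ directly, because the MVT point there is $r^{1/(1-r)}\ell_p$, which can fall below $\ell_{\min}$.
\item The H\"older bound $|w_p-w_{p'}|\le (2\|p-p'\|_\infty)^r$ is true but not linear in $\|p-p'\|_\infty$. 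It replaces the lemma by a weaker statement rather than proving it.
\end{itemize}
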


\begin{proof}
    \begin{align*}
        \vert w_p - w_{p'} \vert &= \vert (b_p - a_p)^r - (b_{p'} - a_{p'})^r \vert \\
        &= \vert r\, c_{pp'}^{r-1} \, \{(b_p - a_p) - (b_{p'} - a_{p'})\} \vert \\
        &= r\, c_{pp'}^{r-1} \, \vert (b_p - a_p) - (b_{p'} - a_{p'}) \vert \\
        &\leq r\, c_{pp'}^{r-1} \, \left\{ \vert a_p - a_{p'} \vert + \vert b_p - b_{p'} \vert \right\} \\
        &\leq 2r \, c_{pp'}^{r-1} \, \max\left\{ \vert a_p - a_{p'} \vert, \vert b_p - b_{p'} \vert \right\} \\
        &= 2r \, c_{pp'}^{r-1} \, \Vert p - p' \Vert_\infty,
    \end{align*}
    where the second equality uses the Mean Value Theorem.
\end{proof}

Notice that when \(r=1\) the bound does not involve \(c\) (cf.\ the special case in the theorem).

Now we give the proof of Theorem \ref{thm:stability-silhouettes}.

\begin{proof}
Let $w_p=(b_p-a_p)^r$, $w_{p'}=(b_{p'}-a_{p'})^r$, and define
\[
S=\sum_{p\in\D}w_p,\qquad S'=\sum_{p'\in\D'}w_{p'},\qquad
\phi=\frac{1}{S}\sum_{p\in\D}w_p\Lambda_p,\qquad
\phi'=\frac{1}{S'}\sum_{p'\in\D'}w_{p'}\Lambda_{p'},
\]
where $\Lambda_p$ denotes the tent function centered at $p$ such that $\|\Lambda_p\|_\infty=(b_p-a_p)/2$.
Augmenting the diagrams with the diagonal if needed, the optimal $W_1$ matching $m^\ast$ is a bijection between
$\D$ and $\D'$, hence
\[
\|\phi-\phi'\|_\infty
\ =\ \left\|\sum_{(p,p')\in m^\ast}\!\Big(\frac{w_p\Lambda_p}{S}-\frac{w_{p'}\Lambda_{p'}}{S'}\Big)\right\|_\infty
\ \le\  \sum_{(p,p')\in m^\ast}\left\|\frac{w_p\Lambda_p}{S}-\frac{w_{p'}\Lambda_{p'}}{S'}\right\|_\infty.
\]
Split each summand as
\[
\left\|\frac{w_p\Lambda_p}{S}-\frac{w_{p'}\Lambda_{p'}}{S'}\right\|_\infty
\ \le\ \left\|\frac{w_p}{S}(\Lambda_p-\Lambda_{p'})\right\|_\infty
 + \|\Lambda_{p'}\|_\infty\left|\frac{w_p}{S}-\frac{w_{p'}}{S'}\right|.
\]
Summing over $(p,p')\in m^\ast$ yields
\[
\|\phi-\phi'\|_\infty \;\le\; T_1+T_2,
\qquad
T_1:=\frac{1}{S}\sum_{(p,p')\in m^\ast}w_p\|\Lambda_p-\Lambda_{p'}\|_\infty,
\quad
T_2:=\sum_{(p,p')\in m^\ast}\|\Lambda_{p'}\|_\infty\left|\frac{w_p}{S}-\frac{w_{p'}}{S'}\right|.
\]

\paragraph{Bound for $T_1$.}
By the $1$-Lipschitz property of the tent functions, we have $\|\Lambda_p-\Lambda_{p'}\|_\infty\le \|p-p'\|_\infty$. Therefore,
\[
T_1 \;\le\; \frac{1}{S}\sum_{(p,p')\in m^\ast} w_p \|p-p'\|_\infty
\;\le\; \frac{1}{S}\Big(\sum_{p\in\D}w_p\Big)\Big(\sum_{(p,p')\in m^\ast}\|p-p'\|_\infty\Big)
\;=\; W_1(\D,\D'),
\]
where the second inequality follows by a simple corollary of Hölder’s inequality:
\[
\sum_i \beta_i x_i \;\le\; \|\beta\|_1\,\|x\|_\infty
\;\le\; \|\beta\|_1\,\|x\|_1 \;=\; \Big(\sum_i \beta_i\Big)\Big(\sum_i x_i\Big),
\]
for $\beta_i,x_i\ge 0$, and the last equality is the definition of the $W_1$ cost of $m^\ast$.

\paragraph{Bound for $T_2$.}
Observe the algebraic identity
\[
\left|\frac{w_p}{S}-\frac{w_{p'}}{S'}\right|
=\frac{|(S'-S)w_p + S(w_p-w_{p'})|}{SS'}
\;\le\; \frac{|S'-S|\,w_p}{SS'} + \frac{|w_p-w_{p'}|}{S'}.
\]
Hence $T_2\le T_{2a}+T_{2b}$, where
\[
T_{2a}
:= \sum_{(p,p')\in m^\ast}\|\Lambda_{p'}\|_\infty \frac{|S'-S|\,w_p}{SS'},
\qquad
T_{2b}
:= \sum_{(p,p')\in m^\ast}\|\Lambda_{p'}\|_\infty \frac{|w_p-w_{p'}|}{S'}.
\]

We first tackle the term $T_{2a}$.
Using Hölder’s inequality to separate the sums and cancel $S$, we have
\[
T_{2a} \;\le\; \Big(\sum_{p'\in\D'}\|\Lambda_{p'}\|_\infty\Big)\frac{|S'-S|}{S'}.
\]
Since $S'-S=\sum_{(p,p')\in m^\ast}(w_{p'}-w_p)$, the triangle inequality and Lemma~\ref{lemma:weight bound} yield
\[
|S'-S| \;\le\; \sum_{(p,p')\in m^\ast}|w_p-w_{p'}|
\;\le\; \sum_{(p,p')\in m^\ast} 2r\,c_{pp'}^{\,r-1}\,\|p-p'\|_\infty
\;\le\; 2r\,c^{\,r-1}\!\sum_{(p,p')\in m^\ast}\|p-p'\|_\infty.
\]
Moreover, $\sum_{p'\in\D'}\|\Lambda_{p'}\|_\infty = \tfrac12\sum_{p'\in\D'}(b_{p'}-a_{p'})$ and
$S'=\sum_{p'\in\D'}(b_{p'}-a_{p'})^r$, so by the boundedness condition from Assumption \ref{assumption:A6},
\[
\frac{\sum_{p'\in\D'}\|\Lambda_{p'}\|_\infty}{S'}
=\frac{1}{2}\cdot \frac{\sum_{p'}(b_{p'}-a_{p'})}{\sum_{p'}(b_{p'}-a_{p'})^r}
\;\le\; \frac{L}{2}.
\]
Therefore
\[
T_{2a}
\;\le\; \frac{L}{2}\, 2r\,c^{\,r-1}\!\sum_{(p,p')\in m^\ast}\|p-p'\|_\infty
\;=\; L\, r\,c^{\,r-1}\, W_1(\D,\D').
\]

The analysis of $T_{2b}$ follows the same steps as above.
Again by Lemma~\ref{lemma:weight bound} and the ratio boundedness condition from Assumption \ref{assumption:A6},
\[
T_{2b}
\;\le\; \frac{\sum_{p'\in\D'}\|\Lambda_{p'}\|_\infty}{S'}
\sum_{(p,p')\in m^\ast}|w_p-w'_{p'}|
\;\le\; \frac{L}{2}\sum_{(p,p')\in m^\ast} 2r\,c_{pp'}^{\,r-1}\,\|p-p'\|_\infty
\;\le\; L\, r\,c^{\,r-1}\, W_1(\D,\D').
\]

Combining the bounds for $T_{2a}$ and $T_{2b}$ gives
\[
T_2 \;\le\; 2L\, r\,c^{\,r-1}\, W_1(\D,\D').
\]

Putting together the bounds for $T_1$ and $T_2$, we obtain
\[
\|\phi-\phi'\|_\infty \;\le\; \bigl(1+2L\, r\,c^{\,r-1}\bigr)\, W_1(\D,\D').
\]

\paragraph{Special case $r=1$.}
When $r=1$, Lemma~\ref{lemma:weight bound} gives $|w_p-w_{p'}|\le 2\|p-p'\|_\infty$ and
\[
\frac{\sum_{p'\in\D'}\|\Lambda_{p'}\|_\infty}{S'}
=\frac{\sum_{p'}(b_{p'}-a_{p'})/2}{\sum_{p'}(b_{p'}-q_{p'})}=\frac12,
\]
so each of $T_{2a}$ and $T_{2b}$ is bounded by $W_1(\D,\D')$, while $T_1\le W_1(\D,\D')$, which yields
\[
\|\phi-\phi'\|_\infty \le 3\,W_1(\D,\D').
\]
\end{proof}

\subsection{Proof of Corollary \ref{cor:test-zero-top-effect}}

We provide only a brief sketch, since the argument follows from standard empirical-process machinery \citep[e.g.,][]{van1996weak}. By Theorem~\ref{thm:gausian-dr},
\(\sqrt{n}\,(\widehat\psi_{\mathrm{AIPW},d}-\psi_d)\rightsquigarrow \mathbb{G}_d\) in \(\ell^\infty(\mathbb{T})\), and the continuous mapping theorem applied to \(f\mapsto \|f\|_\infty\) implies \(T_n \rightsquigarrow \|\mathbb{G}_d\|_\infty\) under \(H_0\).
If the multiplier bootstrap consistently approximates, conditionally on the data, the law of \(\|\mathbb{G}_d\|_\infty\), then its \((1-\alpha)\)-quantile \(c_{1-\alpha}\) yields an asymptotically valid level-\(\alpha\) test; under any fixed alternative with \(\|\psi_d\|_\infty>0\), we have \(T_n \ge \sqrt{n}\|\psi_d\|_\infty + o_\Pb(\sqrt{n})\to\infty\), so the power converges to one.

\section{Experiment Details}\label{app:experiment}

We perform experiments on three benchmark datasets: the SARS-CoV-2 CT-scan image dataset, the GEOM-Drugs molecular graph dataset, and the ORBIT point cloud dataset. In all experiments, we construct a synthetic counterfactual dataset $\{(X_i, A_i, Y_i^0, Y_i^1)\}_{i=1}^{n}$ to facilitate the evaluation of estimators against a known true effect. We begin by randomly selecting and pairing two data samples to form each potential outcome pair, assigning one to $Y^0$ and the other to $Y^1$ in a manner that induces a clear topological contrast. Next, we generate the covariates $X$ and treatment $A$ according to a stochastic data-generating process and treatment assignment mechanism. The same procedure is identically applied to all experiments, with details outlined below.

\textbf{Data-generating process.} We assume a setting in which a subgroup structure is imposed on the covariates $X \in \mathbb{R}^5$. The covariates are generated from two subgroups, each governed by a multivariate Gaussian distribution with distinct mean vectors $\mu_1, \mu_2$, and a common covariance matrix $\Sigma$. The covariance matrix $\Sigma$ is set as a diagonal matrix with diagonal values $0.5$, and the mean vector of each subgroup is specified as $\mu_1 = [1, 0.6, -0.7, 2.2, -1]^T$ and $\mu_2 = [0.4, -0.4, -0.6, 3.3, 3]^T$. Covariates for half of the samples are generated from $N(\mu_1, \Sigma)$, while the remaining half are generated from $N(\mu_2, \Sigma)$.

\textbf{Treatment mechanism.} Given the covariates $X$, treatment $A \in \{0,1\}$ is assigned with probability $\pi(X) = expit( -0.5X_1 -0.1X_2 + 0.6 X_3 + 0.1 X_4 + 0.1X_5 + 0.5X_2X_3 - 0.7X_1X_3)$. This treatment mechanism is designed such that one subgroup has a higher probability of receiving treatment than the other 
(see Figure \ref{fig:propensity}). 
Upon treatment assignment, the observed outcome is given by $Y = AY^1 + (1-A)Y^0$.

The aforementioned procedure is repeated 20 times to generate multiple datasets with different realizations of $X$ and $A$, for each of which the PI, IPW, and AIPW estimators are computed. The estimators are constructed by modeling the silhouette regression function $\mu_a$ with function-on-scalar regression employing a Fourier basis expansion, and estimating the propensity score $\pi$ using a random forest classifier. To assess the performance of each estimator, we examine the pointwise mean and the pointwise 1-standard deviation error bands computed across the 20 runs. In addition, we report the $L_1$ distance between the average estimated function and the true topological effect function as a single number summary of estimator accuracy in Table \ref{tab:performance}. For completeness, we also provide the standard deviation of each estimator in Table \ref{tab:performance}, obtained by first averaging the empirical covariance matrix to produce a scalar summary of variability, and then taking its square root.

\begin{table}[t!]
    \caption{{$L_1$ distance between the average estimate and the true effect, together with the standard deviation of each estimator over 20 repetitions. $\mathcal H_d$ denotes homology dimension $d$, and the scale is in 1e-05 for SARS-CoV-2 and ORBIT and 1e-02 for GEOM-Drugs.}}
    \label{tab:performance}
    \renewcommand{\arraystretch}{1.2}
    \centering
    \begin{tabular}{lcccccccccc}
        \toprule
            & \multicolumn{2}{c}{\textbf{SARS-CoV-2}} & \multicolumn{4}{c}{\textbf{GEOM-Drugs}} & \multicolumn{4}{c}{\textbf{ORBIT}} \\
        \cline{2-11}
                & \multicolumn{2}{c}{$\mathcal H_0$} & \multicolumn{2}{c}{$\mathcal H_0$} & \multicolumn{2}{c}{$\mathcal H_1$} & \multicolumn{2}{c}{$\mathcal H_0$} & \multicolumn{2}{c}{$\mathcal H_1$} \\
        \cline{2-11}
                & $L_1$ dist. & Std. & $L_1$ dist. & Std. & $L_1$ dist. & Std. & $L_1$ dist. & Std. & $L_1$ dist. & Std. \\
        \midrule
        \midrule
        PI      & 4.544 & 2.712 & 3.215 & 0.740 & 7.041 & 1.283 &  0.571 & 0.530 & 36.063 & 37.704  \\
        \midrule
        IPW     & 2.936 & 9.035 & \textbf{1.670} & 5.756 & 18.886 & 8.337 &  \textbf{0.073} & 3.722 & 18.210 & 93.978  \\
        \midrule
        AIPW    & \textbf{1.246} & 3.099 & 3.296 & 1.470 & \textbf{2.136} & 1.553 &  0.121 & 0.802 & \textbf{6.920} & 47.127 \\
        \bottomrule
    \end{tabular}

    
\end{table}

\subsection{Implementation Considerations}
Before discussing the details of each experiment, we first outline several practical considerations relevant to implementing our method.

\textbf{Sample Splitting.} As discussed in Section \ref{sec:estimation}, sample splitting enables the use of arbitrary complex nuisance estimators. Thus, the use of sample splitting is crucial for proper implementation, and we illustrate how it can be carried out in practice. Let $D = \{Z_i = (X_i, A_i, Y_i)\}_{i=1}^n$ denote an i.i.d. observed sample of size $n$. We randomly split $D$ into two independent samples $D_1$ and $D_2$, each of size $n/2$. First, we use $D_1$ to fit the nuisance estimators $\hat\eta$, and use $D_2$ to compute $\hat\psi$ for the estimation of $\psi$. Then, we swap the samples and use $D_2$ for training and $D_1$ for estimation. If we average the two resulting estimators $\hat\psi$, we recover full-sample efficiency.

\textbf{Computational Cost.} Our framework relies on persistent homology, which incurs substantial computational cost; the \textit{standard} algorithm for persistent homology induces $O(N^3)$ complexity \citep{edelsbrunner26}, with $N$ denoting the number of simplices. To partially mitigate this computational overhead and improve scalability, several algorithmic techniques have been developed. For instance, \citet{chen2011output} proposed an output-sensitive algorithm for persistent homology with $O(N^2 \log^{3}N \log\log N)$ complexity. For many applications where 0-dimensional persistent homology suffices, such as our SARS-CoV-2 experiment, additional scalability can be achieved with efficient algorithms that run at $O(N \log N)$ time \citep{edelsbrunner26}. Nevertheless, even with improved algorithms, our approach utilizing persistent homology may be computationally demanding and provide restricted scalability on large-scale datasets, as this limitation is intrinsic to persistent homology itself. To achieve enhanced scalability, one may consider extending our framework to topological descriptors based on the efficient Euler characteristic curve, although Euler characteristic curve-based descriptors generally lack the expressivity of persistent homology-based methods. 

To illustrate that our method is entirely feasible for datasets that are not excessively large, we provide empirical runtime measurements for computing weighted silhouettes across all datasets used in our study. For the SARS-COV-2 dataset, we compute weighted silhouettes on CT images of 1252 infected and 1229 non-infected patients. The GEOM-Drugs is computed on the full dataset of 30422 graphs. For the ORBIT dataset, weighted silhouettes are computed for 1000 pairs of $(Y^1, Y^0)$, amounting to 2,000 silhouettes overall. The \texttt{GUDHI} \citep{gudhi:urm} library is used for silhouette computation on an Apple M2 processor, and the runtime results are presented in Table \ref{tab:runtime}.

\begin{table}[t!]
    \caption{{The runtime of computing weighted silhouettes for each dataset in seconds.}}
    \label{tab:runtime}
    \renewcommand{\arraystretch}{1.2}
    \centering
    \begin{tabular}{lccc}
        \toprule
        \textbf{Data (Sample Size)}    & \textbf{SARS-CoV-2 (2481)} & \textbf{GEOM-Drugs (30433)} & \textbf{ORBIT (2000)} \\
        \midrule
        \midrule
        Runtime      & 244 seconds & 21 second & 25 seconds \\
        \bottomrule
    \end{tabular}
\end{table}

\begin{figure}
    \centering
    \begin{minipage}{0.325\linewidth}
        \includegraphics[width=\linewidth]{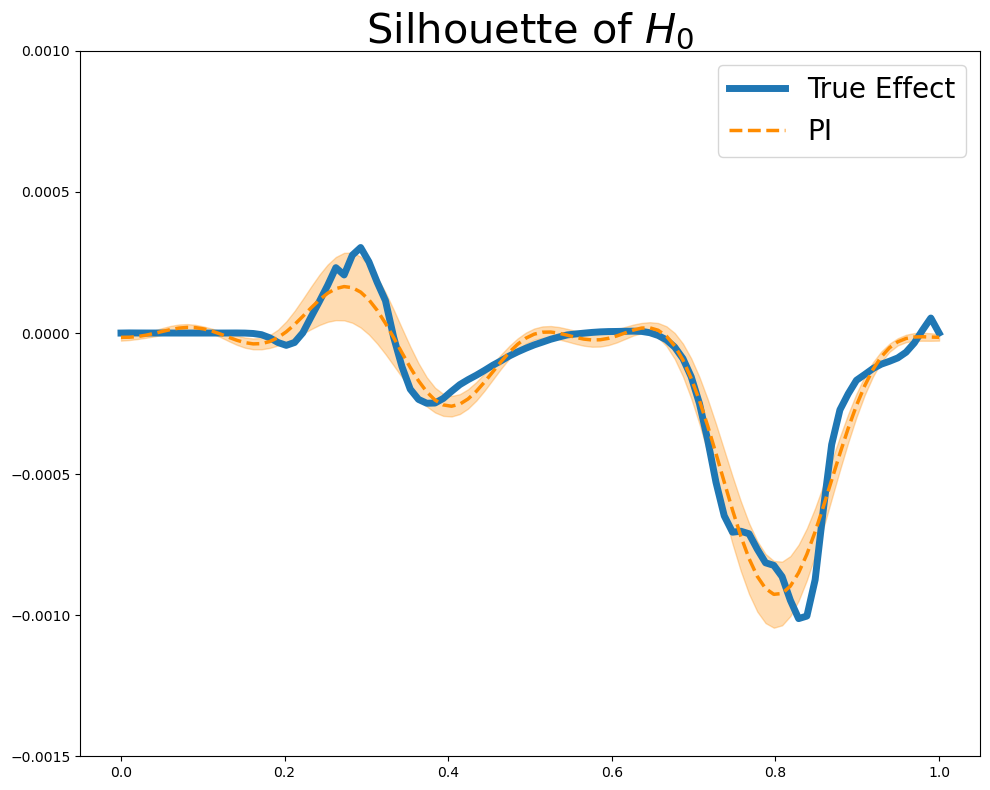}
    \end{minipage}
    \hfill
    \begin{minipage}{0.325\linewidth}
        \includegraphics[width=\linewidth]{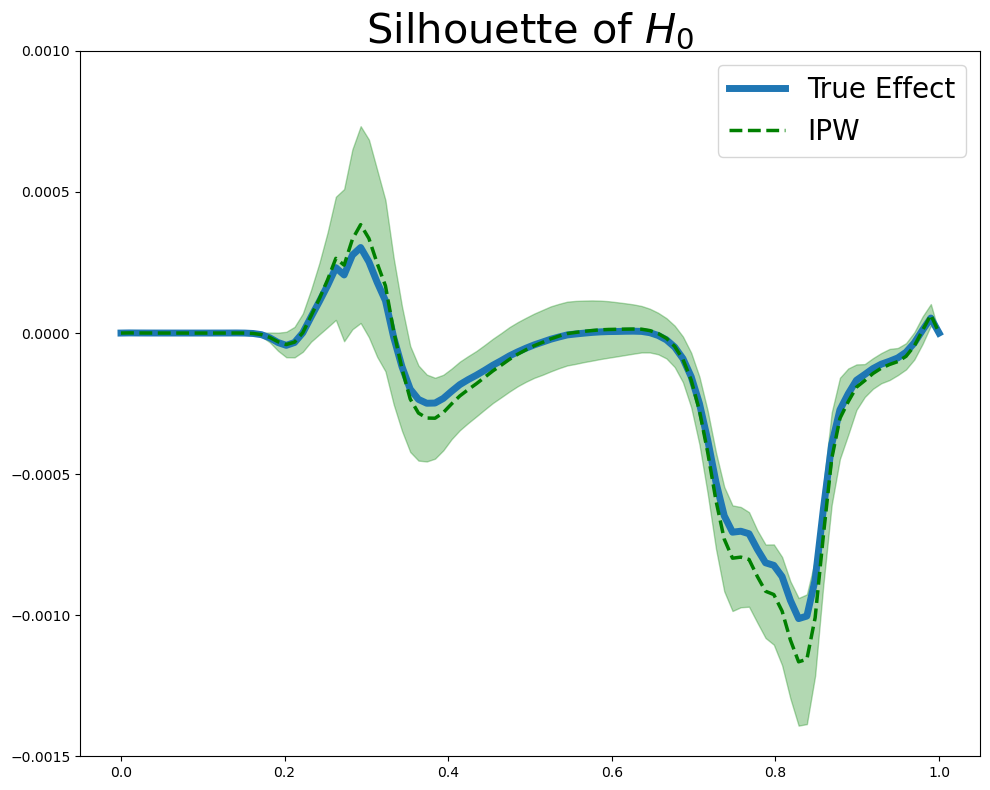}
    \end{minipage}
    \hfill
     \begin{minipage}{0.325\linewidth}
        \includegraphics[width=\linewidth]{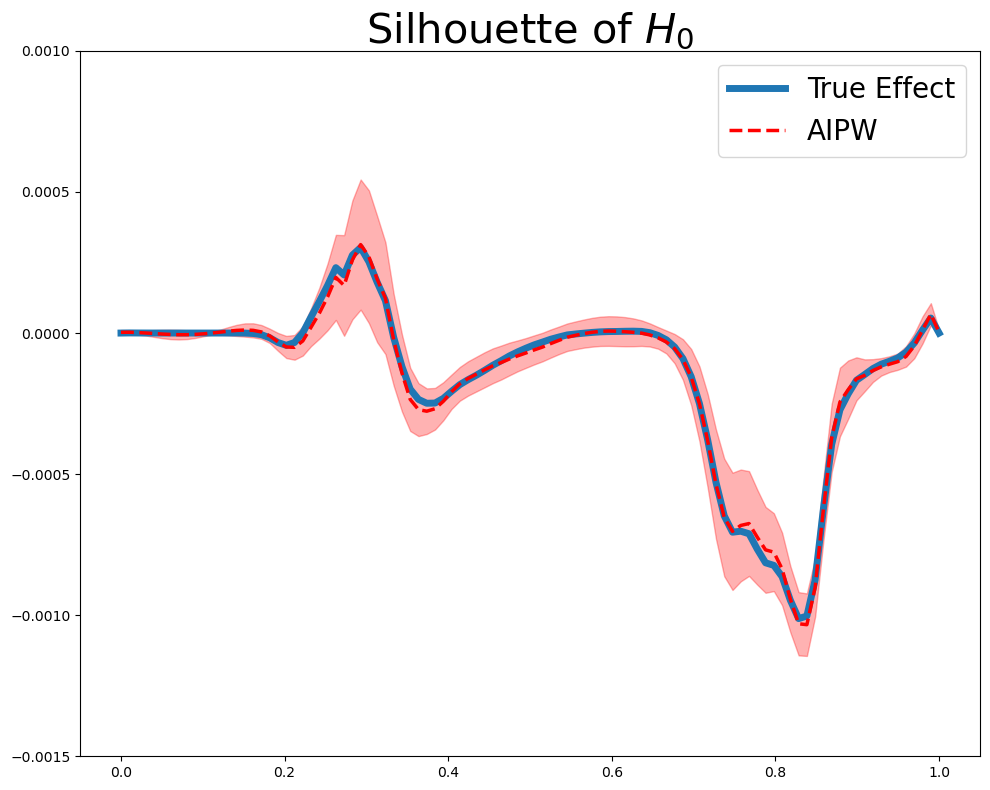}
    \end{minipage}
    \caption{Visualization of the point-wise mean (dotted line) and the point-wise 1-standard deviation error bands (shaded area) of PI, IPW, and AIPW estimators on the SARS-CoV-2 dataset. The true topological causal effect is shown as a blue line.}
    \label{fig:covid confidence band}
\end{figure}

\subsection{SARS-CoV-2} The SARS-CoV-2 dataset contains CT-scans collected from real patients in Brazil, who are infected or non-infected by COVID-19. In our experiment, 500 potential outcomes pairs $(Y^0, Y^1)$ are constructed according to the following procedure: (i) 500 infected images are sampled and assigned to $Y^0$, (ii) 375 non-infected and 125 infected images are sampled and assigned to $Y^1$, (iii) $Y^0$ and $Y^1$ are randomly paired. By experimental design, TATE exhibits treatment effect since 75\% of $Y^1$ is non-infected where as every individual in $Y^0$ is infected. Silhouettes are computed using sublevel set filtration on a filtered cubical complex (see Appendix \ref{app:background}) with $r=0.1$. For the nuisance estimators, function-on-scalar regression is implemented with 10 basis functions, while random forest is constructed with 100 trees. The point-wise 1-standard deviation error bands of the respective estimators are demonstrated in Figure \ref{fig:covid confidence band}, and see Table \ref{tab:performance} for numerical summaries of estimator performance.

\begin{figure}[t]
\centering
\begin{subfigure}[t]{0.3\linewidth}
  \includegraphics[width=\linewidth]{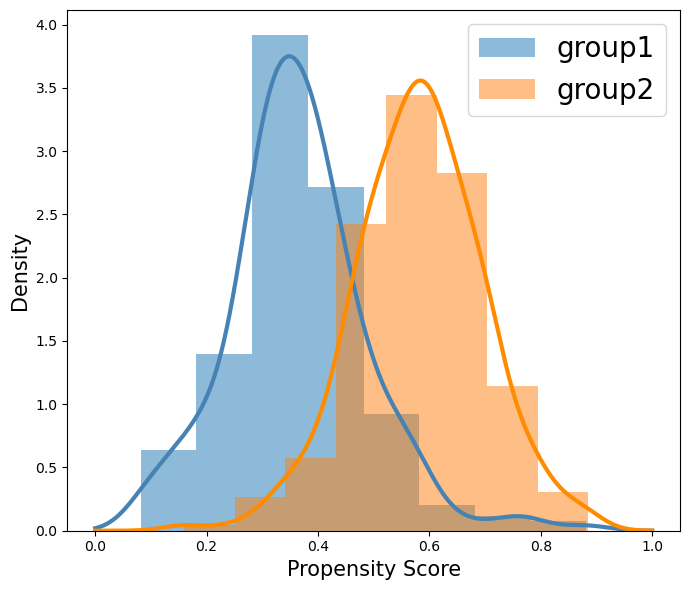}
  \caption{Density of propensity score by subgroup (Group 2 more likely to be treated).}
  \label{fig:propensity}
\end{subfigure}\hfill
\begin{subfigure}[t]{0.3\linewidth}
  \includegraphics[width=\linewidth]{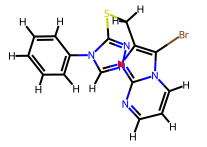}
  \caption{Sample molecular graph A~\citep{axelrod2022geom}.}
  \label{fig:molecule1}
\end{subfigure}\hfill
\begin{subfigure}[t]{0.3\linewidth}
  \includegraphics[width=\linewidth]{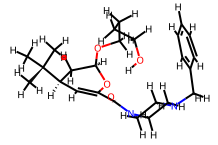}
  \caption{Sample molecular graph B~\citep{axelrod2022geom}.}
  \label{fig:molecule2}
\end{subfigure}
\caption{(a) Propensity score distribution by subgroup; (b–c) representative molecular graphs.}
\label{fig:overview}
\end{figure}

\subsection{GEOM-Drugs} The GEOM-Drugs dataset consists of graph-structured representations of molecular compounds, as shown in Figure~\ref{fig:molecule1} and \ref{fig:molecule2}. 
All graph nodes have six features, and we assign each node a scalar weight as a randomly sampled convex combination of these feature values. Each edge is assigned a weight given by the maximum of its adjacent node weights. Since 1-dimensional homology features have infinite death times, we replace them with values sampled uniformly from $[6.5,,8.5]$. In our experiment, 1000 potential outcomes pairs $(Y^0, Y^1)$ are constructed according to the following procedure: (i) randomly sample 1000 graphs with one loop and assign them to $Y^0$, (ii) randomly sample 750 graphs with two loops and 250 graphs with one loop and assign them to $Y^1$. This allocation scheme gives rise to significant topological differences between the counterfactual groups. We compute silhouettes with $r=1$, while the nuisance estimators for $\mu_a$ and $\pi$ are specified with 5 basis functions and 100 trees, respectively. 
Figures \ref{fig:geom_avg} provides a visualization of the point-wise mean and the point-wise 1-standard deviation error bands of the estimators per homology dimension, and see Table \ref{tab:performance} for numerical summaries of estimator performance.


\begin{figure}
    \centering
    \begin{minipage}{0.325\linewidth}
        \includegraphics[width=\linewidth]{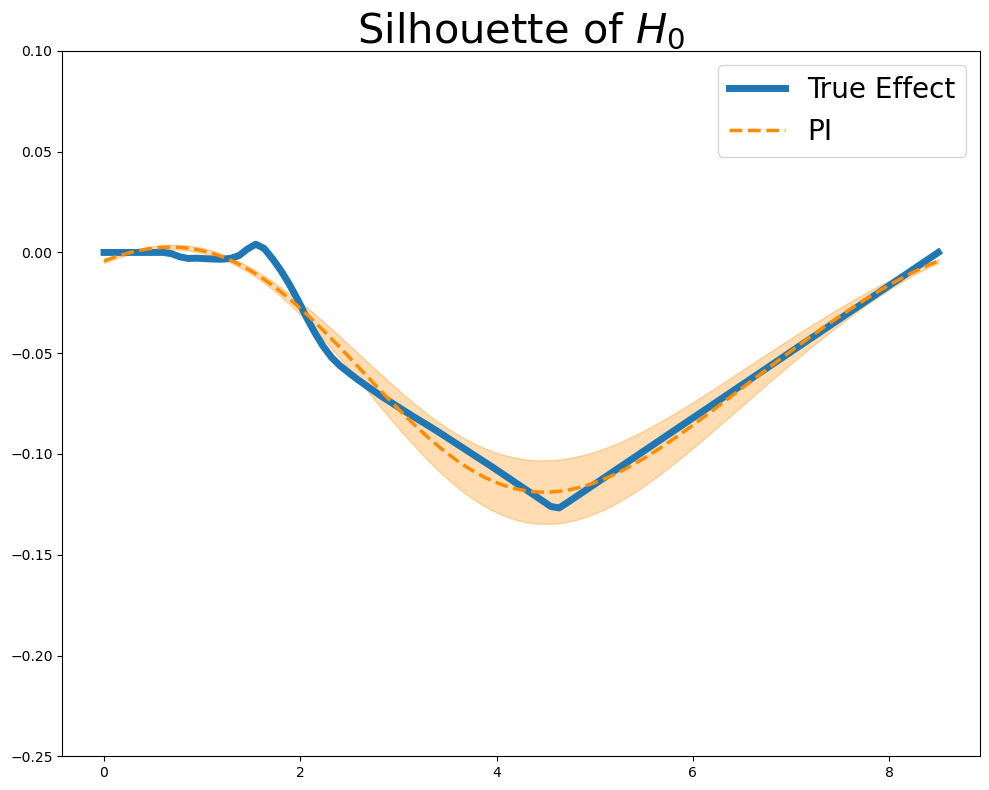}
    \end{minipage}
    \hfill
    \begin{minipage}{0.325\linewidth}
        \includegraphics[width=\linewidth]{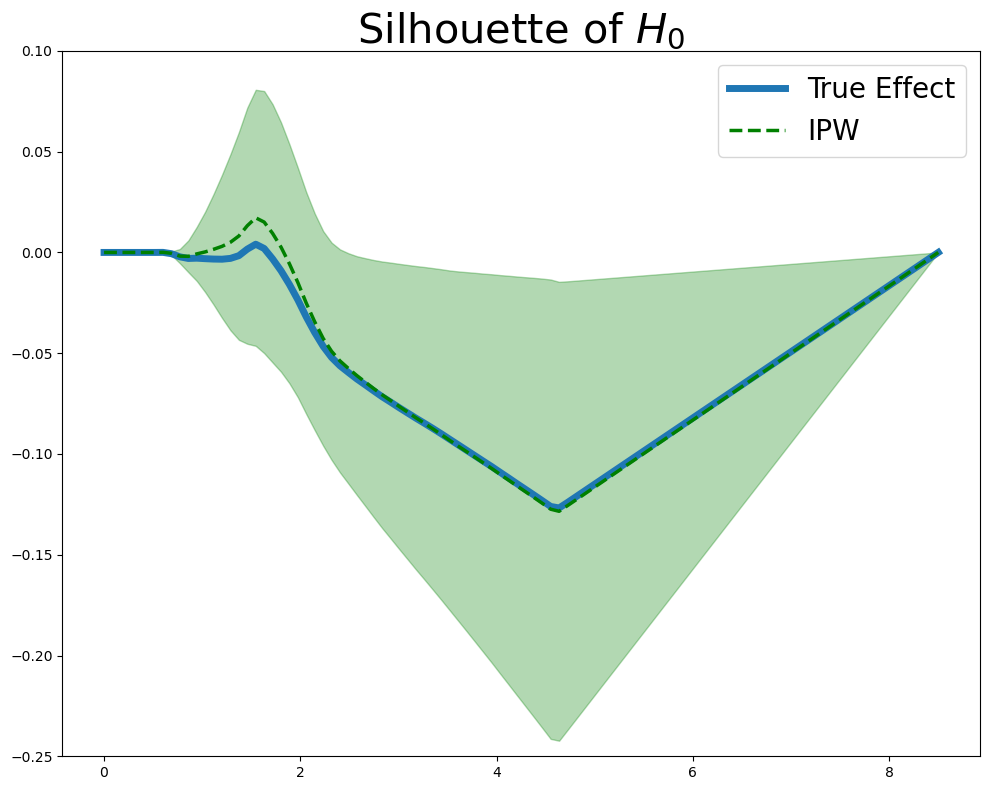}
    \end{minipage}
    \hfill
     \begin{minipage}{0.325\linewidth}
        \includegraphics[width=\linewidth]{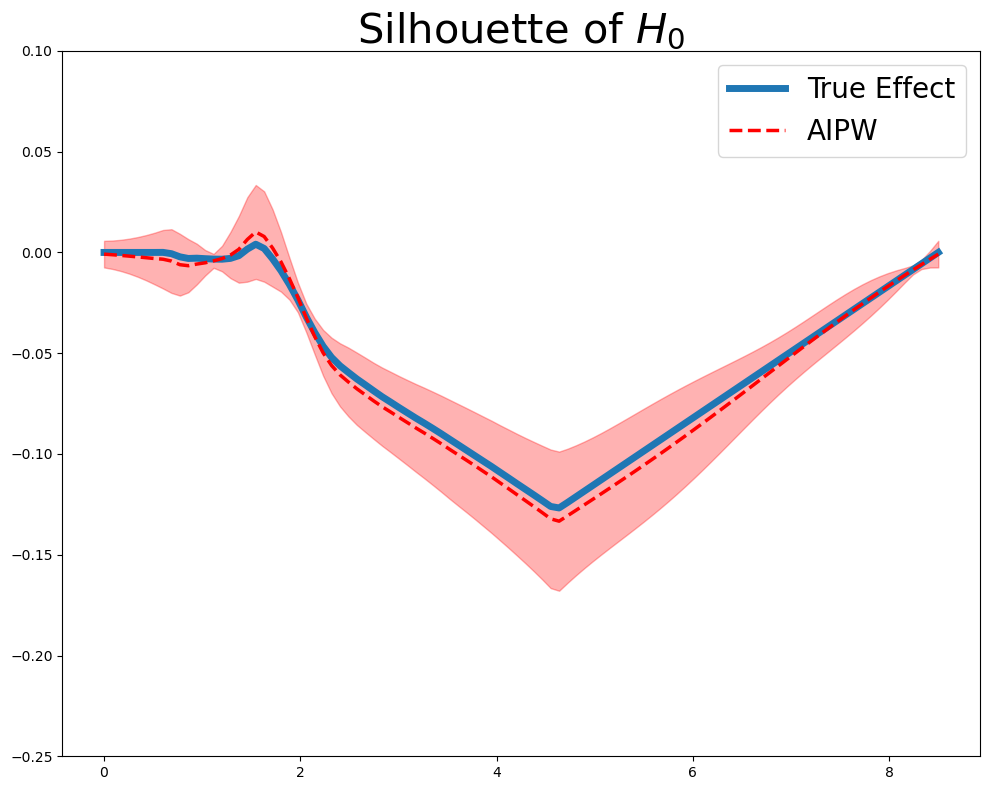}
    \end{minipage}

    \vspace{.2in}
    
    \begin{minipage}{0.325\linewidth}
        \includegraphics[width=\linewidth]{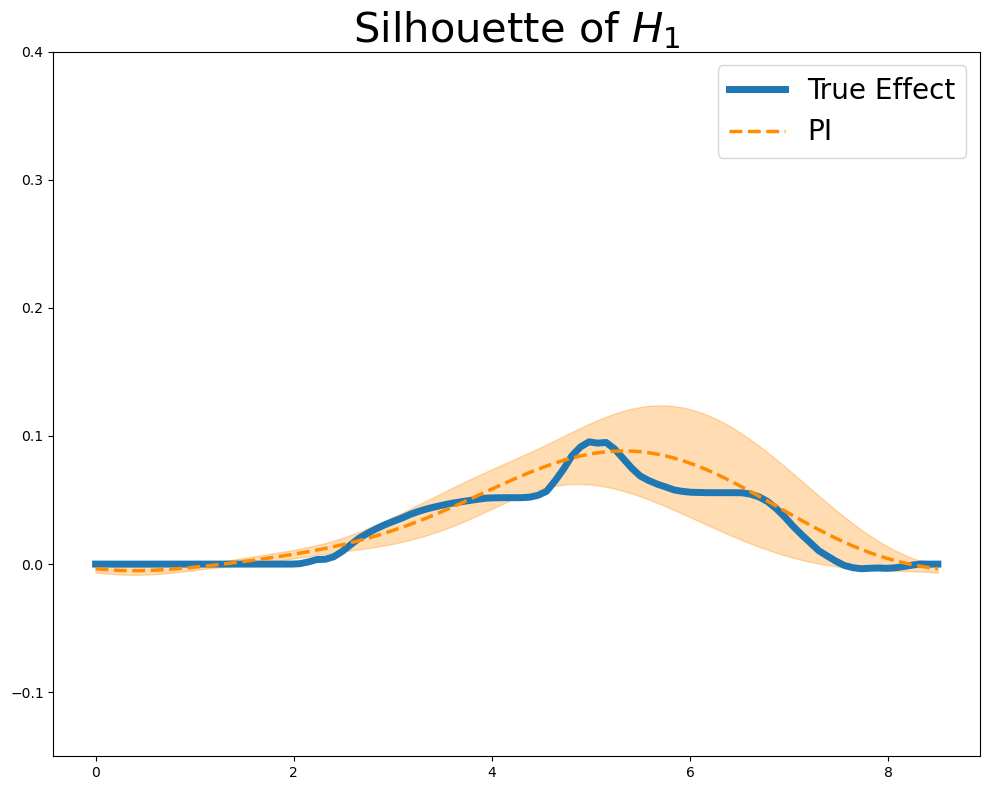}
    \end{minipage}
    \hfill
    \begin{minipage}{0.325\linewidth}
        \includegraphics[width=\linewidth]{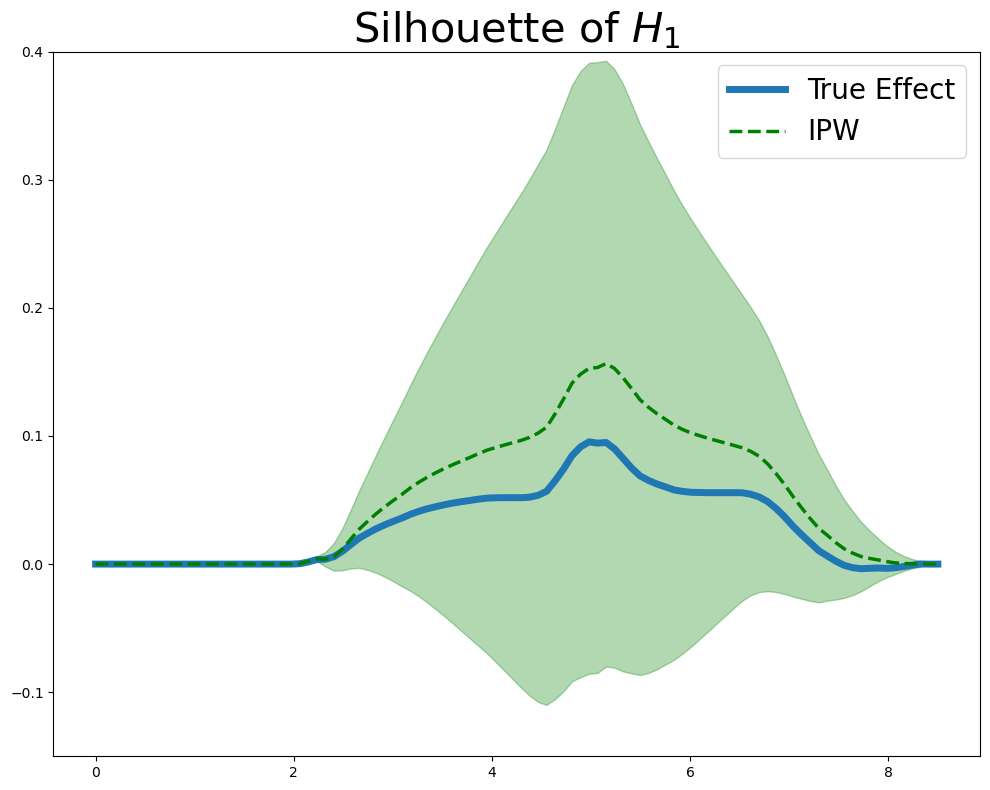}
    \end{minipage}
    \hfill
     \begin{minipage}{0.325\linewidth}
        \includegraphics[width=\linewidth]{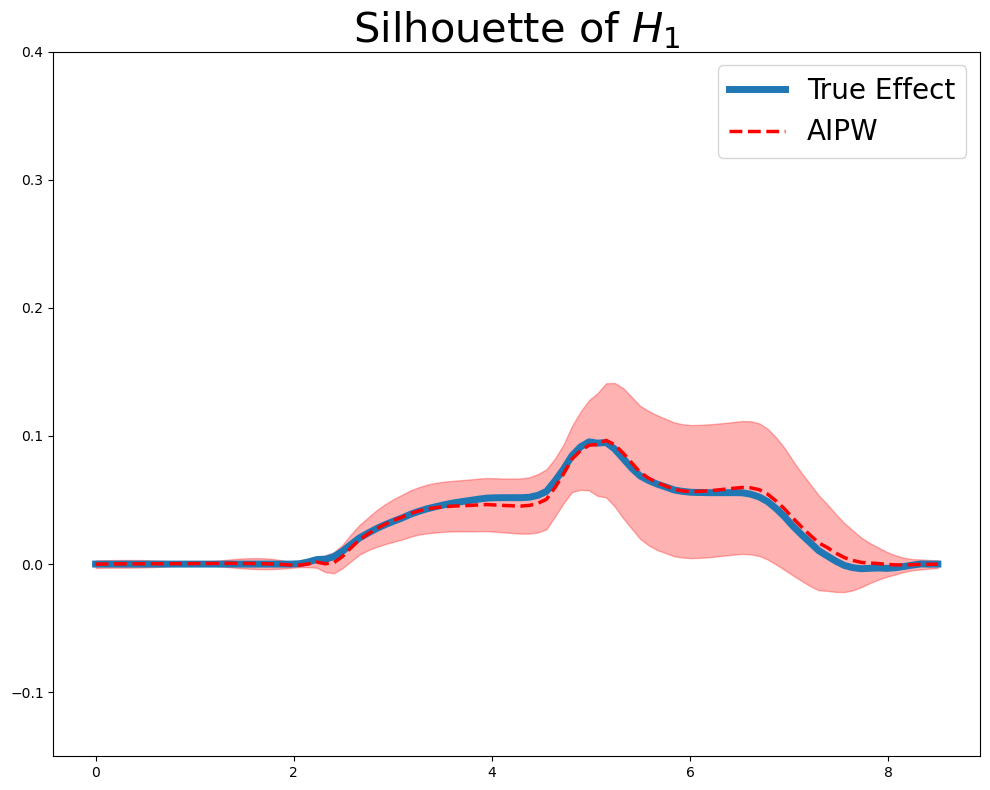}
    \end{minipage}

    
    \caption{Visualization of the point-wise mean (dotted line) and the point-wise 1-standard deviation error bands (shaded area) of 0-dimensional (top) and 1-dimensional (bottom) PI, IPW, and AIPW estimators on the GEOM-Drugs dataset. The true topological causal effect is shown as a blue line.}
    \label{fig:geom_avg}
\end{figure}

\subsection{ORBIT}\label{app:orbit}
The ORBIT dataset \citep{adams2017persistence} is a synthetic point cloud dataset that is generated by simulating different dynamical systems characterized by a parameter $s$. Given a random initial point $(x_0, y_0) \in [0,1]^2$ and $s>0$, we generate point clouds consisting of 1000 points as follows:
\begin{align*}
x_{n+1} &= x_n + s y_n (1 - y_n) \mod 1 \\
y_{n+1} &= y_n + s x_n (1 - x_n) \mod 1,
\end{align*}
In this experiment, we use $s=3.5, 4, 4.1$ to generate 1000 samples for each value of $s$, with the resulting point clouds illustrated in Figure \ref{fig:illust_a}: $s=3.5$ (top right), $s=4$ (bottom), $s=4.1$ (top left). From each triplet of point clouds generated by $s=3.5, 4, 4.1$, we randomly select two point clouds and assign the one corresponding to the higher $s$ value as $Y^1$ with probability $0.7$. This procedure yields pairs of matched potential outcomes $(Y^0, Y^1)$ for all 1000 samples, where the treated potential outcome is intentionally designed to possess more pronounced topological features. Here, we use 3 bases to model the silhouette regression function and 100 trees to estimate the propensity score, and the silhouettes are computed using Alpha filtration (see Appendix~\ref{app:background}) with power weight $r=3$.

\textbf{Results.} Figure \ref{fig:orbit_exp} illustrates the true target silhouettes in homology dimensions 0 and 1, which clearly demonstrate a causal treatment effect on first-order homological features of point clouds. In particular, the positive values of the 1-dimensional silhouette indicate the emergence of holes in the treated point cloud. Our aim is to recover the magnitude of this silhouette function, as larger values correspond to more substantial structural changes. Consistent with previous results, the IPW estimator tends to overestimate the treatment effect, whereas the plug-in estimator underestimates it. The AIPW estimator, by contrast, yields a substantially more accurate estimate of the true silhouette function. Moreover, the near-zero silhouette for 0-dimensional homology features suggests that the data’s 0-dimensional homology, including connected components, remained largely unchanged after treatment. Figure \ref{fig:orbit_avg} illustrates the pointwise mean and the pointwise 1-standard deviation error bands for each of the PI, IPW, and AIPW estimators on the ORBIT dataset, and see Table \ref{tab:performance} for numerical summaries of estimator performance. 

{
\textbf{Hypothesis Testing.} The ORBIT dataset provides an appropriate setting for empirically evaluating the hypothesis test presented in Corollary \ref{cor:test-zero-top-effect}, given that the 0-dimensional silhouette exhibits no discernible topological effect, whereas the 1-dimensional silhouette displays significant topological effect (Figure \ref{fig:orbit_exp}). Accordingly, we conducted the hypothesis test of null $H_0:\; W_1(\mathcal{D}^{1},\mathcal{D}^{0})=0$ to examine whether the test outcomes are consistent with our visual interpretation of Figure \ref{fig:orbit_exp}. We generate 1000 bootstrap samples $\hat\G_n$ and use $\alpha=0.05$. For 0-dimensional homology, the test statistic $T_n = 0.0037$, whereas the $1-\alpha$ quantile of $\| \hat\G_n \|_\infty$ is $0.0277$. Thus, consistent with our expectation, we cannot reject the null hypothesis. For 1-dimensional homology, the test statistic $T_n = 0.4242$, whereas the $1-\alpha$ quantile of $\| \hat\G_n \|_\infty$ is $0.1107$. As expected, the test rejects the null hypothesis, indicating a significant treatment-induced topological shift in the 1-dimensional homology of the data.
}


\begin{figure}
    \centering
    \begin{minipage}{0.45\linewidth}
        \includegraphics[width=\linewidth]{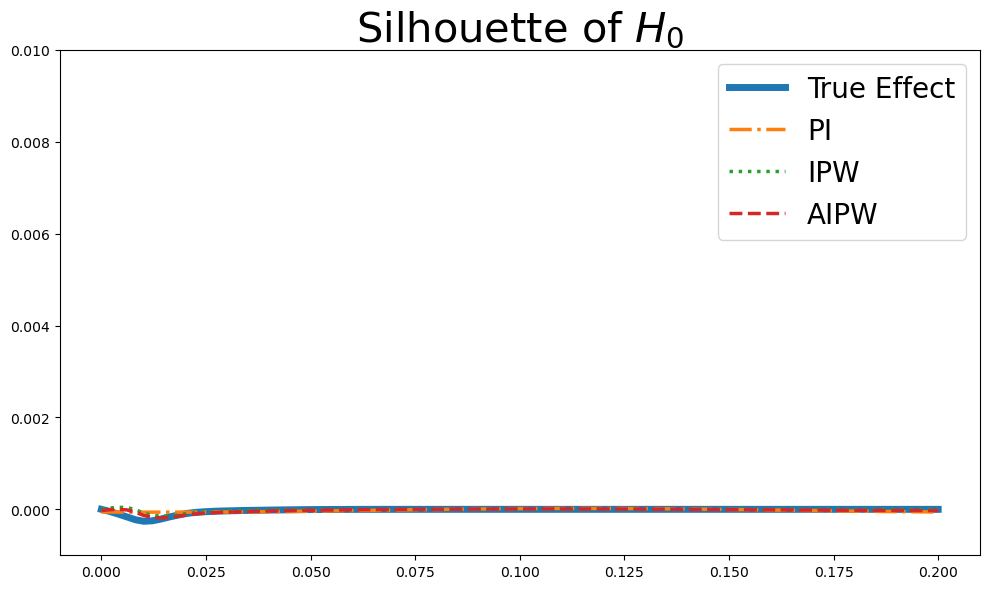}
    \end{minipage}
    \hfill
     \begin{minipage}{0.45\linewidth}
        \includegraphics[width=\linewidth]{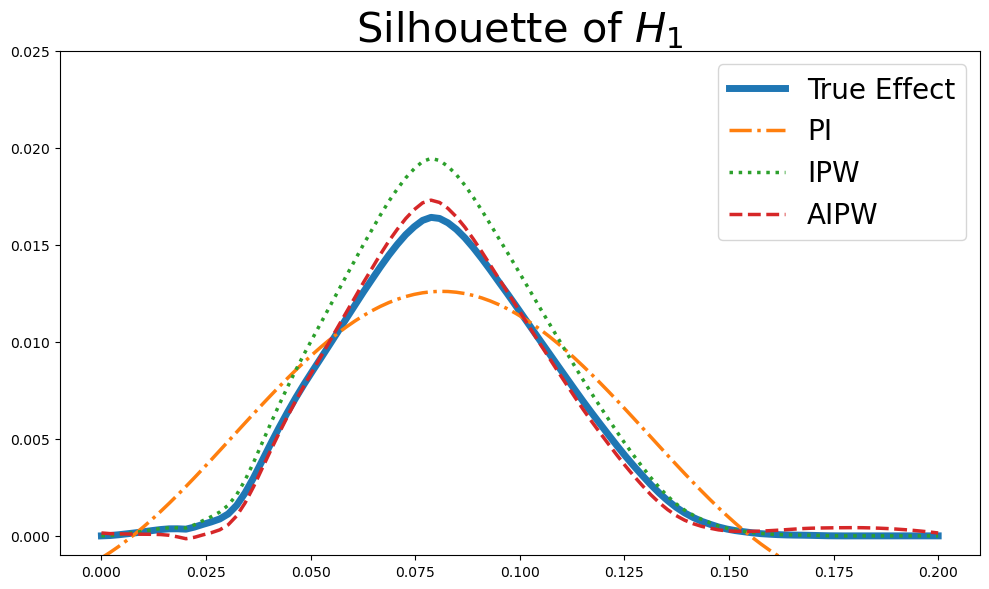}
    \end{minipage}
    \caption{0-dimensional (left) and 1-dimensional (right) true silhouette functions and its PI, IPW, AIPW estimates for the ORBIT dataset.}
    \label{fig:orbit_exp}
\end{figure}

\begin{figure}
    \centering
    \begin{minipage}{0.325\linewidth}
        \includegraphics[width=\linewidth]{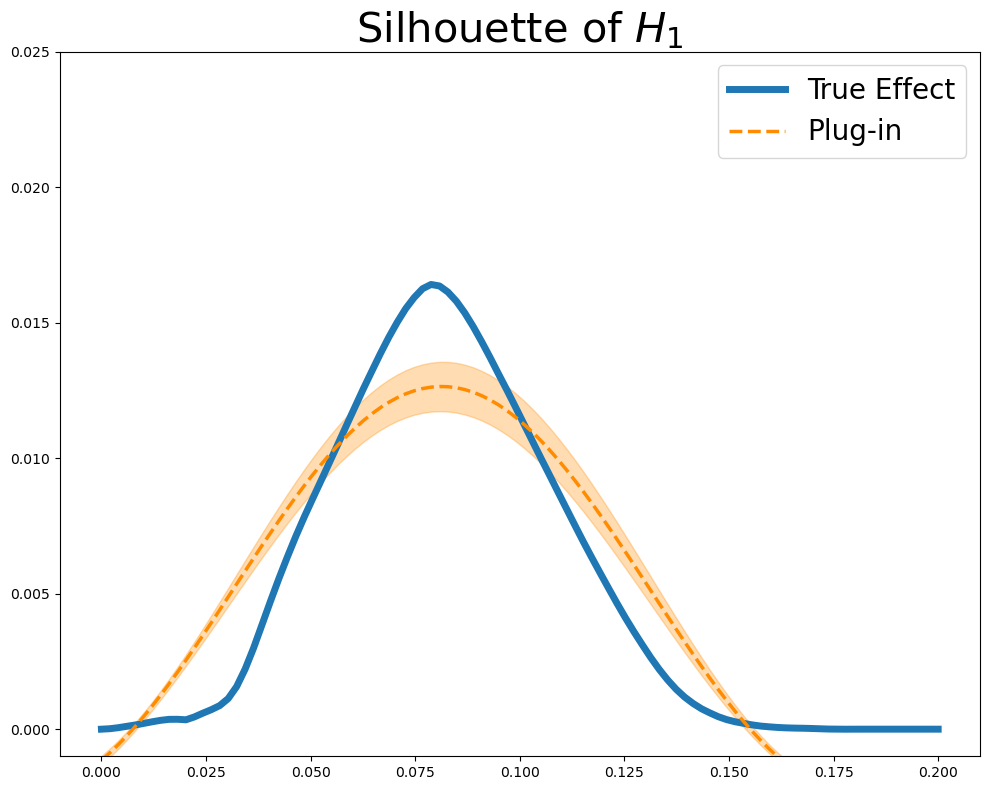}
    \end{minipage}
    \hfill
    \begin{minipage}{0.325\linewidth}
        \includegraphics[width=\linewidth]{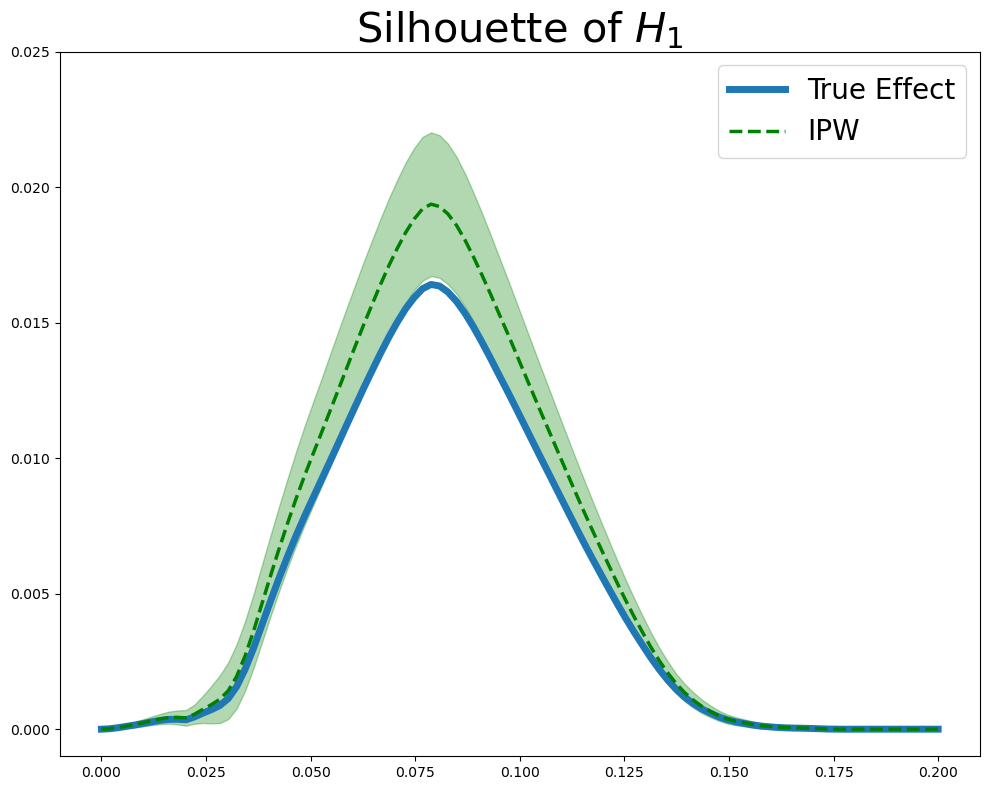}
    \end{minipage}
    \hfill
     \begin{minipage}{0.325\linewidth}
        \includegraphics[width=\linewidth]{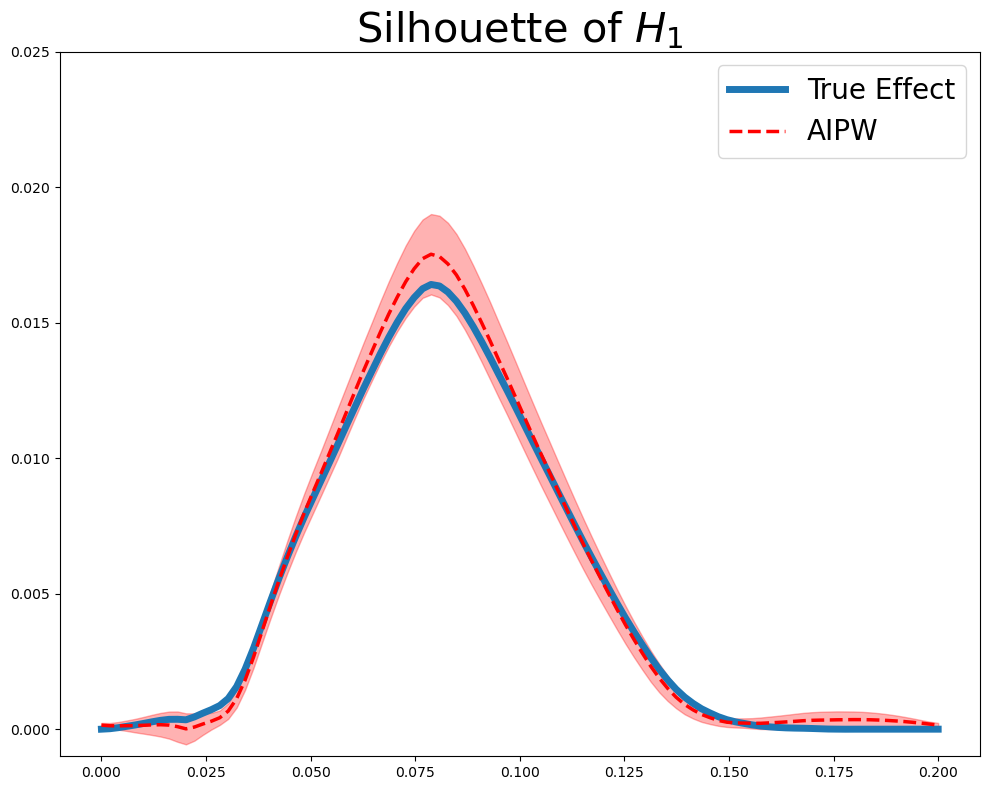}
    \end{minipage}
    \caption{Visualization of the pointwise mean (dotted line) and the pointwise 1-standard deviation error bands (shaded area) of 1-dimensional PI, IPW, and AIPW estimators on the ORBIT dataset. The true topological causal effect is shown as a blue line.}
    \label{fig:orbit_avg}
\end{figure}

\subsection{Estimation under Model Misspecification}
In previous experiments, complex nonparametric models were used to flexibly estimate the nuisance functions. In this section, we investigate how the estimators perform under severe model misspecification of nuisance functions. Specifically, we are interested in observing how the performance of estimators varies across the following scenarios: (i) $\pi$ is severely misspecified and $\mu$ is the correct model, and (ii) $\mu$ is severely misspecified and $\pi$ is the correct model.

Since we have knowledge of the true treatment process, the correct $\pi$ is modeled as $\hat\pi(X) = expit( \beta_1X_1 + \beta_2X_2 + \beta_3X_3 + \beta_4X_4 + \beta_5X_5 + \beta_6X_2X_3 - \beta_7X_1X_3)$, whereas $\hat\pi(X) = expit( \beta_1X_1 + \beta_2X_3)$ is used for the misspecified model. For the outcome regression $\mu$, whose true form is unknown, we retain the function-on-scalar regression using Fourier basis expansion. We choose a sufficiently large number of basis to represent the correctly specified model and a deliberately reduced number of basis to construct the misspecified model. Specifically, for the SARS-CoV-2, GEOM-Drugs, and ORBIT datasets, we respectively use 30, 10, and 5 bases for the correct regression model, and 7, 2, and 2 bases for the misspecified regression model. Figures \ref{fig:misspecify_sars}, \ref{fig:misspecify_geom}, and \ref{fig:misspecify_orbit} illustrate the visualized results, and Table \ref{tab:misspecify} provides quantitative summaries of estimator performance. We can observe that across all datasets and misspecification settings, the AIPW estimator generally has the best performance.

\begin{figure}[t!]
\centering
\begin{subfigure}[t]{0.45\linewidth}
    \centering
    \includegraphics[width=\linewidth]{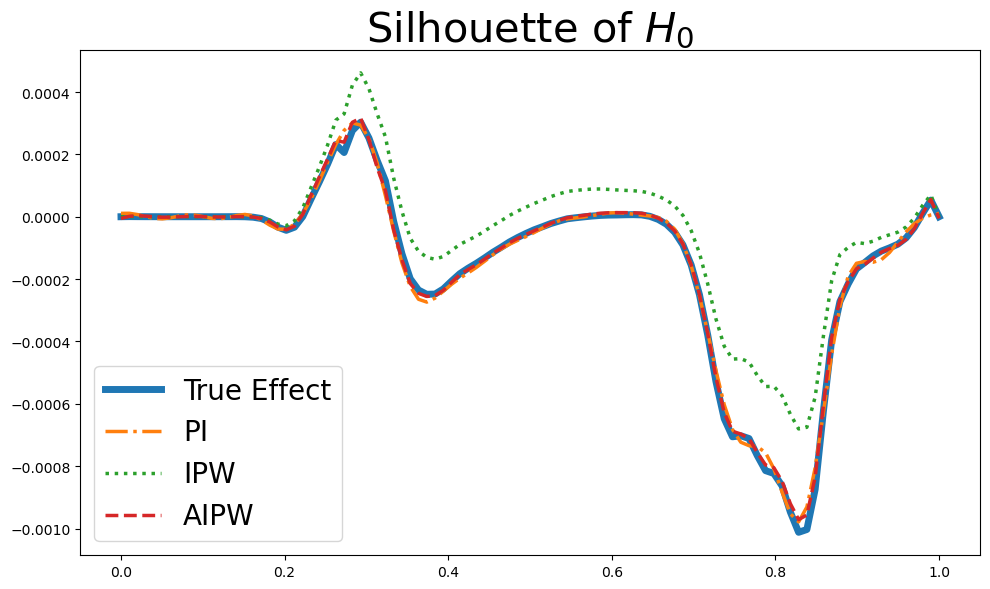}
    \caption{$\pi$ is misspecified, $\mu$ is correct}
\end{subfigure}
\hfill
\begin{subfigure}[t]{0.45\linewidth}
    \centering
    \includegraphics[width=\linewidth]{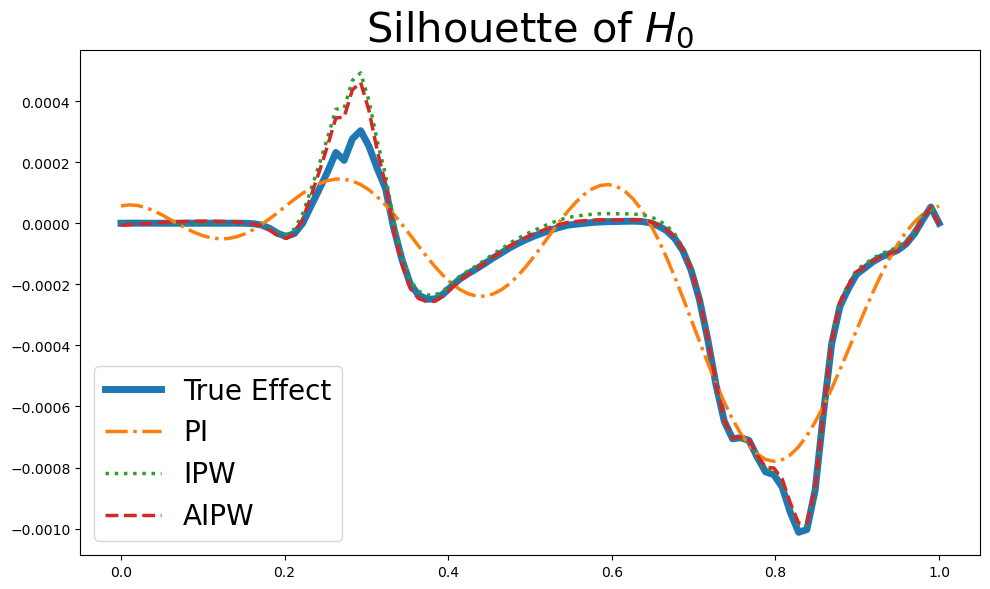}
    \caption{$\mu$ is misspecified, $\pi$ is correct}
\end{subfigure}
\caption{{True silhouette and its PI, IPW, AIPW estimates for 0-dimensional homological features on the SARS-CoV-2 dataset when one of the nuisance functions is misspecified.}} \label{fig:misspecify_sars}
\end{figure}

\begin{figure}[t!]
\centering
\begin{subfigure}[t]{0.495\linewidth}
    \centering
    \begin{minipage}{0.49\linewidth}
        \centering
        \includegraphics[width=\linewidth]{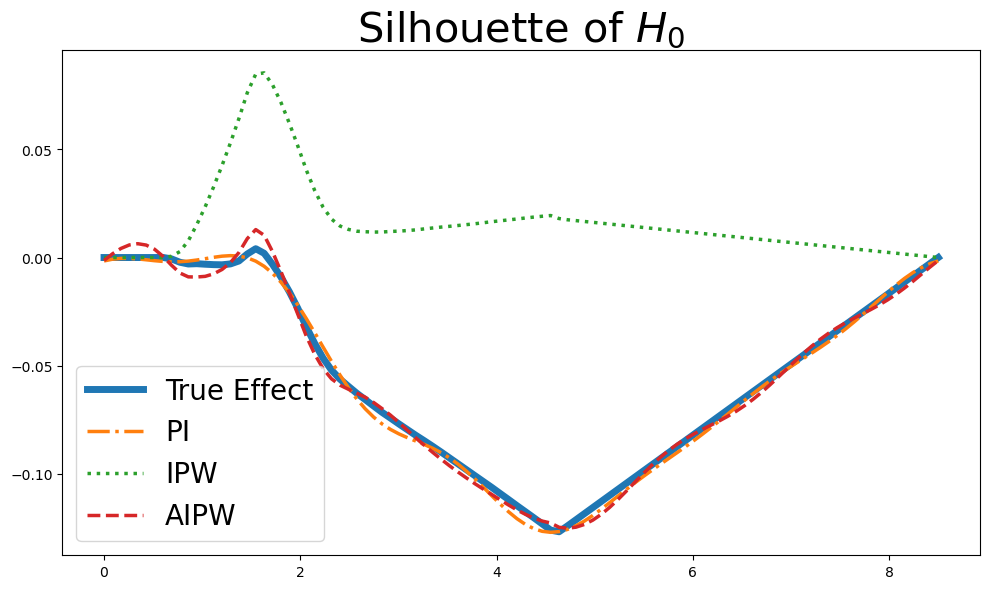}
    \end{minipage}
    \begin{minipage}{0.49\linewidth}
        \centering
        \includegraphics[width=\linewidth]{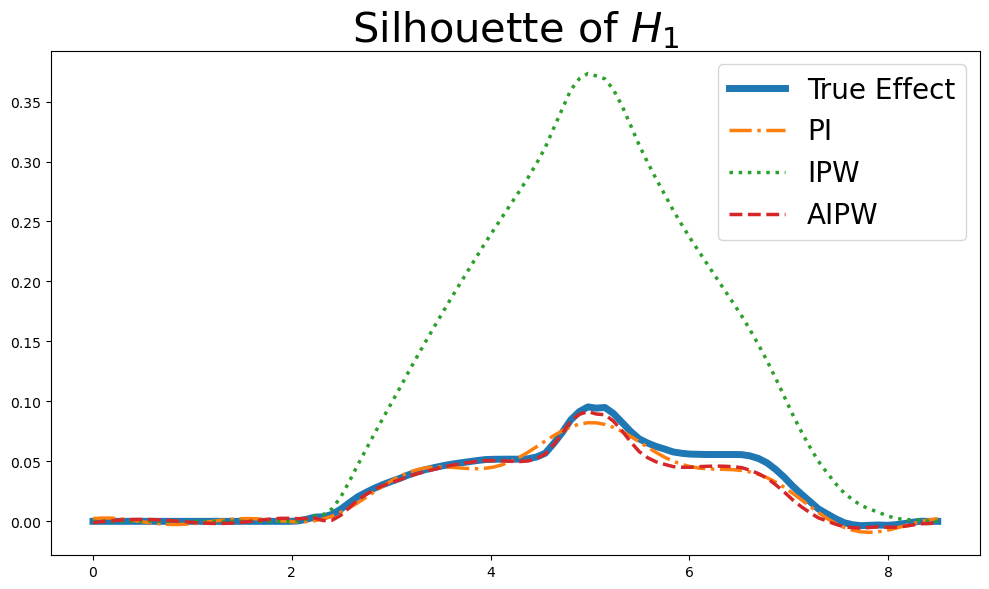}
    \end{minipage}
    \caption{$\pi$ is misspecified, $\mu$ is correct}
\end{subfigure}
\begin{subfigure}[t]{0.495\linewidth}
    \centering
    \begin{minipage}{0.49\linewidth}
        \centering
        \includegraphics[width=\linewidth]{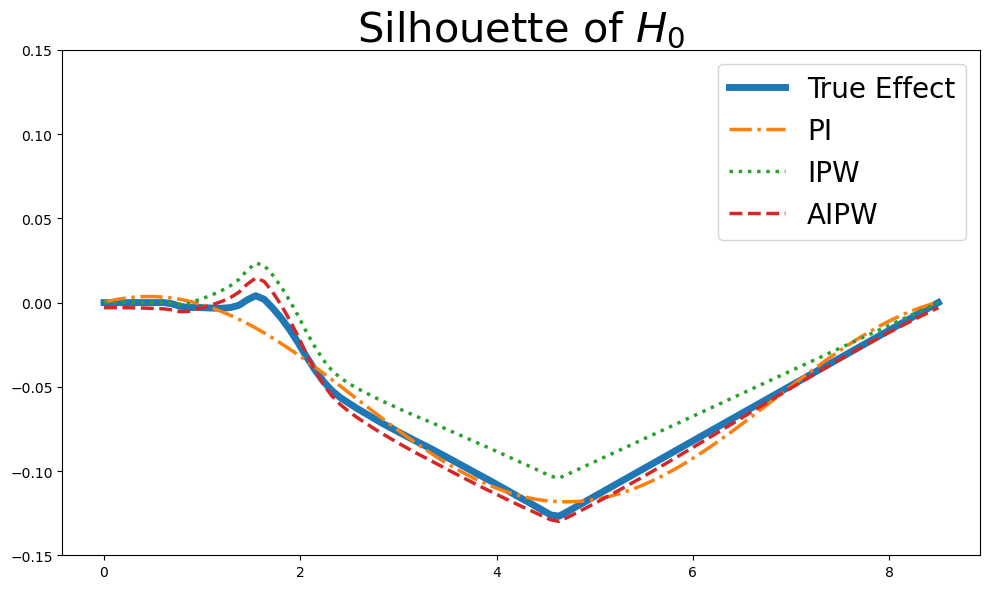}
    \end{minipage}
    \begin{minipage}{0.49\linewidth}
        \centering
        \includegraphics[width=\linewidth]{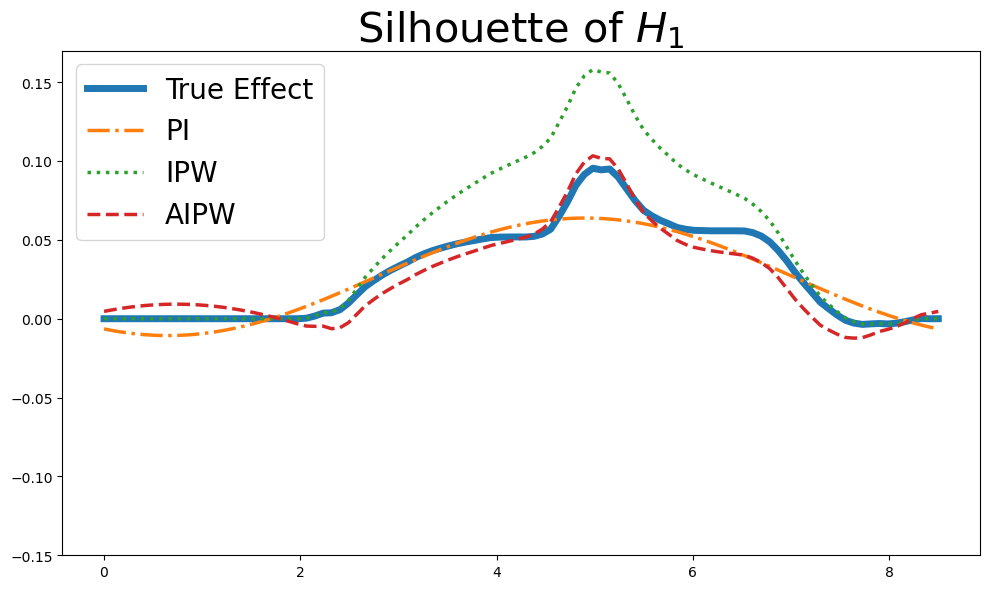}
    \end{minipage}
    \caption{$\mu$ is misspecified, $\pi$ is correct}
\end{subfigure}
\caption{{True silhouette and its PI, IPW, AIPW estimates for 0- and 1-dimensional homolgical features on the GEOM-Drugs dataset when one of the nuisance functions is misspecified.}} \label{fig:misspecify_geom}
\end{figure}

\begin{figure}[t!]
\centering
\begin{subfigure}[t]{0.495\linewidth}
    \centering
    \begin{minipage}{0.49\linewidth}
        \centering
        \includegraphics[width=\linewidth]{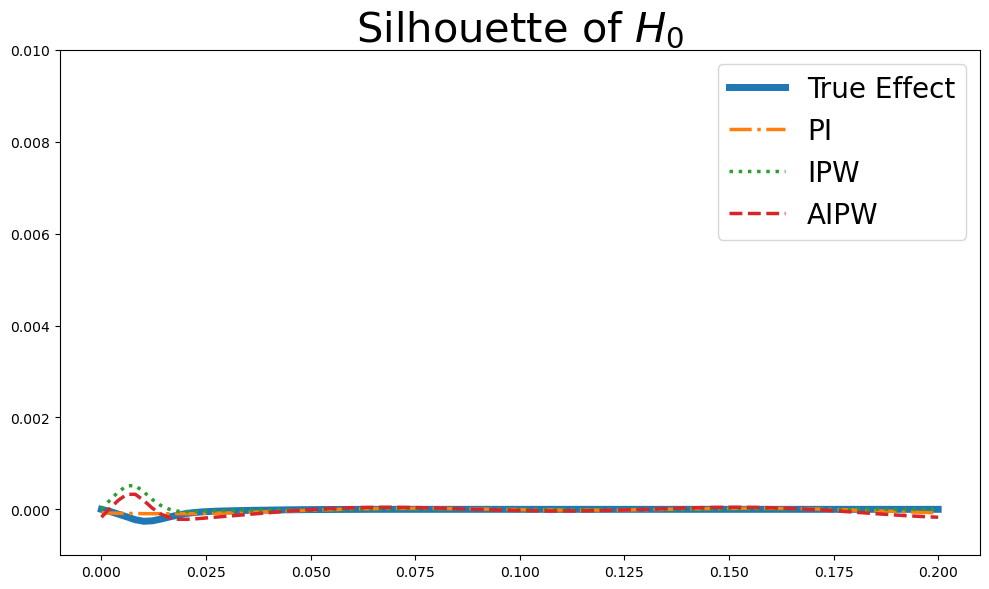}
    \end{minipage}
    \begin{minipage}{0.49\linewidth}
        \centering
        \includegraphics[width=\linewidth]{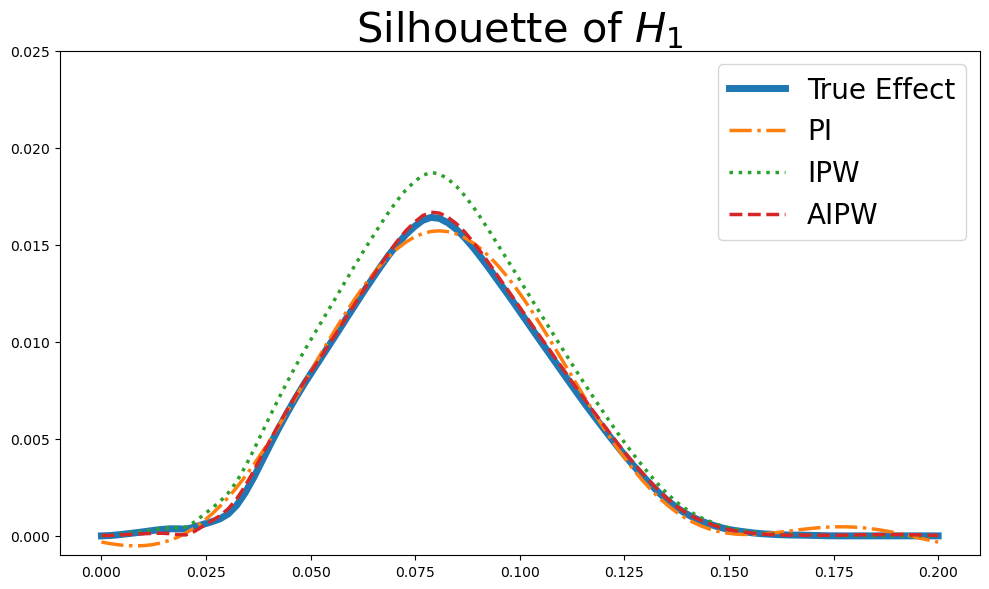}
    \end{minipage}
    \caption{$\pi$ is misspecified, $\mu$ is correct}
\end{subfigure}
\begin{subfigure}[t]{0.495\linewidth}
    \centering
    \begin{minipage}{0.49\linewidth}
        \centering
        \includegraphics[width=\linewidth]{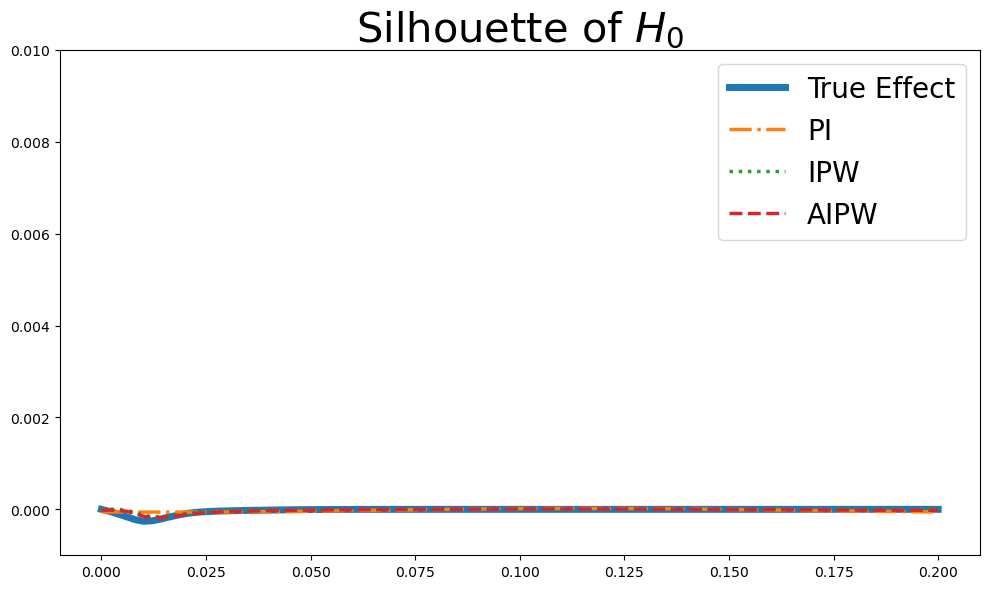}
    \end{minipage}
    \begin{minipage}{0.49\linewidth}
        \centering
        \includegraphics[width=\linewidth]{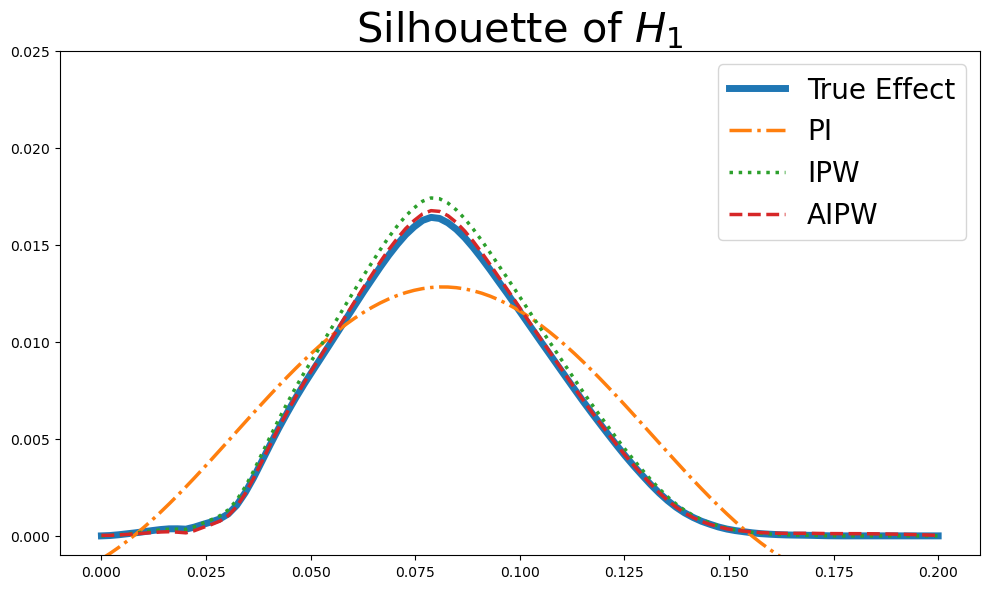}
    \end{minipage}
    \caption{$\mu$ is misspecified, $\pi$ is correct}
\end{subfigure}
\caption{{True silhouette and its PI, IPW, AIPW estimates for 0- and 1-dimensional homolgical features on the ORBIT dataset when one of the nuisance functions is misspecified.}} \label{fig:misspecify_orbit}
\end{figure}

\begin{table}[t!]
    \caption{{$L_1$ distance between the average estimate and the true effect, together with the standard deviation of each estimator over 20 repetitions, when either of the nuisance functions is misspecified. $\mathcal H_d$ denotes homology dimension $d$, and the scale is in 1e-05 for SARS-CoV-2 and ORBIT and 1e-02 for GEOM-Drugs.}}
    \label{tab:misspecify}
    \renewcommand{\arraystretch}{1.2}
    \centering
    \begin{tabular}{lcccc}
        \toprule
        \multirow{2}{*}{\makecell{\textbf{SARS-CoV-2} \\ ($\mathcal H_0$)}} & \multicolumn{2}{c}{\makecell{\textbf{$\pi$ is misspecified,} \\ \textbf{$\mu$ is correct}}} & \multicolumn{2}{c}{\makecell{\textbf{$\mu$ is misspecified,} \\ \textbf{$\pi$ is correct}}} \\
        \cline{2-5}
                & $L_1$ dist. & Std. & $L_1$ dist. & Std. \\
        \midrule
        \midrule
        PI      & 1.390 & 3.132 & 7.985 & 3.132 \\
        \midrule
        IPW     & 9.674 & 4.421 & 2.243 & 4.328 \\
        \midrule
        AIPW    & \textbf{0.786} & 2.761 & \textbf{1.529} & 3.395 \\
        \bottomrule
    \end{tabular}
    
    \vspace{.5cm}

    \begin{tabular}{lcccccccc}
        \toprule
        \multirow{3}{*}{\makecell{\textbf{GEOM-Drugs}}} & \multicolumn{4}{c}{\makecell{\textbf{$\pi$ is misspecified,} \\ \textbf{$\mu$ is correct}}} & \multicolumn{4}{c}{\makecell{\textbf{$\mu$ is misspecified,} \\ \textbf{$\pi$ is correct}}} \\
        \cline{2-9}
                & \multicolumn{2}{c}{$\mathcal H_0$} & \multicolumn{2}{c}{$\mathcal H_1$} & \multicolumn{2}{c}{$\mathcal H_0$} & \multicolumn{2}{c}{$\mathcal H_1$} \\
        \cline{2-9}
                & $L_1$ dist. & Std. & $L_1$ dist. & Std. & $L_1$ dist. & Std. & $L_1$ dist. & Std.\\
        \midrule
        \midrule
        PI      & \textbf{2.006} & 1.133 & 4.022 & 0.865 & 4.376 & 1.133 & 6.879 & 0.865 \\
        \midrule
        IPW     & 61.796 & 3.602 & 77.963 & 4.426 & 10.299 & 2.831 & 16.245 & 3.564 \\
        \midrule
        AIPW    & 2.343 & 1.168 & \textbf{3.432} & 0.897 & \textbf{3.380} & 1.390 & \textbf{6.842} & 1.112 \\
        \bottomrule
    \end{tabular}

    \vspace{.5cm}

    \begin{tabular}{lcccccccc}
        \toprule
        \multirow{3}{*}{\textbf{ORBIT}} & \multicolumn{4}{c}{\makecell{\textbf{$\pi$ is misspecified,} \\ \textbf{$\mu$ is correct}}} & \multicolumn{4}{c}{\makecell{\textbf{$\mu$ is misspecified,} \\ \textbf{$\pi$ is correct}}} \\
        \cline{2-9}
                &\multicolumn{2}{c}{$\mathcal H_0$} & \multicolumn{2}{c}{$\mathcal H_1$} & \multicolumn{2}{c}{$\mathcal H_0$} & \multicolumn{2}{c}{$\mathcal H_1$} \\
        \cline{2-9}
                & $L_1$ dist. & Std. & $L_1$ dist. & Std. & $L_1$ dist. & Std. & $L_1$ dist. & Std.\\
        \midrule
        \midrule
        PI      & \textbf{0.514} & 0.688 & 7.298 & 37.179 & 0.600 & 0.688 & 36.323 & 37.179 \\
        \midrule
        IPW     & 0.766 & 2.300 & 17.576 & 45.287 & \textbf{0.156} & 1.482 & 7.092 & 55.763 \\
        \midrule
        AIPW    & 1.347 & 0.681 & \textbf{2.605} & 39.582 & 0.284 & 0.748 & \textbf{2.354} & 40.813 \\
        \bottomrule
    \end{tabular}
\end{table}

\end{document}